\pdfoutput=1
\documentclass[11pt]{article}

\usepackage[letterpaper,margin=1in]{geometry}
\usepackage[T1]{fontenc}
\usepackage[utf8]{inputenc}
\usepackage{lmodern}
\usepackage{amsmath,amssymb,amsthm,mathtools}
\usepackage{bm}
\usepackage{microtype}
\usepackage{booktabs}
\usepackage{graphicx}
\usepackage{placeins}
\usepackage{tikz}
\usetikzlibrary{arrows.meta}
\usepackage{caption}
\usepackage{enumitem}
\usepackage[round,authoryear]{natbib}
\usepackage{xcolor}
\usepackage{hyperref}
\usepackage[nameinlink,noabbrev]{cleveref}

\hypersetup{
  colorlinks=true,
  linkcolor=blue!45!black,
  citecolor=blue!45!black,
  urlcolor=black,
  pdftitle={Polylogarithmic Sparsity of Randomly Reweighted NPMLEs for
    Gaussian Mixtures},
  pdfauthor={Hansheng Jiang}
}
\newcommand{\leanrepo}{\url{https://github.com/hanshengjiang/reweighted-npmle-lean}}

\setlist[itemize]{leftmargin=1.4em,itemsep=0.15em,topsep=0.3em}
\setlist[enumerate]{leftmargin=1.7em,itemsep=0.15em,topsep=0.3em}
\allowdisplaybreaks[2]
\newtheorem{theorem}{Theorem}
\newtheorem{proposition}{Proposition}
\newtheorem{lemma}{Lemma}
\newtheorem{corollary}{Corollary}
\theoremstyle{definition}
\newtheorem{definition}{Definition}
\newtheorem{assumption}{Assumption}
\theoremstyle{remark}

\DeclareMathOperator*{\argmax}{arg\,max}

\DeclareMathOperator{\conv}{conv}
\DeclareMathOperator{\supp}{supp}
\DeclareMathOperator{\rank}{rank}
\DeclareMathOperator{\diag}{diag}
\DeclareMathOperator{\KL}{KL}
\DeclareMathOperator{\TV}{TV}
\newcommand{\Pcal}{\mathcal P}
\newcommand{\Mcal}{\mathcal M}
\newcommand{\Ccal}{\mathcal C}
\newcommand{\Acal}{\mathcal A}
\newcommand{\Ucal}{\mathcal U}
\newcommand{\Ncal}{\mathcal N}
\newcommand{\RR}{\mathbb R}
\newcommand{\EE}{\mathbb E}
\newcommand{\PP}{\mathbb P}
\newcommand{\one}{\bm 1}
\newcommand{\norm}[1]{\lVert #1\rVert}
\newcommand{\abs}[1]{\lvert #1\rvert}

\newcommand{\vhat}{\widehat v}
\newcommand{\Ghat}{\widehat G}
\newcommand{\Gtil}{\widetilde G}
\newcommand{\vtil}{\widetilde v}
\newcommand{\eps}{\varepsilon}
\newcommand{\sext}{s_{\mathrm{ext}}}

\title{\bfseries Polylogarithmic Sparsity of Randomly Reweighted\\
NPMLEs for Gaussian Mixtures\thanks{The proofs were developed with the assistance of GPT-6 Pro.  The author formulated the problem, directed the approach, checked and rewrote the AI-generated arguments, and takes full responsibility for the correctness and presentation of the results.  The numerical study and the Lean 4 formalization serve as independent checks on the results.  The formalization covers the full proof of the main theorem and is available at \leanrepo.}}
\author{\normalsize Hansheng Jiang\\
\small University of Toronto\\
\small \href{mailto:hansheng.jiang@utoronto.ca}{\texttt{hansheng.jiang@utoronto.ca}}}
\date{}

\newcommand{\numReportedFits}{4160}
\newcommand{\numPairedDatasets}{920}

\newcommand{\numMaxResidual}{8.03\times10^{-7}}
\newcommand{\numMaxResidualOneD}{4.36\times10^{-7}}
\newcommand{\numMultistartDistanceOneD}{3.9\times10^{-16}}
\newcommand{\numMultistartDistanceTwoD}{1.0\times10^{-14}}
\newcommand{\numBalancedRiskMin}{1.001}
\newcommand{\numBalancedRiskMax}{1.020}
\newcommand{\numBalancedRiskAverage}{1.013}
\newcommand{\numBalancedDistanceMin}{8.31\times10^{-6}}
\newcommand{\numBalancedDistanceMax}{1.76\times10^{-4}}
\newcommand{\numBalancedSupportAgreement}{90.3}
\newcommand{\numBalancedSupportDifferenceMin}{-0.125}
\newcommand{\numBalancedSupportDifferenceMax}{0.025}
\newcommand{\numBalancedLossMin}{0.0345}
\newcommand{\numBalancedLossMax}{0.0964}
\newcommand{\numBalancedScaledLossMin}{0.253}
\newcommand{\numBalancedScaledLossMax}{0.532}
\newcommand{\numBalancedScaledLogfitMin}{0.479}
\newcommand{\numBalancedScaledLogfitMax}{0.928}
\newcommand{\numTinyRiskMin}{0.999985}
\newcommand{\numTinyRiskMax}{1.000017}
\newcommand{\numTinyDistanceMin}{8.03\times10^{-14}}
\newcommand{\numTinyDistanceMax}{1.32\times10^{-9}}
\newcommand{\numTinySupportAgreement}{100.0}
\newcommand{\numBBRiskMin}{1.60}
\newcommand{\numBBRiskMax}{2.19}
\newcommand{\numBBLossMin}{2.14}
\newcommand{\numBBLossMax}{7.86}
\newcommand{\numPathBalancedAlpha}{59.76}
\newcommand{\numSlopeWeight}{-0.503}
\newcommand{\numSlopeDistance}{-1.016}
\newcommand{\numSlopeLoss}{-1.016}
\newcommand{\numSlopeLogfit}{-1.012}
\newcommand{\numPathOrdinaryRisk}{0.002662}
\newcommand{\numPathBalancedDistance}{4.40\times10^{-5}}
\newcommand{\numPathBalancedLoss}{0.0453}
\newcommand{\numPathBalancedRisk}{0.002664}
\newcommand{\numPathBalancedSupportSummary}{28 draws have 3 atoms and 2 draws have 4 atoms}
\newcommand{\numPathBalancedMedianSupport}{3}

\newcommand{\numPathTinyDistance}{1.52\times10^{-11}}
\newcommand{\numPathTinyLoss}{1.65\times10^{-8}}

\newcommand{\numPathTinySupportSummary}{30 draws have 3 atoms}
\newcommand{\numPathTinyMedianSupport}{3}

\newcommand{\numPathBBDistance}{1.95\times10^{-3}}
\newcommand{\numPathBBLoss}{2.03}

\newcommand{\numPathBBUpperSupport}{4}
\newcommand{\numPathBalancedWasserstein}{0.0308}
\newcommand{\numNegativeRawLossCount}{6}
\newcommand{\numMinRawLoss}{-8.87\times10^{-6}}

\begin{document}
\maketitle

\begin{abstract}
The nonparametric maximum likelihood estimator (NPMLE) of a Gaussian location mixture maximizes the likelihood over the infinite-dimensional space of mixing distributions.  The maximizing mixing distribution can be nonunique, and the classical bound on its number of atoms grows linearly with the sample size $n$.  We show that a vanishingly small random perturbation of the likelihood yields exact polylogarithmic sparsity.  The resulting randomly reweighted NPMLE maximizes a weighted likelihood whose independent weights, taken to be Gamma in our analysis, concentrate around one as $n$ grows.  With high probability, it is unique, has $O\{(\log n/\log\log n)^d+\log n\}$ atoms in dimension $d$, nearly maximizes the ordinary likelihood, and estimates the mixture density at a Hellinger rate that is parametric up to logarithmic factors.  This sparsity holds for the estimator itself, not for an approximation of it, and requires no support penalty.  The proof rests on an effective-dimension principle for positive kernel mixtures: low-dimensional variation of the fitted values controls the support of every extreme point of the set of maximizers.  Numerical illustrations verify that the reweighted NPMLE has Hellinger risk and support size comparable to those of the ordinary NPMLE.
\end{abstract}

\section{Introduction}\label{sec:introduction}

Let $X_1,\ldots,X_n\in\RR^d$ be independent observations from a Gaussian location mixture.  For a mixing distribution $G$ and the standard Gaussian density $\phi_d$ on $\RR^d$, write
\[
  f_G(x)=\int \phi_d(x-\theta)\,dG(\theta).
\]
Given a compact parameter set $K\subset\RR^d$, let $\Pcal(K)$ be the set of all probability measures on $K$.  The nonparametric maximum likelihood estimator (NPMLE) of \citet{kieferwolfowitz1956} is any maximizer of the log-likelihood,
\[
  \Ghat_0\in\argmax_{G\in\Pcal(K)}\sum_{i=1}^n\log f_G(X_i).
\]
Several features make the NPMLE attractive: it requires neither a prespecified number of mixture components nor a tuning parameter, and, because the log-likelihood is concave in $G$ and $\Pcal(K)$ is convex, it is defined by a convex optimization problem \citep{lindsay1983geometry,koenkermizera2014}.  These features have made the NPMLE a standard tool for empirical Bayes inference, with applications ranging from count, survival and longitudinal data \citep{koenkergu2017} to ranking, selection and a broad range of problems in economics \citep{gukoenker2023,walters2024,chen2026}.  The problem is posed over an infinite-dimensional set, however, and its definition says little about the structure of a maximizing mixing distribution.

Classical convex geometry guarantees the existence of a maximizer with at most $n$ atoms, and all maximizers have the same fitted vector $(f_{\Ghat_0}(X_1),\ldots,f_{\Ghat_0}(X_n))$ \citep{lindsay1983geometry}.  This deterministic $n$-point bound does not explain the much smaller models often returned in practice.  For the unrestricted problem in one dimension, the maximizing mixing distribution is unique \citep{lindsay1983partii}, and \citet{polyanskiywu2020} proved that it has $O(\log n)$ atoms with high probability when the true mixing distribution is sub-Gaussian, a phenomenon they called self-regularization.  In several dimensions, \citet{wang2026} shows that every unrestricted Gaussian NPMLE has finite support, and \citet{wang2026uniqueness} establishes uniqueness for Lebesgue-almost every dataset.  These results do not provide a polylogarithmic bound on the number of atoms.  Moreover, whereas in one dimension the number of atoms is bounded by a function of the range of the data \citep{polyanskiywu2020}, \citet[Proposition~4.2]{polyanskiysellke2025} construct uniformly bounded datasets in every dimension $d\ge2$ for which the number of atoms of every NPMLE diverges with $n$.

Beyond these structural properties, Gaussian-mixture NPMLEs have been studied extensively as estimators of mixing distributions, marginal densities, and empirical Bayes rules; see, among others, \citet{laird1978}, \citet{koenkermizera2014}, \citet{zhang2009}, \citet{sahaguntuboyina2020}, \citet{soloffetal2025}, and \citet{chendebignatiadis2026}.  The resulting density-estimation rates can be parametric up to logarithmic factors, and recent entropy characterizations clarify the minimax benchmark \citep{jiaetal2023}.  These results establish the statistical value of the NPMLE, but neither they nor the structural results above provide a polylogarithmic support bound for an exact ordinary NPMLE in several dimensions.

We establish such a bound for an asymptotically negligible random reweighting of the likelihood, rather than for the ordinary NPMLE $\Ghat_0$ itself.  Given a concentration parameter $\alpha_n>0$, draw weights
\[
 W_1,\ldots,W_n\stackrel{\mathrm{iid}}\sim\mathrm{Gamma}(\alpha_n,\alpha_n),
\]
independently of the data, where shape and rate are both $\alpha_n$, and define the randomly reweighted NPMLE
\begin{equation}\label{eq:rrnpmle-intro}
 \Ghat_W\in\argmax_{G\in\Pcal(K)}
       \sum_{i=1}^n W_i\log f_G(X_i).
\end{equation}
The weights have mean one and variance $1/\alpha_n$, and we let $\alpha_n\to\infty$, so that the weights converge to one and the reweighted log-likelihood approaches the ordinary one.  Computing $\Ghat_W$ is no harder than computing $\Ghat_0$, because the weights enter the same convex program only through the coefficients of the log-likelihood.  Since multiplying all weights by a common constant does not change the maximizers, $\Ghat_W$ depends on the weights only through $(W_i/\sum_jW_j)_{i\le n}$, which has the $\mathrm{Dirichlet}(\alpha_n,\ldots,\alpha_n)$ distribution.  Dirichlet-weighted likelihoods are classical in the Bayesian bootstrap of \citet{rubin1981} and the weighted likelihood bootstrap of \citet{newtonraftery1994}, where repeated weight draws produce a bootstrap or approximate posterior distribution.  Here, instead, a single, increasingly concentrated draw of the weights defines a point estimator.

Our main result, \Cref{thm:main}, shows that this random reweighting selects a unique maximizer that is exactly sparse: the estimator itself, and not merely an approximation of it, has polylogarithmically many atoms.  The distinction matters because mixing distributions with polylogarithmically many atoms whose log-likelihood is within a polynomially small amount of the maximum are easy to construct by moment matching, as in the proof of \Cref{lem:finite-net}; the difficulty lies in controlling an exact maximizer.  For a fixed compact parameter set and fixed dimension, set
\[
 r_n=\left(\frac{\log n}{\log\log n}\right)^d,
 \qquad \bar r_n=r_n+\log n,
 \qquad \alpha_n=\bar r_n\log n.
\]
Then, for every fixed $b>0$, with probability at least $1-C_bn^{-b}$, the reweighted NPMLE is unique and has at most $C_b\bar r_n$ atoms.  It is also close to the ordinary NPMLE: the weight perturbation satisfies $\max_i\abs{W_i-1}\le C_b\bar r_n^{-1/2}$, and both the ordinary log-likelihood gap and the log-fit discrepancy, the squared Euclidean distance between the logarithms of its fitted vector and of the ordinary fitted vector, are at most $C_b/\log n$.  The balanced concentration is convenient rather than necessary.  For any $L>0$, the choice $\alpha_n=n^{2L+2}$ makes the weight perturbation at most $n^{-L}$, yet the reweighted NPMLE remains unique with at most $C_{b,L}\bar r_n$ atoms, again with probability at least $1-C_bn^{-b}$ (\Cref{cor:tiny}).

Compared with the classical bound of $n$ atoms \citep{lindsay1983geometry}, the bound $C_b\bar r_n$ is polylogarithmic in every fixed dimension.  In one dimension, where the unrestricted ordinary NPMLE is already unique \citep{lindsay1983partii} and has $O(\log n)$ atoms \citep{polyanskiywu2020}, it is also $O(\log n)$.  In several dimensions, our bound is $O\{(\log n/\log\log n)^d\}$, and, with high probability, the reweighted NPMLE over the fixed compact set $K$ is unique.  The sparsity mechanism is not specific to Gaussian kernels.  For any positive continuous kernel, \Cref{thm:effective} shows that if all fitted vectors, measured relative to the ordinary fitted vector, lie close to an $r$-dimensional subspace, then with high probability every extreme point of the set of maximizers of the weighted likelihood, which we call an extreme optimizer, has $O(r+\log n)$ atoms; the argument uses the randomness of the weights in an essential way.  For Gaussian kernels, analyticity gives $r=O(r_n)$ (\Cref{prop:gaussian-width}), and a generic linear independence property of Gaussian evaluation vectors turns sparsity into uniqueness (\Cref{prop:generic}).  The bounds on extreme optimizers and on the log-likelihood gap hold for every dataset of radius $O(\sqrt{\log n})$ and do not require the model to be correctly specified (\Cref{cor:conditional}).

A sparse and unique maximizer is useful only if it is also statistically accurate.  For the ordinary NPMLE, the squared Hellinger loss is known to be of order at most $(\log n)^2/n$ in one dimension \citep{zhang2009} and $(\log n)^{d+1}/n$ in $d$ dimensions \citep{sahaguntuboyina2020}, and the minimax rate for this class is characterized through its metric entropy \citep{jiaetal2023}.  The reweighted NPMLE also satisfies a Hellinger bound that is parametric up to logarithmic factors.  Under correct specification,
\[
 H^2(f_{\Ghat_W},f_{G^*})
 =O_{\PP}\!\left(\frac{(\log n)^{d+1}}
 {n(\log\log n)^d}\right),
\]
and the same bound holds when $\alpha_n=n^{2L+2}$.  It follows from a uniform bound of the same type as those just cited: with high probability, the bound holds simultaneously for every mixing distribution whose log-likelihood is within one of that of the true mixing distribution $G^*$ (\Cref{thm:near-mle}), and the reweighted NPMLE is such a mixing distribution because its ordinary log-likelihood gap is $O(1/\log n)$.  Relative to the rates above, the extra factor $(\log\log n)^{-d}$ comes from a moment-matching entropy bound for compactly supported mixing distributions, so it applies equally to the ordinary NPMLE.  The same argument covers approximate numerical maximizers of the reweighted likelihood (\Cref{cor:approx}) and gives weak consistency of $\Ghat_W$ for $G^*$ (\Cref{cor:weak}).  The logarithmic factors are not claimed to be optimal; \Cref{sec:hellinger} compares this bound with sharper one-dimensional results.

The main technical device behind the sparsity bound is a change of variables for the weights.  Write $\vhat(w)$ for the fitted vector shared by all maximizers of the likelihood with weights $w$.  An extreme optimizer with $k$ atoms admits a $(k-1)$-dimensional family of mass perturbations that are feasible in both directions, and this forces the suitably scaled Hessian of the value function $w\mapsto\max_{G\in\Pcal(K)}\sum_iw_i\log f_G(X_i)$ to have at least $k-1$ eigenvalues no smaller than one (\Cref{lem:hessian-summary}).  The injective map $w_i\mapsto w_i\{1+\log(\vhat_i(w)/\vhat_i(\one))\}$ turns these eigenvalues into a Jacobian factor exponential in $k$, and comparing the Gamma density before and after the map yields an exponential moment bound on the number of atoms (\Cref{lem:determinant-moment}).  Because the weights are concentrated, the weighted fit stays close to the ordinary one, and the cost of this change of variables is of the order of the effective dimension rather than of $n$.  The Hellinger bound is proved differently, by a likelihood-ratio argument over a finite net of Gaussian mixture densities whose size is controlled by moment matching (\Cref{sec:hellinger}).

Simulations in one and two dimensions (\Cref{sec:numerics}) show that the reweighted NPMLE stays close to the ordinary NPMLE in finite samples.  Along a regularization path, the weight perturbation decays at the expected $\alpha^{-1/2}$ rate, whereas the ordinary log-likelihood gap and the squared Hellinger distance between fitted densities decay approximately as $\alpha^{-1}$.  Across three designs and eleven design--sample-size cells, the reweighted NPMLE with the balanced concentration $\alpha_n=\bar r_n\log n$ has essentially the same Hellinger risk and resolved support size as the ordinary NPMLE.  By contrast, the concentration $\alpha=1$ of the Bayesian bootstrap reduces the resolved support size slightly in some experiments, but at the cost of a large ordinary log-likelihood gap and a larger Hellinger risk.  Concentration of the weights, rather than random reweighting alone, is therefore essential.

The rest of the paper is organized as follows.  \Cref{sec:model} defines the estimator and states the main statistical results.  \Cref{sec:effective} gives the abstract effective-dimension theorem.  \Cref{sec:gaussian} specializes the geometry to Gaussian mixtures, proves uniqueness and bounds the Hellinger loss.  \Cref{sec:numerics} studies finite-sample sparsity and proximity to the ordinary NPMLE.  \Cref{sec:discussion} discusses limitations and extensions.  Proofs and details of the numerical study are in the appendices.

\paragraph{Notation.}
For vectors $u,v\in(0,\infty)^n$, the expressions $\log v$, $u/v$, and $uv$ act coordinatewise; $\one=(1,\ldots,1)^\top$, and $\diag(w)$ is the diagonal matrix with diagonal $w$.  The number of atoms of a discrete distribution $G$ is $\#\supp(G)$, both $\norm{\cdot}$ and $\norm{\cdot}_2$ denote the Euclidean norm, and $B(0,S)$ is the closed Euclidean ball of radius $S$.  For symmetric matrices, $A\succeq B$ means that $A-B$ is positive semidefinite.  For probability measures $P$ and $Q$, $\norm{P-Q}_{\TV}=\sup_A\abs{P(A)-Q(A)}$ is the total variation distance.  For positive sequences, $a_n\asymp b_n$ means that $a_n/b_n$ is bounded away from zero and infinity, and $\log$ is the natural logarithm.  The symbols $C,c,C_b,\ldots$ denote positive constants that do not depend on $n$ or on the true mixing distribution; their values may change from one occurrence to the next, and subscripts indicate additional dependence.

\section{Model, reweighted NPMLE, and main results}\label{sec:model}

This section introduces the model and the estimator and then states the main results.

\subsection{Gaussian location mixtures}\label{sec:mixtures}

Throughout the paper we work under the following standing assumption.

\begin{assumption}\label{ass:standing}
The dimension $d\ge1$ is fixed, and $K\subset B(0,S)\subset\RR^d$ is a fixed nonempty compact set with $S<\infty$.
\end{assumption}

\noindent Under \Cref{ass:standing}, generic constants may depend on $d$, $K$ and $S$ without further mention.  Let $\Pcal(K)$ denote the set of Borel probability measures on $K$, and define
\[
 \phi_d(x)=(2\pi)^{-d/2}\exp(-\norm{x}^2/2),
 \qquad f_G(x)=\int_K\phi_d(x-\theta)\,dG(\theta).
\]
The data are generated according to
\begin{equation}\label{eq:model}
 X_i=\Theta_i+Z_i,\qquad
 \Theta_i\stackrel{\mathrm{iid}}\sim G^*,\quad
 Z_i\stackrel{\mathrm{iid}}\sim N(0,I_d),
\end{equation}
where $G^*$ is a Borel probability measure on $\RR^d$, called the true mixing distribution, and the two sequences are independent.  The model is correctly specified when $G^*\in\Pcal(K)$.  We write
\[
 L_n(G)=\sum_{i=1}^n\log f_G(X_i),
 \qquad
 H^2(p,q)=\int_{\RR^d}(\sqrt p-\sqrt q)^2\,dx
\]
for the log-likelihood and the squared Hellinger distance; note that $H^2$ is defined without a factor $1/2$.  For $G\in\Pcal(K)$, we call $(f_G(X_1),\ldots,f_G(X_n))$ the fitted vector of $G$ and $f_G$ its fitted density.

An ordinary NPMLE is any $\Ghat_0\in\argmax_{G\in\Pcal(K)}L_n(G)$.  It exists because $\Pcal(K)$ is weakly compact and, since the Gaussian kernel is positive and continuous, $L_n$ is finite and weakly continuous on $\Pcal(K)$.  We do not assume deterministic uniqueness of the maximizing mixing distribution for an arbitrary compact parameter set; all ordinary NPMLEs, however, have the same fitted vector \citep{lindsay1983geometry}, which we call the ordinary fitted vector.  Consequently, the ordinary log-likelihood gap and the log-fit discrepancy
\begin{equation}\label{eq:gap-discrepancy}
 \Delta L_n(G)=L_n(\Ghat_0)-L_n(G),
 \qquad
 D_n(G)=\sum_{i=1}^n\bigl\{\log f_G(X_i)-\log f_{\Ghat_0}(X_i)\bigr\}^2,
\end{equation}
depend on $\Ghat_0$ only through the ordinary fitted vector and are well defined for every $G\in\Pcal(K)$; moreover, $\Delta L_n(G)\ge0$.  These two quantities measure how far a mixing distribution is from the ordinary NPMLE in likelihood and in fitted values; \Cref{thm:main}(iii) bounds both for the reweighted NPMLE.

Restricting the mixing distribution to a compact set is natural for Gaussian location mixtures: every maximizer of $L_n$ over all Borel probability measures on $\RR^d$ is supported in the convex hull of $X_1,\ldots,X_n$ \citep{soloffetal2025}, which in one dimension is the range of the data.  Indeed, moving mass from a point outside the convex hull to its Euclidean projection onto the hull brings it closer to every observation and therefore increases every coordinate of the fitted vector; the same argument applies to the weighted likelihood \eqref{eq:rrnpmle} below, with any positive weights.  Hence, for every compact $K$ containing the convex hull of the data, the maximizers over $\Pcal(K)$ are exactly the maximizers over all probability measures on $\RR^d$.  \Cref{ass:standing} requires more, namely that $K$, and with it the radius $S$ on which our constants depend, stay fixed as $n$ grows.  Since the observations spread over a region of radius of order $\sqrt{\log n}$, the unconstrained estimators correspond to the data-dependent choice $K=\conv\{X_1,\ldots,X_n\}$; extending our results to this choice requires tracking the dependence of the constants on $S$ (\Cref{sec:discussion}).

\subsection{The randomly reweighted NPMLE}

We now define the estimator studied in this paper.  Let $\alpha_n>0$ be a concentration parameter, specified in \Cref{sec:main-theorem}, and for $a>0$ let $\mathrm{Gamma}(a,a)$ denote the Gamma law with shape and rate $a$.  Independently of the data, draw
\begin{equation}\label{eq:weights}
 W_1,\ldots,W_n\stackrel{\mathrm{iid}}\sim\mathrm{Gamma}(\alpha_n,\alpha_n),
\end{equation}
and define the reweighted NPMLE by
\begin{equation}\label{eq:rrnpmle}
 \Ghat_W\in\Mcal_X(W):=
 \argmax_{G\in\Pcal(K)}\sum_{i=1}^nW_i\log f_G(X_i).
\end{equation}
As for the ordinary NPMLE, all members of $\Mcal_X(W)$ have the same fitted vector; when $\Mcal_X(W)$ is a singleton, $\Ghat_W$ denotes its unique element.

The weights are easier to interpret after normalization.  Set $W_+=\sum_iW_i$, $\pi_i=W_i/W_+$, $F_n=n^{-1}\sum_i\delta_{X_i}$ and $F_n^{(\alpha_n)}=\sum_i\pi_i\delta_{X_i}$, and write $Fh=\int h\,dF$ for a measure $F$ and a function $h$.  Dividing the objective by $W_+$ does not change its maximizers, so
\[
 \Mcal_X(W)=\argmax_{G\in\Pcal(K)}
 \int\log f_G\,dF_n^{(\alpha_n)}.
\]
Thus the reweighted NPMLE maximizes the log-likelihood with the empirical distribution $F_n$ replaced by the random distribution $F_n^{(\alpha_n)}$.  The next result shows that this perturbation is centered at $F_n$, vanishes as $\alpha_n\to\infty$, and leaves the population criterion unchanged.

\begin{proposition}[Dirichlet representation and unchanged target]
\label{prop:dirichlet}
Conditional on the data, $(\pi_1,\ldots,\pi_n)\sim\mathrm{Dirichlet}(\alpha_n,\ldots,\alpha_n)$ and is independent of $W_+\sim\mathrm{Gamma}(n\alpha_n,\alpha_n)$.  For every real function $h$ on the sample points,
\[
 \EE(F_n^{(\alpha_n)}h\mid X)=F_nh,
 \qquad
 \operatorname{Var}(F_n^{(\alpha_n)}h\mid X)
 =\frac{F_nh^2-(F_nh)^2}{n\alpha_n+1},
\]
and
\[
 \EE\bigl(\norm{F_n^{(\alpha_n)}-F_n}_{\TV}\mid X\bigr)
 \le \frac12\sqrt{\frac{n-1}{n\alpha_n+1}}.
\]
If $X_i\stackrel{\mathrm{iid}}\sim f_{G^*}$ and $\EE_{G^*}\abs{\log f_G(X_1)}<\infty$, then
\[
 \EE_{X,W}\frac1n\sum_iW_i\log f_G(X_i)
 =\EE_{G^*}\log f_G(X_1)
 =\EE_{X,W}\sum_i\pi_i\log f_G(X_i).
\]
In particular, if $G^*\in\Pcal(K)$, then $\EE_{G^*}\log f_{G^*}(X_1)-\EE_{G^*}\log f_G(X_1)=\KL(f_{G^*}\|f_G)$ for every $G\in\Pcal(K)$, where $\KL(p\|q)=\int p\log(p/q)$ is the Kullback--Leibler divergence.
\end{proposition}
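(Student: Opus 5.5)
The plan follows the order of the statement: first the distributional facts about the weights, then the conditional moments, then the total variation bound, and finally the unconditional expectations.

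\emph{Dirichlet representation.} This is the classical Gamma--Dirichlet construction. The $W_i$ are independent of $X$, with joint density proportional to $\prod_i w_i^{\alpha_n-1}e^{-\alpha_n w_i}$. I will change variables to $(\pi_1,\ldots,\pi_{n-1},W_+)$, whose Jacobian is $W_+^{n-1}$. The resulting density factors as
\[
\Bigl(\prod_i\pi_i^{\alpha_n-1}\Bigr)\cdot W_+^{n\alpha_n-1}e^{-\alpha_n W_+}.
\]
This factorization gives independence, $\pi\sim\mathrm{Dirichlet}(\alpha_n,\ldots,\alpha_n)$, and $W_+\sim\mathrm{Gamma}(n\alpha_n,\alpha_n)$. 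Conditioning on $X$ changes nothing, since $W$ is independent of the data.

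\emph{Conditional moments.} For symmetric Dirichlet weights I will use
\[
\EE\pi_i=\frac1n,\qquad
\operatorname{Var}\pi_i=\frac{(n-1)/n^2}{n\alpha_n+1},\qquad
\operatorname{Cov}(\pi_i,\pi_j)=-\frac{1/n^2}{n\alpha_n+1}\quad(i\neq j).
\]
These follow from the Gamma representation or the standard Dirichlet formulas, with $a_0=n\alpha_n$. Write $h_i=h(X_i)$. Linearity then gives $\EE(F_n^{(\alpha_n)}h\mid X)=F_nh$. For the variance, the covariance matrix of $\pi$ is $\frac{1}{n(n\alpha_n+1)}\bigl(I-\tfrac1n\one\one^\top\bigr)$. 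Hence
\[
\operatorname{Var}(F_n^{(\alpha_n)}h\mid X)=\frac{\frac1n\sum_ih_i^2-\bar h^2}{n\alpha_n+1},
\]
which is the stated formula.

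\emph{Total variation bound.} The sample points may coincide, so I will first note that $\norm{F_n^{(\alpha_n)}-F_n}_{\TV}\le\frac12\sum_i\abs{\pi_i-1/n}$. Equality holds when the points are distinct, and merging atoms only decreases the distance. Cauchy--Schwarz and Jensen then give
\[
\EE\sum_i\abs{\pi_i-1/n}\le\sqrt{n\sum_i\operatorname{Var}\pi_i}=\sqrt{\frac{n-1}{n\alpha_n+1}}.
\]
Multiplying by $\tfrac12$ yields the bound.

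\emph{Unchanged target.} Since $W$ is independent of $X$ and $\EE W_i=1$, Fubini (justified by integrability of $\abs{\log f_G(X_1)}$) gives
\[
\EE\,\frac1n\sum_iW_i\log f_G(X_i)=\EE_{G^*}\log f_G(X_1).
\]
For the normalized version, independence of $\pi$ and $X$ with $\EE\pi_i=1/n$ gives the same value, because $\abs{\pi_i\log f_G(X_i)}\le\abs{\log f_G(X_i)}$ is integrable. The Kullback--Leibler identity is the definition of $\KL$, provided $\EE_{G^*}\abs{\log f_{G^*}(X_1)}<\infty$. This holds because, for $G\in\Pcal(K)$ with $K$ bounded, $\abs{\log f_G(x)}\le C(1+\norm{x}^2)$ and $X_1$ has finite second moment. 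The same bound also shows the integrability hypothesis holds automatically in the final claim.

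The main obstacle is the variance computation, specifically getting the Dirichlet covariance constants right. The only other subtlety is the handling of ties in the total variation bound.
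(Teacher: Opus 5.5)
Your proposal is correct and follows essentially the same route as the paper's proof: the Gamma--Dirichlet change of variables, the symmetric Dirichlet moments substituted into the mean and variance, Cauchy--Schwarz with the marginal variance for the total variation bound, and independence plus the bound $\abs{\log f_G(x)}\le C(1+\norm{x}^2)$ for integrability. That bound is equivalent to the paper's two-sided bound on $f_G/\phi_d$, and your explicit handling of tied sample points in the total variation step is a small clarification the paper leaves implicit.
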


\Cref{prop:dirichlet} shows that the reweighting adds no penalty on the number of atoms and leaves the population target unchanged: in expectation, the reweighted criterion equals the ordinary one, and under correct specification both are maximized by $G^*$.  The perturbation of the empirical distribution is at most of order $\alpha_n^{-1/2}$ in total variation, uniformly in $n$, and therefore vanishes as $\alpha_n\to\infty$.  This is where the reweighted NPMLE departs from the Bayesian bootstrap, in which $\alpha=1$, the perturbation does not vanish, and the maximizers from repeated draws are treated as a distribution.  Here a single draw defines the estimator.

\subsection{The joint sparsity and accuracy theorem}\label{sec:main-theorem}

The main result requires the concentration to grow with $n$.  We state it for the balanced concentration
\begin{equation}\label{eq:scales}
 r_n=\left(\frac{\log n}{\log\log n}\right)^d,
 \qquad \bar r_n=r_n+\log n,
 \qquad \alpha_n=\bar r_n\log n,
\end{equation}
defined for all sufficiently large $n$.  The first term, $r_n$, is of the order of the effective dimension of the set of relative fitted vectors in the Gaussian model (\Cref{prop:gaussian-width}), and the additional $\log n$ in $\bar r_n$ pays for a polynomially small failure probability; it is absorbed by $r_n$ when $d\ge2$.  The sparsity argument of \Cref{sec:effective} requires a concentration of at least a constant multiple of $\bar r_n$, and the extra factor $\log n$ in \eqref{eq:scales} makes the ordinary log-likelihood gap at most of order $1/\log n$.

\begin{theorem}[Uniqueness, sparsity and accuracy]
\label{thm:main}
Suppose that \Cref{ass:standing} holds and that the data follow \eqref{eq:model} with $G^*\in\Pcal(K)$.  Let the weights satisfy \eqref{eq:weights} with the concentration in \eqref{eq:scales}.  For every $b>0$, there are constants $C_b<\infty$ and $n_0$ depending only on $d,K,S,b$ such that, uniformly over $G^*\in\Pcal(K)$, the following event has probability at least $1-C_bn^{-b}$ for every $n\ge n_0$:
\begin{enumerate}[label=\textup{(\roman*)}]
\item The reweighted NPMLE is unique and
\[
  \#\supp(\Ghat_W)\le C_b\bar r_n.
\]
\item The weight perturbation satisfies
\[
  \max_{1\le i\le n}\abs{W_i-1}\le \frac{C_b}{\sqrt{\bar r_n}}.
\]
\item The ordinary log-likelihood gap and the log-fit discrepancy in \eqref{eq:gap-discrepancy} satisfy
\[
  \Delta L_n(\Ghat_W)\le\frac{C_b}{\log n},
  \qquad
  D_n(\Ghat_W)\le\frac{C_b}{\log n}.
\]
\item The squared Hellinger loss satisfies
\[
 H^2(f_{\Ghat_W},f_{G^*})
 \le C_b\frac{(\log n)^{d+1}}
 {n(\log\log n)^d}.
\]
\end{enumerate}
For $d\ge2$, the support bound in part \textup{(i)} is $O\{(\log n/\log\log n)^d\}$; for $d=1$, it is $O(\log n)$.
\end{theorem}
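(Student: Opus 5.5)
The proof is an assembly of the ingredients described in the introduction, each proved in later sections: \Cref{thm:effective}, \Cref{prop:gaussian-width}, \Cref{prop:generic}, \Cref{lem:hessian-summary}, \Cref{lem:determinant-moment} and \Cref{thm:near-mle}. I take these as given and combine them on the intersection of a few high-probability events, each of probability at least $1-Cn^{-b}$.

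\emph{Step 1: data event.} Let $\mathcal E_X$ be the event that $\max_i\norm{X_i}\le S+C\sqrt{b\log n}$. Since $G^*\in\Pcal(K)\subset\Pcal(B(0,S))$, a Gaussian tail bound and a union bound give $\PP(\mathcal E_X^c)\le Cn^{-b}$, uniformly over $G^*$. On $\mathcal E_X$, \Cref{prop:gaussian-width} applies: the relative fitted vectors $\{f_G(X_i)/f_{\Ghat_0}(X_i)\}_i$ lie close to a subspace of dimension $r\le C r_n$. This uses analyticity of the Gaussian kernel, approximated by polynomials of degree about $\log n/\log\log n$ on a region of radius $\sqrt{\log n}$.

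\emph{Step 2: weight event.} Gamma$(\alpha_n,\alpha_n)$ concentration gives $\PP(\abs{W_i-1}>t)\le 2\exp\{-c\alpha_n\min(t^2,t)\}$. Take $t=C_b\sqrt{\log n/\alpha_n}=C_b\bar r_n^{-1/2}$ and union bound over $i$. This proves (ii).

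\emph{Step 3: sparsity and uniqueness.} Conditionally on $X\in\mathcal E_X$, apply \Cref{thm:effective} with $r=Cr_n$ and concentration $\alpha_n\ge C\bar r_n$. It yields that every extreme optimizer of $\Mcal_X(W)$ has at most $C_b(r+\log n)\le C_b\bar r_n$ atoms, except with conditional probability $Cn^{-b}$. Internally, this comes from the exponential moment $\EE\exp(c k)$ of \Cref{lem:determinant-moment} followed by Markov's inequality. Then \Cref{prop:generic} turns sparsity into uniqueness. Any two extreme optimizers share the fitted vector, so their difference is a signed measure with at most $2C_b\bar r_n$ atoms annihilated by the Gaussian evaluation vectors at $X_1,\dots,X_n$. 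The generic linear independence of such vectors, which holds almost surely under the continuous law of $X$ when $2C_b\bar r_n<n$, forces this difference to be zero. Hence $\Mcal_X(W)$ has a single extreme point; being convex and weakly compact, it is a singleton by Krein--Milman. This proves (i).

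\emph{Step 4: likelihood gap and discrepancy.} Write $\ell(G)=(\log f_G(X_i))_i$, $u=W-\one$, and $\hat\ell_0=\ell(\Ghat_0)$. Optimality of $\Ghat_W$ for the weights $W$ gives
\[
\Delta L_n(\Ghat_W)\le \langle u,\ell(\Ghat_W)-\hat\ell_0\rangle .
\]
A crude Cauchy--Schwarz bound would cost a factor of $n$. Instead, use the effective-dimension structure. The log-fit difference $\ell(\Ghat_W)-\hat\ell_0$ lies near an $O(\bar r_n)$-dimensional subspace plus a small remainder, so by $\chi^2$ concentration the projection of $u$ onto that subspace has squared norm $O(\bar r_n\log n/\alpha_n)=O(1)$. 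Combine this with strong concavity in the fitted values: on the bounded-data event, the second-order expansion of $\sum_i\log v_i$ around the ordinary fitted vector gives
\[
\Delta L_n(G)\ge cD_n(G)
\]
for $G$ near the optimum. These yield $D_n\le C\norm{P u}\sqrt{D_n}+{}$remainder, hence $D_n\le C\norm{Pu}^2\le C_b/\log n$, and then $\Delta L_n\le C_b/\log n$. This proves (iii). I expect this step, and specifically the control of the remainder orthogonal to the subspace, to be the main obstacle. If needed, I would fall back on the quantitative Hessian/value-function bound of \Cref{lem:hessian-summary}, which bounds the value function curvature by $O(1)$ per effective dimension.

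\emph{Step 5: Hellinger loss.} By (iii), $L_n(\Ghat_W)\ge L_n(\Ghat_0)-1\ge L_n(G^*)-1$. So \Cref{thm:near-mle}, the uniform bound over near-maximizers obtained from a likelihood-ratio argument on the moment-matching net of \Cref{lem:finite-net}, gives (iv) on an event of probability at least $1-C_bn^{-b}$. A final union bound over the events of Steps 1 to 5 completes the proof. The remark after the theorem follows because $r_n\ge\log n$ for $d\ge2$ and $r_n\asymp\log n/\log\log n\le\log n$ for $d=1$, so $\bar r_n\asymp r_n$ in the first case and $\bar r_n\asymp\log n$ in the second.
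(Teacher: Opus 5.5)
Your architecture is the paper's: a radius event on the data, \Cref{prop:gaussian-width} feeding \Cref{thm:effective} for the extreme-optimizer bound, \Cref{prop:generic} with Krein--Milman for uniqueness, a Gamma union bound for (ii), and \Cref{thm:near-mle} for (iv). The gap is Step 4. As written it does not deliver (iii), and you do not need it. In the paper's notation ($\xi=W-\one$, $u=\vhat(W)/v_0-\one$), the two quantities in (iii) are exactly $\one^\top\log v_0-\one^\top\log\vhat(W)$ and $\norm{\log\{\vhat(W)/v_0\}}_2^2$. \Cref{thm:effective}, which you take as given, bounds both in \eqref{eq:effective-exact-fit} by $C(r+q)/\alpha$, on the same event as the support bound. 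With $r=O(r_n)$, $q=(b+2)\log n$ and $\alpha_n=\bar r_n\log n$, this is $C_b/\log n$. The paper packages this as \Cref{cor:conditional}.

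Your substitute argument has three defects.
\begin{enumerate}
\item \emph{Arithmetic.} You bound the projected weight perturbation by $O(\bar r_n\log n/\alpha_n)=O(1)$, which would only give $D_n=O(1)$. The correct tail is additive in the confidence level: $\PP\{\alpha\norm{P\xi}_2^2>16(r\log5+q)\}\le e^{-q}$. Hence $\norm{P\xi}_2^2=O(\bar r_n/\alpha_n)=O(1/\log n)$.
\item \emph{Localization.} The inequality $\Delta L_n\ge cD_n$ holds only once the fit is known to be close to $v_0$, and this needs its own argument. \Cref{lem:radial} uses $\one^\top u\le0$ on $\Ucal$ and $\log(1+s)\le s-s^2/3$ to show that the weighted objective is at most $-\rho^2/12$ on the sphere $\norm{u}_2=\rho=1/16$. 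It then radially contracts any fit lying outside that ball to reach a contradiction.
\item \emph{Nonlinear remainder.} It is $u$, not $\log(\one+u)$, that lies near the range of $P$. The remainder is therefore not handled through the subspace but coordinatewise, via $\abs{\xi^\top\{\log(\one+u)-u\}}\le\norm{\xi}_\infty\norm{u}_2^2$ with $\norm{\xi}_\infty\le1/16$, a small multiple of the curvature term.
\end{enumerate}
Your proposed fallback also misreads \Cref{lem:hessian-summary}. It is a \emph{lower} bound on the curvature of $\Psi$, used for the Jacobian in the change of variables, not an upper bound.

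Two smaller omissions. When invoking \Cref{thm:effective} you must also check $\eta\sqrt{\alpha_n nQ}\le1$. This requires $m=\lceil C_m\log n/\log\log n\rceil$ with $C_m$ large enough that $\eta\le n^{-5/2}$. Also, \Cref{prop:generic} needs $2m(d+1)\le n$, not $2m<n$.
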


Parts (i)--(iii) of \Cref{thm:main} are structural.  Their proofs use the data only through the radius condition $\max_i\norm{X_i}\le C_0\sqrt{\log n}$ and, for uniqueness, the absolute continuity of the distribution of $(X_1,\ldots,X_n)$; in particular, they do not require $G^*\in\Pcal(K)$ (\Cref{cor:conditional} and \Cref{prop:generic}).  Only part (iv) requires correct specification, because its proof compares the log-likelihood of $\Ghat_W$ with that of $G^*$.

The balanced concentration is convenient but not necessary: much smaller weight perturbations, that is, much larger concentrations, also produce exact sparsity.

\begin{corollary}[Arbitrarily small weight perturbations]
\label{cor:tiny}
Under the assumptions of \Cref{thm:main}, fix any $L>0$ and instead take $\alpha_n=n^{2L+2}$.  For every $b>0$, with probability at least $1-C_bn^{-b}$ for all sufficiently large $n$, the reweighted NPMLE is unique,
\[
 \#\supp(\Ghat_W)\le C_{b,L}\bar r_n,
 \qquad
 \max_i\abs{W_i-1}\le n^{-L},
\]
and the Hellinger conclusion of \Cref{thm:main}(iv) holds.
\end{corollary}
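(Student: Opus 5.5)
The plan is to rerun the proof of \Cref{thm:main} with $\alpha_n=n^{2L+2}$ in place of $\bar r_n\log n$. The structural ingredients are \Cref{thm:effective}, \Cref{prop:gaussian-width} and \Cref{prop:generic}, and the Hellinger bound comes from \Cref{thm:near-mle}. Only two places in that proof use the size of $\alpha_n$: the lower bound $\alpha_n\ge c\,\bar r_n$ needed by the effective-dimension argument, and the upper control on the weight perturbation that feeds the likelihood gap. Both become easier when $\alpha_n$ is larger, but I will check that no step secretly needs an upper bound on $\alpha_n$.

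\textbf{Weight perturbation.} $W_i-1$ has mean zero and variance $1/\alpha_n$, with sub-gamma tails. For $t\le1$, $\PP(\abs{W_i-1}>t)\le2\exp(-c\alpha_nt^2)$. Take $t=n^{-L}$. Then $\alpha_nt^2=n^{2}$, and a union bound over $i\le n$ gives failure probability $2n\exp(-cn^2)$, which is far below $n^{-b}$.

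\textbf{Sparsity and uniqueness.} \Cref{thm:effective} bounds the atoms of extreme optimizers by $O(r+\log n)$ once $\alpha_n\gtrsim r+\log n$, and here $r=O(r_n)$ by \Cref{prop:gaussian-width}. I expect this step to be the main obstacle. The change of variables in \Cref{lem:determinant-moment} compares the Gamma density at $w$ and at its image. The relevant log-density ratio is roughly $\alpha_n\sum_i\{w_i'-w_i-\log(w_i'/w_i)\}$, and this grows with $\alpha_n$. The argument absorbs it because the image moves by $w_i\log(\vhat_i(w)/\vhat_i(\one))$, and this displacement shrinks as the weights concentrate. The cost must stay of the order of the effective dimension, $\alpha_n\norm{\log(\vhat(w)/\vhat(\one))}^2\lesssim r$, so that it does not depend on $\alpha_n$. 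I would first look in the proof of \Cref{thm:effective} for a bound of the form $\norm{\log\vhat(w)-\log\vhat(\one)}^2\lesssim(\text{dimension})/\alpha_n$, obtained from strong concavity of the value function along the relevant directions. If that bound holds for all $\alpha_n\ge c\bar r_n$, then the exponential moment of the number of atoms is uniform in $\alpha_n$. Markov's inequality then gives $\#\supp\le C_{b,L}\bar r_n$ with probability $1-Cn^{-b}$.

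If the constants in \Cref{thm:effective} do depend mildly on $\alpha_n$, say through $\log\alpha_n$, then $\log\alpha_n=(2L+2)\log n$. That changes only the constant in front of the $\log n$ term, which explains the dependence of $C_{b,L}$ on $L$. Uniqueness then follows from \Cref{prop:generic} exactly as before: extreme optimizers are sparse, Gaussian evaluation vectors are generically linearly independent, and the weights are independent of the absolutely continuous data, so the maximizer is unique almost surely on the sparsity event.

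\textbf{Hellinger bound.} By \Cref{thm:near-mle}, it suffices to show $L_n(\Ghat_W)\ge L_n(G^*)-1$. Since $L_n(\Ghat_0)\ge L_n(G^*)$, this reduces to bounding $\Delta L_n(\Ghat_W)$. Compare the two objectives on the event $\max_i\abs{W_i-1}\le n^{-L}$. Optimality of $\Ghat_W$ for the weights $W$ gives
\[
 \Delta L_n(\Ghat_W)\le \sum_i(W_i-1)\bigl\{\log f_{\Ghat_W}(X_i)-\log f_{\Ghat_0}(X_i)\bigr\}.
\]
On the radius event $\max_i\norm{X_i}\le C_0\sqrt{\log n}$, each log-fit lies within $O(\log n)$ of the others, so the right side is at most $n\cdot n^{-L}\cdot O(\log n)$. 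This is $o(1)$ when $L>1$. For $L\le1$, I would instead use the Cauchy--Schwarz route of \Cref{thm:main}(iii): $\Delta L_n\lesssim\norm{W-\one}^2$ up to curvature, and $\norm{W-\one}^2\approx n/\alpha_n=n^{-2L-1}$, which is negligible. This yields $\Delta L_n(\Ghat_W)\le1$ with high probability, and \Cref{thm:near-mle} then gives the rate in \Cref{thm:main}(iv).
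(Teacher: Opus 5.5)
\textbf{Gap in the sparsity step.} You apply \Cref{thm:effective} after checking only $\alpha_n\ge C(r+Q)$. But \eqref{eq:effective-conditions} also requires $\eta\sqrt{\alpha nQ}\le1$. This is exactly a step that needs an upper bound on $\alpha_n$ relative to the residual width, so your claim that every use of $\alpha_n$ gets easier as it grows is false.

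This condition is also what resolves the obstacle you flag in the change of variables. In the proof of \Cref{thm:effective}, the exponent in \eqref{eq:moment-summary} is bounded by $109\alpha\omega_P^2+37\alpha\eta\omega$, where $\omega=\norm{W-\one}_2\approx\sqrt{n/\alpha}$.
\begin{itemize}
\item The first term is $O(r+q)$ uniformly in $\alpha$.
\item The second term comes from the part of $W-\one$ invisible to $P$. It is of order $\eta\sqrt{\alpha n}$, so for a fixed projection it grows with $\alpha$.
\end{itemize}
So the machinery gives no bound $\alpha\norm{t(W)}_2^2\lesssim r+q$ that is uniform over all $\alpha\ge c\bar r_n$, which is what you hope to find. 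Concretely, the truncation used for the balanced concentration gives $\eta\le n^{-5/2}$, hence $\eta\sqrt{\alpha_n nQ}\le n^{L-1}\sqrt Q$, which diverges for $L\ge1$.

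\textbf{The missing idea.} Re-choose the projection depending on $L$. In \Cref{prop:gaussian-width}, take $m=\lceil C_L\log n/\log\log n\rceil$ with $C_L$ large enough that $\eps_m\le n^{-L-3}$; analyticity makes $\eps_m$ smaller than any inverse power of $n$ once the constant in $m$ is large. Then $\eta\le n^{-L-5/2}$, so $\eta\sqrt{\alpha_n nQ}\le\sqrt Q/n\to0$, while $\rank(P_m)=O_L(r_n)$. With $q=(b+2)\log n$, both conditions hold and \Cref{thm:effective} gives $\sext(W)=O_{b,L}(\bar r_n)$.

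This is the real source of the $L$-dependence in $C_{b,L}$: it enters through the rank and multiplies $r_n$. It is not a $\log\alpha_n$ correction to the $\log n$ term, since the constants in \Cref{thm:effective} are universal.

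\textbf{The remaining steps are fine.} Your weight-perturbation bound and your uniqueness argument via \Cref{prop:generic} match the paper. Your Hellinger step is a valid, more elementary alternative to reading $\Delta L_n(\Ghat_W)\le C_{b,L}\bar r_n/n^{2L+2}$ off \Cref{thm:effective}. Because $\norm{W-\one}_2\lesssim\sqrt{nQ/\alpha_n}$ is polynomially small, localization and strong concavity work with no low-rank projection at all. Indeed, \Cref{lem:radial} with $P=I$ and $\eta=0$ gives $\Delta L_n(\Ghat_W)\lesssim\norm{W-\one}_2^2$ for every $L>0$, so your case split at $L=1$ is unnecessary.
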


Thus an inverse-polynomially small weight perturbation already yields a unique maximizer with polylogarithmically many atoms, whose fitted density retains the Hellinger rate of \Cref{thm:main}.  Because $K$ is compact and Gaussian location mixtures are identifiable, Hellinger convergence of the fitted density also implies convergence of the mixing distribution.

\begin{corollary}[Weak consistency of the mixing distribution]
\label{cor:weak}
Under the assumptions of \Cref{thm:main}, $\Ghat_W$ converges weakly to $G^*$ in probability: $d(\Ghat_W,G^*)\to0$ in probability for every metric $d$ that metrizes weak convergence on $\Pcal(K)$.
\end{corollary}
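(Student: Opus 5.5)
We derive \Cref{cor:weak} from the Hellinger bound in \Cref{thm:main}(iv), using compactness of $\Pcal(K)$ and identifiability of Gaussian location mixtures. Since $K$ is compact, $\Pcal(K)$ is weakly compact and metrizable. Every metric that metrizes the weak topology induces the same topology on this compact space, so convergence in probability in one such metric is equivalent to convergence in probability in any other. We therefore fix one metric, say the bounded-Lipschitz metric $d_{BL}$.

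The core is a deterministic modulus-of-continuity statement. For every $\eps>0$ there is $\delta(\eps)>0$, depending only on $d$ and $K$ and not on $G^*$, such that $G,G'\in\Pcal(K)$ with $H^2(f_G,f_{G'})<\delta(\eps)$ satisfy $d_{BL}(G,G')<\eps$.

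We prove this by compactness and contradiction. Suppose there are sequences $G_m,G_m'$ with $H^2(f_{G_m},f_{G_m'})\to0$ but $d_{BL}(G_m,G_m')\ge\eps$. Pass to a subsequence along which $G_m\to G$ and $G_m'\to G'$ weakly. Because $\phi_d(x-\theta)$ is bounded and continuous in $\theta$, the densities converge pointwise: $f_{G_m}\to f_G$ and $f_{G_m'}\to f_{G'}$. By Scheffé's lemma these convergences also hold in $L^1$. Since $H^2\le\norm{p-q}_{L^1}$ for the paper's unnormalized Hellinger distance, we get $H^2(f_{G_m},f_G)\to0$. The triangle inequality for $H$ then gives $H(f_G,f_{G'})=0$, so $f_G=f_{G'}$. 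Fourier transforms give $\hat G(\xi)e^{-\norm{\xi}^2/2}=\hat G'(\xi)e^{-\norm{\xi}^2/2}$, hence $\hat G=\hat G'$ and $G=G'$. This contradicts $d_{BL}(G,G')\ge\eps$ in the limit.

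With the lemma in hand, the conclusion follows. Fix $\eps>0$ and let $\delta=\delta(\eps)$. Take $b=1$ in \Cref{thm:main}. Outside an event of probability at most $C_1n^{-1}$, the estimator is unique and
\[
H^2(f_{\Ghat_W},f_{G^*})\le C_1\frac{(\log n)^{d+1}}{n(\log\log n)^d}.
\]
The right-hand side is below $\delta$ for all large $n$. Hence
\[
\PP\{d_{BL}(\Ghat_W,G^*)\ge\eps\}\le C_1n^{-1}\to0,
\]
uniformly over $G^*\in\Pcal(K)$.

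The main step needing care is the uniform modulus lemma. Hellinger convergence alone gives convergence of densities, and transferring this to convergence of mixing distributions requires identifiability together with compactness. The contradiction argument above handles both, and it yields uniformity in $G^*$ for free.
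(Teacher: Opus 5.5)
Your proof is correct. It uses the same three ingredients as the paper's proof: weak compactness of $\Pcal(K)$; pointwise convergence of $f_G$ under weak convergence of $G$, upgraded to $L^1$ and Hellinger convergence; and Fourier identifiability through the nonvanishing factor $e^{-\norm{t}^2/2}$.

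The difference is in the packaging. The paper works directly with the random sequence. It invokes the almost-sure sub-subsequence characterization of convergence in probability, extracts a weakly convergent further subsequence of $\Ghat_W$ pathwise, and identifies every limit as $G^*$. You instead isolate a deterministic lemma and then insert the high-probability bound of \Cref{thm:main}(iv). The lemma says that the inverse of the continuous injection $G\mapsto f_G$ is uniformly continuous from the Hellinger metric to $d_{BL}$ on the compact set $\Pcal(K)$.

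Your route gives slightly more. For each fixed $\eps>0$ and $b>0$ it yields $\sup_{G^*\in\Pcal(K)}\PP\{d(\Ghat_W,G^*)\ge\eps\}\le C_bn^{-b}$ for all large $n$, so you get uniformity in $G^*$ and a polynomial bound on the exceptional probability. It also avoids the almost-sure subsequence device, which tacitly needs all the estimators on one probability space and an $\omega$-dependent extraction.

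Your reduction to a single metric is sound: on a compact space, any two metrics inducing the weak topology are uniformly equivalent. The paper gets the ``every metric'' clause for free, because its argument produces weak convergence along subsequences, which does not depend on the metric. The paper's proof is shorter. Neither argument gives a rate for $d(\Ghat_W,G^*)$ itself, since your $\delta(\eps)$ comes from a non-constructive compactness argument.
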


The proof of \Cref{thm:main} combines results of independent interest.  For a general positive kernel, \Cref{thm:effective} shows that, with high probability, the number of atoms of every extreme optimizer is bounded by a constant multiple of the effective dimension of the set of relative fitted vectors plus a logarithmic term, and that the ordinary log-likelihood gap and the log-fit discrepancy are small.  For Gaussian kernels, \Cref{prop:gaussian-width} shows that this effective dimension is $O(r_n)$; with a Gamma tail bound, this yields \Cref{cor:conditional}, which contains parts (ii) and (iii) and bounds the number of atoms of every extreme optimizer.  \Cref{prop:generic} then turns this bound into uniqueness, which gives part (i).  Part (iv) follows from \Cref{thm:near-mle}, which bounds the Hellinger loss of every mixing distribution with log-likelihood at least $L_n(G^*)-1$, because part (iii) gives $L_n(\Ghat_W)\ge L_n(\Ghat_0)-1\ge L_n(G^*)-1$ for large $n$.

\section{An effective-dimension bound on the number of atoms}
\label{sec:effective}

This section develops the sparsity part of \Cref{thm:main} for an arbitrary positive continuous kernel.  The argument uses a single property of the fitted vectors: relative to the ordinary fitted vector, they lie close to a low-dimensional linear subspace.  Under this condition, \Cref{thm:effective} bounds the number of atoms of every extreme optimizer, with high probability, by a constant multiple of the dimension of that subspace plus a logarithmic term; its proof, sketched in \Cref{sec:hessian}, turns the curvature of the value function into this bound through a change of variables for the weights.  \Cref{sec:gaussian} verifies the condition for Gaussian kernels and shows that, for them, sparsity of the extreme optimizers implies uniqueness.

\subsection{Positive kernel hulls and optimizer fibers}\label{sec:effective-setup}

The weighted log-likelihood depends on a mixing distribution only through its fitted vector, so we formulate the problem directly in terms of fitted vectors, for a general positive kernel and a fixed dataset.  Let $\Theta$ be a nonempty compact metric space and let $A:\Theta\to(0,\infty)^n$ be a continuous map, which we call the evaluation map; its values $A(\theta)$ are the evaluation vectors.  In the Gaussian model with data $x_1,\ldots,x_n$, $\Theta=K$ and $A(\theta)=(\phi_d(x_i-\theta))_{i\le n}$.  Write
\[
 \Acal=A(\Theta),\qquad \Ccal=\conv(\Acal).
\]
The set $\Ccal$ of fitted vectors consists exactly of the vectors $\int A(\theta)\,dG(\theta)$ with $G\in\Pcal(\Theta)$.  For $w\in(0,\infty)^n$, define
\begin{equation}\label{eq:value}
 \Psi(w)=\max_{v\in\Ccal}w^\top\log v,
 \qquad
 \vhat(w)=\argmax_{v\in\Ccal}w^\top\log v,
 \qquad z(w)=\log\vhat(w).
\end{equation}
We call $\Psi$ the value function and $\vhat(w)$ the weighted fit.  Because $\Ccal$ is compact and $v\mapsto w^\top\log v$ is strictly concave, the weighted fit exists and is unique (\Cref{lem:regularity}), and the set of maximizing mixing distributions is the \emph{optimizer fiber}
\begin{equation}\label{eq:fiber}
 \Mcal_A(w)=\left\{G\in\Pcal(\Theta):
       \int A(\theta)\,dG(\theta)=\vhat(w)\right\}.
\end{equation}
In the Gaussian model, $\Mcal_A(W)$ is the set $\Mcal_X(W)$ of reweighted NPMLEs in \eqref{eq:rrnpmle}.

Recall that a member of a convex set is \emph{extreme} if it is not a nontrivial convex combination of two other members.  We call the extreme points of $\Mcal_A(w)$ \emph{extreme optimizers}.  Because $\Mcal_A(w)$ is compact and convex, the Krein--Milman theorem represents it as the closed convex hull of its extreme optimizers, so we first bound the largest number of atoms among them,
\begin{equation}\label{eq:sext}
 \sext(w)=
 \max\left\{\#\supp(G):G\text{ is an extreme point of }\Mcal_A(w)\right\}.
\end{equation}
This maximum is well defined and lies in $\{1,\ldots,n\}$, because an optimizer is extreme exactly when it is finitely supported and the evaluation vectors of its atoms are linearly independent (\Cref{lem:extreme}).  For Gaussian kernels and almost every dataset, \Cref{prop:generic} turns a bound $\sext(w)\le m$ with $2m(d+1)\le n$ into uniqueness of the optimizer.

\subsection{Effective dimension}\label{sec:effective-dimension}

Sparsity will be controlled by how close the fitted vectors are to a low-dimensional subspace, measured relative to the ordinary fitted vector $v_0=\vhat(\one)$.  Relative fitted vectors are natural here because the weighted objective depends on $v$ through $\log v$, and $\log(v/v_0)\approx v/v_0-\one$ when $v$ is close to $v_0$.  Define the set of relative fitted vectors
\begin{equation}\label{eq:relative-set}
 \Ucal=\{v/v_0-\one:v\in\Ccal\}.
\end{equation}

\begin{definition}[Effective dimension and residual width]
\label{def:effective}
A rank-$r$ orthogonal projection $P$ on $\RR^n$ has \emph{residual width} $\eta$ if
\begin{equation}\label{eq:width}
 \sup_{u\in\Ucal}\norm{(I-P)u}_2\le\eta.
\end{equation}
We then call $r$ an \emph{effective dimension} of $\Ucal$.  Both $r$ and $\eta$ may depend on the fixed data and the evaluation map, but not on the random weights.
\end{definition}

We also need a notion of approximate optimality, which covers numerically computed solutions (\Cref{cor:approx}).  For $\tau\ge0$, a vector $\vtil\in\Ccal$ is a \emph{$\tau$-optimal weighted fit} if
\begin{equation}\label{eq:tau-opt}
 w^\top\log\vtil\ge\Psi(w)-\tau,
\end{equation}
so that the tolerance is measured in units of total weighted log-likelihood.

\begin{theorem}[Effective-dimension bound on the number of atoms]
\label{thm:effective}
There are universal constants $C,c>0$ such that the following holds.  Let $P$ be a rank-$r$ orthogonal projection with residual width $\eta$, as in \Cref{def:effective}.  Fix $q\ge1$, put $Q=q+\log(2n)$, and suppose
\begin{equation}\label{eq:effective-conditions}
 \alpha\ge C(r+Q),
 \qquad
 \eta\sqrt{\alpha nQ}\le1.
\end{equation}
Let $W_i\stackrel{\mathrm{iid}}\sim\mathrm{Gamma}(\alpha,\alpha)$.  With probability at least $1-3e^{-q}$,
\begin{equation}\label{eq:effective-support}
 \sext(W)\le C(r+q),
\end{equation}
\begin{equation}\label{eq:effective-exact-fit}
 0\le \one^\top\log v_0-\one^\top\log\vhat(W)
 \le\frac{C(r+q)}{\alpha},
 \qquad
 \norm{\log\{\vhat(W)/v_0\}}_2^2
 \le\frac{C(r+q)}{\alpha}.
\end{equation}
On the same event, simultaneously for every $0\le\tau<c$ and every $\tau$-optimal fit $\vtil$,
\begin{equation}\label{eq:effective-approx-fit}
 0\le \one^\top\log v_0-\one^\top\log\vtil
 \le C\left\{\frac{r+q}{\alpha}+\tau\right\},
 \qquad
 \norm{\log(\vtil/v_0)}_2^2
 \le C\left\{\frac{r+q}{\alpha}+\tau\right\}.
\end{equation}
\end{theorem}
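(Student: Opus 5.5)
The plan is to prove the fit bounds \eqref{eq:effective-exact-fit}--\eqref{eq:effective-approx-fit} first by a deterministic ``basic inequality'' on a high-probability weight event. The support bound \eqref{eq:effective-support} then follows from the change of variables for the weights described in the introduction. Throughout, write $u=\vhat(W)/v_0-\one\in\Ucal$ and $z_0=\log v_0$.

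\textbf{Step 1 (good event).} Define an event $E$ on the weights alone, containing three bounds.
\begin{itemize}
\item $\max_i\abs{W_i-1}\le C\sqrt{Q/\alpha}$, which is small because $\alpha\ge CQ$.
\item $\norm{W-\one}_2^2\le Cn/\alpha$, or more crudely $\le CnQ/\alpha$.
\item $\norm{P(W-\one)}_2^2\le C(r+q)/\alpha$.
\end{itemize}
These follow from Gamma tail bounds with a union bound over $i$ (this is where $\log(2n)$ enters $Q$) and a Hanson--Wright/chi-square bound for the rank-$r$ projection. Together they cost at most $2e^{-q}$.

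\textbf{Step 2 (fit bounds).} Optimality of $\vhat(W)$ for the weights $W$ and of $v_0$ for the weights $\one$ gives
\[
0\le \one^\top\log\{v_0/\vhat(W)\}\le (W-\one)^\top\log\{\vhat(W)/v_0\}.
\]
The first-order condition at $v_0$ gives $\one^\top u\le0$ for every $u\in\Ucal$. Hence the left side is at least $\sum_i\{u_i-\log(1+u_i)\}$, which is comparable to $\norm{\log(1+u)}_2^2$ once $\max_i|u_i|$ is controlled. To get that control, I would use the weighted KKT inequality $\sum_iW_iv_i/\vhat_i(W)\le W_+$ for all $v\in\Ccal$, applied with $v=v_0$, together with Step 1.

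For the right side, split $\log(1+u)=u+O(u^2)$ and write $u=Pu+(I-P)u$.
\begin{itemize}
\item The projected term is bounded by $\norm{P(W-\one)}\,\norm{u}$.
\item The residual term is bounded by $\norm{W-\one}\,\eta\le \eta\sqrt{CnQ/\alpha}\le C/\alpha$, using \eqref{eq:effective-conditions}.
\item The quadratic remainder is at most $\max_i|W_i-1|\cdot\norm{u}^2$, which is absorbed into the left side because $\max_i|W_i-1|$ is small.
\end{itemize}
Solving the resulting quadratic inequality gives both bounds in \eqref{eq:effective-exact-fit}. For a $\tau$-optimal fit $\vtil$ the same chain holds with an extra $+\tau$ on the right, which gives \eqref{eq:effective-approx-fit}. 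The requirement $\tau<c$ keeps $\vtil/v_0$ in the range where the quadratic comparison is valid.

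\textbf{Step 3 (support via change of variables).} $\Psi$ is convex and positively $1$-homogeneous, with $\nabla\Psi(w)=z(w)$ wherever $\Psi$ is differentiable; this holds almost everywhere, and in fact $z$ is locally Lipschitz by \Cref{lem:regularity}. I would invoke \Cref{lem:hessian-summary}: if an extreme optimizer has $k$ atoms, then $\diag(w)^{1/2}Dz(w)\diag(w)^{1/2}$ has at least $k-1$ eigenvalues $\ge1$. Consider the map
\[
T(w)_i=w_i\{1+z_i(w)-z_{0,i}\},
\]
which is injective by \Cref{lem:determinant-moment}. Its Jacobian is $\diag(1+z-z_0)+\diag(w)Dz(w)$. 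On $E$ the diagonal part is $1+o(1)$ and $\diag(w)Dz\succeq0$, so $\det DT(w)\ge e^{c\,\sext(w)-C}$.

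Let $p$ be the product Gamma density. Injectivity and the area formula give
\[
\int_E p(T(w))\,\abs{\det DT(w)}\,dw\le1.
\]
It therefore suffices to show $\log p(w)-\log p(T(w))\le C(r+q)$ on $E$. The Gamma log-density difference is
\[
\sum_i\bigl[(\alpha-1)\log\{1+\Delta_i\}-\alpha w_i\Delta_i\bigr],\qquad \Delta_i=z_i(w)-z_{0,i}.
\]
Expanding to second order leaves a term of order $-\alpha\sum_i(w_i-1)\Delta_i$ plus $\alpha\norm{\Delta}^2$. Both are $O(r+q)$ by Step 2, since the projection argument bounds $\alpha\abs{(W-\one)^\top\Delta}$ and $\alpha\norm{\Delta}^2$ by $C(r+q)$.

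This yields $\EE[e^{c\,\sext(W)}\one_E]\le e^{C(r+q)}$. Markov's inequality then gives \eqref{eq:effective-support} off an event of probability $e^{-q}$, for a total failure probability of $3e^{-q}$.

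\textbf{Main obstacle.} I expect the hardest step to be this last density comparison. The difficulty is making the bound $\alpha\,\abs{(W-\one)^\top\Delta}\lesssim r+q$ hold pointwise on $E$ rather than only in expectation. This bound is exactly where the effective dimension replaces $n$, and the residual-width condition $\eta\sqrt{\alpha nQ}\le1$ has to be used sharply there. A secondary technical point is handling the almost-everywhere differentiability of $z$ within the area formula.
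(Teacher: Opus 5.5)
Your architecture is the paper's. Step~1 is its event $E_q$: a coordinatewise Gamma bound with $Q=q+\log(2n)$, plus a bound on $\norm{P(W-\one)}_2$. Step~3 is its determinant-moment lemma: the map $w_i\{1+t_i(w)\}$, injectivity from monotonicity of $\nabla\Psi$, the rank-$(\sext-1)$ Hessian bound and the area formula, followed by Markov's inequality.

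\textbf{The gap: localizing $\tau$-optimal fits in Step~2.} Write $\xi=W-\one$, $\omega_P=\norm{P\xi}_2$, $\omega=\norm{\xi}_2$. Two of your estimates presuppose that $\norm{u}_\infty$ is already small: the remainder bound $\abs{\xi^\top\{u-\log(\one+u)\}}\le\max_i\abs{W_i-1}\norm{u}_2^2$, and the comparison of $\sum_i\{u_i-\log(1+u_i)\}$ with $\norm{\log(\one+u)}_2^2$.
\begin{itemize}
\item For the exact fit, your KKT route does supply this. Combined with $\one^\top u\le0$, the inequality $\sum_iW_iv_{0,i}/\vhat_i(W)\le W_+$ becomes $\sum_iW_iu_i^2/(1+u_i)\le\xi^\top u\le\omega_P\norm{u}_2+\eta\omega$. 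This rules out coordinates above $1$ and then bounds $\norm{u}_2$.
\item A $\tau$-optimal $\vtil$ is not a maximizer, so it satisfies no KKT inequality. Your sentence that $\tau<c$ ``keeps $\vtil/v_0$ in the range where the quadratic comparison is valid'' is precisely the claim that needs proof.
\end{itemize}
The paper proves it by radial contraction. Let $F_W(u)=W^\top\log(\one+u)$ and $\rho=1/16$. On $\norm{u}_2\le1/4$, the bound $\log(1+s)\le s-s^2/3$ and $\one^\top u\le0$ give $F_W(u)\le\omega_P\norm{u}_2+\eta\omega-\norm{u}_2^2/6$, which is at most $-\rho^2/12$ on the sphere $\norm{u}_2=\rho$. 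Now $\Ucal$ is convex, $F_W$ is concave and $F_W(0)=0$. So a $\tau$-optimal $\tilde u=\vtil/v_0-\one$ with $\norm{\tilde u}_2\ge\rho$ would give a point $\rho\tilde u/\norm{\tilde u}_2\in\Ucal$ on that sphere with $F_W\ge-\tau>-\rho^2/12$, a contradiction. This single argument localizes exact and approximate fits together, with no optimality condition at $\vhat(W)$.

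\textbf{An alternative repair.} You can also make your own basic inequality global. Since $u_i-\log(1+u_i)\ge0$, one has $\xi^\top\{\tilde u-\log(\one+\tilde u)\}\le\norm{\xi}_\infty\sum_i\{\tilde u_i-\log(1+\tilde u_i)\}$ with no localization. Hence
\[
(1-\norm{\xi}_\infty)\sum_i\{\tilde u_i-\log(1+\tilde u_i)\}\le\omega_P\norm{\tilde u}_2+\eta\omega+\tau .
\]
Together with $s-\log(1+s)\ge s^2/\{2(1+\max(s,0))\}$, this excludes coordinates above $1$ once $\omega_P$, $\eta\omega$ and $\tau$ are small. It then gives $\norm{\tilde u}_2^2\le C(\omega_P^2+\eta\omega+\tau)$.

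\textbf{Three smaller points.}
\begin{itemize}
\item In Step~1, a generic Hanson--Wright bound for sub-exponential (Gamma) entries is too weak to give an $e^{-q}$ tail at level $C(r+q)/\alpha$. Use instead $\log\EE e^{\lambda\sqrt\alpha\,v^\top\xi}\le\lambda^2$ for $\abs{\lambda}\le\sqrt\alpha/2$ together with a $1/2$-net of the range of $P$; this is where $\alpha\ge C(r+Q)$ enters.
\item In Step~3, diagonal entries $1+o(1)$ do not bound $\prod_i\{1+t_i\}$ below, since there are $n$ factors. Either use $\sum_i\log(1+t_i)\ge\one^\top t-\norm{t}_2^2\ge-C(r+q)/\alpha$ from Step~2, or, as the paper does, let this factor cancel against the Gamma density ratio to produce exactly $e^{-\alpha\{Z(W)+\norm{t(W)}_2^2\}}$.
\item What you call the main obstacle is not one. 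Step~2 is deterministic on the good event and yields $0\le Z(W)\le C(\omega_P^2+\eta\omega)$, hence $\alpha Z(W)\le C(r+q)$ pointwise there.
\end{itemize}
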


\Cref{thm:effective} shows how concentrated the weights must be: by \eqref{eq:effective-conditions}, the concentration need only exceed a constant multiple of the effective dimension plus a logarithmic term.  Larger concentrations bring the weighted objective closer to the ordinary one but require a smaller residual width, and for Gaussian kernels analyticity allows both conditions to hold even when $\alpha$ is polynomially large (\Cref{prop:gaussian-width} and \Cref{cor:tiny}).  The bound \eqref{eq:effective-support} concerns exact extreme optimizers: for a mixing distribution whose fitted vector is only $\tau$-optimal, \eqref{eq:effective-approx-fit} controls the log-likelihood gap and the log-fit discrepancy but not the number of atoms.

\subsection{Curvature and change of variables}\label{sec:hessian}

We now explain where the bound \eqref{eq:effective-support} of \Cref{thm:effective} comes from.  By \Cref{lem:regularity}, the value function $\Psi$ is convex and continuously differentiable with $\nabla\Psi=z$, and $z$ is locally Lipschitz, so by Rademacher's theorem the Hessian $\nabla^2\Psi$ exists almost everywhere.  Now suppose that an extreme optimizer has $k$ atoms.  Because the evaluation vectors of its atoms are linearly independent (\Cref{lem:extreme}), moving mass among the atoms while keeping the total mass fixed produces a $(k-1)$-dimensional space of perturbations of the fitted vector, each feasible in both directions.  A second-order comparison of the weighted objective along these perturbations yields the following lower bound on the curvature of $\Psi$.

\begin{lemma}[Support--sensitivity inequality]
\label{lem:hessian-summary}
The Hessian $\nabla^2\Psi(w)$ exists at almost every $w\in(0,\infty)^n$, and the exceptional set depends only on the evaluation map and not on a selected optimizer.  At each point where it exists and for every extreme optimizer $G\in\Mcal_A(w)$ with $k$ atoms, there is an orthogonal projection $P_G$ of rank $k-1$ such that $\diag(w)^{1/2}\nabla^2\Psi(w)\diag(w)^{1/2}\succeq P_G$.  In particular, for almost every $w$ there is an orthogonal projection $P_*(w)$ of rank $\sext(w)-1$ such that
\begin{equation}\label{eq:hessian-summary}
 \diag(w)^{1/2}\nabla^2\Psi(w)\diag(w)^{1/2}\succeq P_*(w).
\end{equation}
\end{lemma}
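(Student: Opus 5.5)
The plan is to read the curvature bound off a second-order lower bound on $\Psi$ obtained by perturbing the masses of $G$, and to compare it with the second-order Taylor expansion of $\Psi$ at a point of twice differentiability.

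\textbf{Exceptional set.} By \Cref{lem:regularity}, $\nabla\Psi=z$ and $z$ is locally Lipschitz on $(0,\infty)^n$. Rademacher's theorem then shows that $z$ is differentiable at every $w$ outside a Lebesgue-null set $E$. The set $E$ is defined only through $\Psi$, hence only through $\Ccal$ and the evaluation map, and not through any selected optimizer. At $w\notin E$, differentiability of $\nabla\Psi$ gives the expansion
\[
\Psi(w+\delta)=\Psi(w)+\delta^\top z(w)+\tfrac12\delta^\top\nabla^2\Psi(w)\delta+o(\norm{\delta}^2).
\]
This follows by integrating $z$ along the segment from $w$ to $w+\delta$.

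\textbf{Comparison family.} Fix $w\notin E$ and an extreme optimizer $G=\sum_{j\le k}p_j\delta_{\theta_j}$ with every $p_j>0$. Put $a_j=A(\theta_j)$; by \Cref{lem:extreme} these vectors are linearly independent. Let
\[
S=\operatorname{span}\{a_j-a_1:2\le j\le k\},
\]
which has dimension $k-1$. Every $v\in\vhat(w)+S$ with $\norm{v-\vhat(w)}$ small enough lies in $\Ccal$, since the masses stay positive.

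First-order optimality in both directions along $S$ gives $\sum_i w_ih_i/\vhat_i=0$ for every $h\in S$. Hence, for small $\delta$ and $h\in S$ with $\norm{h}=O(\norm{\delta})$,
\[
\Psi(w+\delta)\ge(w+\delta)^\top\log(\vhat+h)
=\Psi(w)+\delta^\top z(w)+\delta^\top(h/\vhat)-\tfrac12\sum_i\frac{w_ih_i^2}{\vhat_i^2}+O(\norm{\delta}^3).
\]

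\textbf{Optimizing over $h$.} Change variables by $y=\diag(w)^{1/2}\diag(\vhat)^{-1}h$. As $h$ ranges over $S$, $y$ ranges over a subspace $V$ of dimension $k-1$, because the linear map is invertible. Writing $u=\diag(w)^{-1/2}\delta$, the quadratic part becomes $u^\top y-\tfrac12\norm{y}^2$. Its maximum over $y\in V$ equals $\tfrac12\norm{P_Vu}^2$ and is attained at $y=P_Vu$, which is of order $\norm{\delta}$, so the smallness requirement on $h$ is met.

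\textbf{Conclusion.} Comparing with the Taylor expansion, dividing by $\norm{\delta}^2$ and letting $\delta\to0$ along each direction gives
\[
u^\top\diag(w)^{1/2}\nabla^2\Psi(w)\diag(w)^{1/2}u\ge\norm{P_Vu}^2\quad\text{for all }u.
\]
This is the claim with $P_G=P_V$. The final statement follows by choosing, for $w\notin E$, an extreme optimizer attaining $\sext(w)$, which exists because \Cref{lem:extreme} places $\sext(w)$ in $\{1,\dots,n\}$, and setting $P_*(w)=P_G$ for that optimizer.

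\textbf{Main obstacle.} The delicate step is making the remainder terms uniform. The Taylor remainder of $\Psi$ is only $o(\norm{\delta}^2)$, while the comparison inequality has an $O(\norm{\delta}^3)$ error that depends on $\min_i\vhat_i>0$ and on how far the positive masses $p_j$ allow $h$ to move. Both errors are controlled for fixed $w$ and $G$, and nothing uniform over $G$ is needed, because the bound is claimed pointwise at each $w\notin E$.
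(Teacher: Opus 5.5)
Your proposal is correct and follows essentially the same route as the paper. It applies Rademacher's theorem to the locally Lipschitz gradient $z$. It then uses the $(k-1)$-dimensional space of two-sided feasible mass perturbations, on which first-order optimality kills the linear term, and compares the resulting second-order lower bound on $\Psi(w+\delta)$ with the Taylor expansion at a point where $z$ is differentiable. Your explicit maximization over $y\in V$ is the paper's choice $u=\diag(w)^{-1/2}P_G\diag(w)^{-1/2}h$ written in rescaled coordinates, with $V=\diag(w)^{1/2}\mathcal S$.
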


The matrix inequality \eqref{eq:hessian-summary} is stronger than a bound on the trace of the Hessian: it yields a determinant factor that is exponential in the number of atoms.  To exploit it, write $t(w)=z(w)-z(\one)$ for the logarithmic ratio of the weighted and ordinary fitted vectors, let $Z(w)=(w-\one)^\top t(w)$, and consider the locally Lipschitz map $\mathcal F$ with coordinates
\begin{equation}\label{eq:change-map}
 \mathcal F_i(w)=w_i\{1+t_i(w)\}.
\end{equation}
On a localization event $E$, on which the weights are close to one and hence the weighted fit is close to $v_0$, the map $\mathcal F$ is injective and, by \eqref{eq:hessian-summary}, its Jacobian determinant is at least $\prod_i\{1+t_i(w)\}(17/9)^{\sext(w)-1}$ at almost every point.  Comparing the Gamma density at $\mathcal F(w)$ with that at $w$ and applying the injective area formula \citep{evansgariepy2015} then gives the moment bound of \Cref{lem:determinant-moment},
\begin{equation}\label{eq:moment-summary}
 \EE\left[\bm 1_{\{W\in E\}}
 \left(\frac{17}{9}\right)^{\sext(W)-1}
 e^{-\alpha\{Z(W)+\norm{t(W)}_2^2\}}\right]\le1.
\end{equation}
Together with the concentration of the projected weight perturbation $P(W-\one)$, the residual-width bound \eqref{eq:width} bounds the exponent $\alpha\{Z(W)+\norm{t(W)}_2^2\}$ in \eqref{eq:moment-summary} by $O(r+q)$ with high probability, and Markov's inequality then yields \eqref{eq:effective-support}.  The change of variables \eqref{eq:change-map} is what makes the sparsity exact: the moment bound \eqref{eq:moment-summary} controls the number of atoms itself, rather than the size of small masses or an approximation error.

\section{Gaussian mixtures: effective dimension, uniqueness and accuracy}
\label{sec:gaussian}

This section specializes \Cref{thm:effective} to Gaussian mixtures and supplies the remaining ingredients of the proof of \Cref{thm:main}.  For a dataset $x=(x_1,\ldots,x_n)$ with radius $T=\max_i\norm{x_i}$, the Gaussian evaluation map is
\[
 A(\theta)=\bigl(\phi_d(x_1-\theta),\ldots,
                    \phi_d(x_n-\theta)\bigr),
 \qquad \theta\in K.
\]
\Cref{sec:gaussian-width} shows that this map satisfies the hypothesis of \Cref{thm:effective} with effective dimension $O(r_n)$, and \Cref{sec:uniqueness} shows that the resulting sparsity of the extreme optimizers implies uniqueness; both subsections treat the dataset as fixed.  \Cref{sec:hellinger} returns to the sampling model \eqref{eq:model} and bounds the Hellinger loss.

\subsection{Analytic low-rank approximation}\label{sec:gaussian-width}

The identity
\[
 \phi_d(x_i-\theta)=\phi_d(x_i)
 \exp\{x_i^\top\theta-\norm{\theta}^2/2\}
\]
reduces the geometry of Gaussian fitted vectors to that of multivariate exponential functions, and truncating their Taylor series yields the required effective dimension.

\begin{proposition}[Gaussian residual-width bound]
\label{prop:gaussian-width}
Suppose that \Cref{ass:standing} holds and let $G_0\in\Pcal(K)$ be arbitrary.  Put $v_{0,i}=f_{G_0}(x_i)$.  For every integer $m\ge0$, there is an orthogonal projection $P_m$ of rank at most
\begin{equation}\label{eq:rank-m}
 1+\binom{m+d}{d}
\end{equation}
whose range contains $\one$ and such that
\begin{equation}\label{eq:eta-m}
 \sup_{G\in\Pcal(K)}
 \norm{(I-P_m)\{f_G(x)/v_0-\one\}}_2
 \le \eta_m:=\sqrt n\,\eps_m,
\end{equation}
where
\begin{equation}\label{eq:eps-m}
 \eps_m=\exp(2ST+S^2/2)\frac{(ST)^{m+1}}{(m+1)!}.
\end{equation}
When $ST=0$, one may take $\eps_m=0$.
\end{proposition}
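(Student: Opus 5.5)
The plan is to build $P_m$ explicitly as the orthogonal projection onto a span of polynomial-type vectors, and then to bound $(I-P_m)u$ by the distance from $u$ to that span. By the identity preceding the statement,
\[
 f_G(x_i)=\phi_d(x_i)\int_K e^{x_i^\top\theta}e^{-\norm{\theta}^2/2}\,dG(\theta).
\]
I will Taylor-expand $e^{s}$ with $s=x_i^\top\theta$ to order $m$:
\[
 e^{s}=\sum_{k=0}^m\frac{s^k}{k!}+R_m(s),
 \qquad
 \abs{R_m(s)}\le e^{\abs s}\frac{\abs s^{m+1}}{(m+1)!}.
\]
Expanding $(x_i^\top\theta)^k$ by the multinomial theorem writes the polynomial part as $\sum_{\abs\beta\le m}x_i^\beta c_\beta(G)$, with $c_\beta(G)=\int \theta^\beta e^{-\norm\theta^2/2}\,dG(\theta)/\beta!$ (multi-index notation).

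I will therefore let $V_m$ be the span of $\one$ together with the vectors $b_\beta=\bigl(\phi_d(x_i)x_i^\beta/v_{0,i}\bigr)_{i\le n}$ for $\abs\beta\le m$. There are $\binom{m+d}{d}$ such multi-indices, so $\dim V_m\le 1+\binom{m+d}{d}$. I take $P_m$ to be the orthogonal projection onto $V_m$; its range contains $\one$ by construction. If $\dim V_m$ happens to be smaller, the rank is still at most \eqref{eq:rank-m}.

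For the width bound, fix $G$ and write $u=f_G(x)/v_0-\one$. Put
\[
 p_i=\phi_d(x_i)\sum_{\abs\beta\le m}x_i^\beta c_\beta(G)/v_{0,i}.
\]
Then $p-\one\in V_m$, so
\[
 \norm{(I-P_m)u}_2\le\norm{u-(p-\one)}_2=\norm{e}_2,
 \qquad
 e_i=\frac{\phi_d(x_i)}{v_{0,i}}\int R_m(x_i^\top\theta)e^{-\norm\theta^2/2}\,dG(\theta).
\]
Since $\abs{x_i^\top\theta}\le ST$ and $e^{-\norm\theta^2/2}\le1$, the numerator is at most $\phi_d(x_i)e^{ST}(ST)^{m+1}/(m+1)!$. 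The key lower bound is on the denominator $v_{0,i}=f_{G_0}(x_i)$. It uses the same identity together with $x_i^\top\theta-\norm\theta^2/2\ge -ST-S^2/2$ on $K$, which gives
\[
 v_{0,i}\ge\phi_d(x_i)e^{-ST-S^2/2}.
\]
Dividing yields $\abs{e_i}\le\eps_m$ for every $i$, hence $\norm e_2\le\sqrt n\,\eps_m$ uniformly in $G$. The bound is uniform because neither estimate depended on $G$ beyond $G\in\Pcal(K)$. This uniformity, and its independence from $G_0$, which is what makes the bound usable at $v_0=\vhat(\one)$, is the one point I expect to need care. It is handled by the pointwise lower bound on $v_{0,i}$, which holds for arbitrary $G_0\in\Pcal(K)$.

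Finally, the case $ST=0$ follows directly. If $S=0$, then $K=\{0\}$, and every $f_G(x)$ equals $v_0$, so $u=0$. If $T=0$, then all $x_i=0$, and $f_G(x)/v_0$ is a constant vector lying in $\mathrm{span}(\one)\subseteq V_m$. In either case the residual is zero, so one may take $\eps_m=0$. This is also consistent with $R_m(0)=0$ in the general argument.
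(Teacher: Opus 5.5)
Your proposal is correct and takes essentially the same route as the paper. It uses the same span of the vectors $(\phi_d(x_i)x_i^\beta/v_{0,i})_{i\le n}$, $\abs{\beta}\le m$, with $\one$ adjoined, the same Taylor remainder bound $e^{ST}(ST)^{m+1}/(m+1)!$, and the same lower bound $v_{0,i}\ge\phi_d(x_i)e^{-ST-S^2/2}$, giving coordinatewise error at most $\eps_m$ uniformly in $G$; your separate handling of $S=0$ and $T=0$ is, if anything, more explicit than the paper's one-line remark.
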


If $T\le C_0\sqrt{\log n}$ and $m\asymp\log n/\log\log n$, the rank bound \eqref{eq:rank-m} is $O(r_n)$, while $\eps_m$ falls below any prescribed inverse power of $n$ once the constant in $m$ is large enough.  The second condition in \eqref{eq:effective-conditions} therefore holds for every concentration that is at most polynomial in $n$.  With the balanced concentration, \Cref{prop:gaussian-width} and \Cref{thm:effective} give the following bounds, conditional on the data.

\begin{corollary}[Conditional structural and likelihood bounds]
\label{cor:conditional}
Suppose that \Cref{ass:standing} holds.  Fix $C_0<\infty$ and a dataset with $T\le C_0\sqrt{\log n}$.  Let $r_n,\bar r_n$ be as in \eqref{eq:scales}, set $\alpha_n=\bar r_n\log n$, and draw independent $\mathrm{Gamma}(\alpha_n,\alpha_n)$ weights.  For every $b>0$, with conditional probability at least $1-C_bn^{-b}$,
\[
 \sext(W)\le C_b\bar r_n,
 \qquad
 \max_i\abs{W_i-1}\le C_b\bar r_n^{-1/2},
\]
and every reweighted NPMLE satisfies the two bounds in \Cref{thm:main}(iii).
\end{corollary}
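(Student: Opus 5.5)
We prove \Cref{cor:conditional} by applying \Cref{thm:effective} to the Gaussian evaluation map, with the projection supplied by \Cref{prop:gaussian-width}. The Gamma tail bound is handled separately.

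\emph{Choice of parameters.} Take $G_0$ to be any ordinary NPMLE, so that $v_0=\vhat(\one)$ is the ordinary fitted vector. Then $\Ucal$ coincides with the set controlled in \eqref{eq:eta-m}. Set $q=b\log n$, so that $3e^{-q}=3n^{-b}$ and $Q=b\log n+\log(2n)\le C_b\log n$. Choose $m=\lceil \kappa\log n/\log\log n\rceil$ with a constant $\kappa=\kappa(b,C_0,S)$ to be fixed.

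\emph{Rank.} The rank of $P_m$ is $r\le 1+\binom{m+d}{d}\le C_\kappa m^d\le C_b r_n$.

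\emph{Residual width.} With $T\le C_0\sqrt{\log n}$, Stirling's formula gives
\[
 \log\eps_m\le 2SC_0\sqrt{\log n}+S^2/2+(m+1)\log(eST/(m+1)).
\]
Since $ST\le C\sqrt{\log n}$ while $m+1\ge\kappa\log n/\log\log n$, we have $\log\{eST/(m+1)\}\le -\tfrac12\log\log n+O(\log\log\log n)$. Hence $\log\eps_m\le -(\kappa/2-o(1))\log n+O(\sqrt{\log n})$. Choosing $\kappa$ large makes $\eps_m\le n^{-5}$, say, for $n\ge n_0$. Then
\[
 \eta\sqrt{\alpha_n nQ}\le\sqrt n\,n^{-5}\sqrt{n(\log n)^3C_b}\le1 .
\]

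\emph{Concentration.} The other condition in \eqref{eq:effective-conditions}, $\alpha_n=\bar r_n\log n\ge C(r+Q)$, holds for large $n$ because $r+Q\le C_b\bar r_n$ while $\log n\to\infty$.

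\emph{Conclusions from \Cref{thm:effective}.} With probability at least $1-3n^{-b}$ we get $\sext(W)\le C(r+q)\le C_b\bar r_n$. We also get
\[
 \one^\top\log v_0-\one^\top\log\vhat(W)\le \frac{C(r+q)}{\alpha_n}\le \frac{C_b\bar r_n}{\bar r_n\log n}=\frac{C_b}{\log n},
\]
and the same bound for $\norm{\log\{\vhat(W)/v_0\}}_2^2$. Every reweighted NPMLE has fitted vector $\vhat(W)$, so these two quantities are exactly $\Delta L_n(\Ghat_W)$ and $D_n(\Ghat_W)$. This gives \Cref{thm:main}(iii).

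\emph{Weight perturbation.} A standard Gamma tail bound applies. Since $W_i-1$ is a normalized sum of $\alpha$ exponentials (sub-gamma with variance $1/\alpha$ and scale $1/\alpha$),
\[
 \PP(\abs{W_i-1}>t)\le 2\exp\{-c\alpha\min(t^2,t)\}.
\]
Take $t=C_b\bar r_n^{-1/2}$, which is at most $1$ for large $n$. Then $\alpha t^2=C_b^2\log n$. With $C_b^2 c\ge b+1$, a union bound over $n$ coordinates gives failure probability at most $2n^{-b}$.

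\emph{Combining.} Intersecting this event with the event from \Cref{thm:effective} gives total failure probability at most $5n^{-b}$.

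\emph{Main obstacle.} The only delicate step is the Stirling estimate for $\eps_m$. The factor $e^{2ST}=e^{O(\sqrt{\log n})}$ and the $\sqrt n$ in $\eta_m$ must both be absorbed by the super-polynomial decay of $(ST)^{m+1}/(m+1)!$. This requires $m\log(m/ST)\gtrsim\log n$, which holds precisely because $m/ST\asymp\sqrt{\log n}/\log\log n\to\infty$. All constants then depend only on $d,K,S,C_0,b$.
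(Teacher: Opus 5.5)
Your proposal is correct and follows the paper's proof essentially step for step. Both arguments take the ordinary fitted vector as $v_0$ and use the projection of \Cref{prop:gaussian-width} with $m\asymp\log n/\log\log n$. Both bound $\eps_m$ via $(m+1)!\ge\{(m+1)/e\}^{m+1}$, apply \Cref{thm:effective} with $q\asymp_b\log n$, and control the weights by a Gamma Chernoff bound plus a union bound.

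The only blemish is minor: for $d\ge2$, $\alpha_nQ$ is of order $(\log n)^{d+2}$, not $(\log n)^3$. This is harmless, since the polynomial factor $n^{-4}$ absorbs it.
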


\Cref{cor:conditional} holds for every fixed dataset of radius at most $C_0\sqrt{\log n}$; it involves no sampling model and no correct-specification assumption.

\subsection{From sparse extreme optimizers to a unique maximizer}\label{sec:uniqueness}

A bound on the number of atoms of the extreme optimizers does not by itself make every optimizer sparse, since a nontrivial convex combination of extreme optimizers can have more atoms, and a limit of such combinations can have infinite support.  For Gaussian evaluations and generic data, this cannot happen once all extreme optimizers are sufficiently sparse.

\begin{proposition}[Generic evaluation independence]
\label{prop:generic}
For each $n,d$, there is a Borel set $\mathcal X_{n,d}\subset(\RR^d)^n$ whose complement has Lebesgue measure zero such that the following holds simultaneously for every $x\in\mathcal X_{n,d}$.  If $k(d+1)\le n$ and $\theta_1,\ldots,\theta_k\in\RR^d$ are distinct, then the $n\times k$ matrix
\begin{equation}\label{eq:evaluation-matrix}
 \bigl(\phi_d(x_i-\theta_j)\bigr)_{i\le n,j\le k}
\end{equation}
has linearly independent columns.  The locations may depend on the data.  Consequently, if every extreme point of a Gaussian optimizer fiber has at most $m$ atoms and $2m(d+1)\le n$, then that fiber is a singleton and its unique member has at most $m$ atoms.
\end{proposition}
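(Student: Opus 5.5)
The plan is to reduce linear independence to the nonvanishing of a polynomial-exponential determinant, and then get uniqueness by a short convex-geometry argument. Write $\phi_d(x_i-\theta_j)=\phi_d(x_i)\exp\{x_i^\top\theta_j-\norm{\theta_j}^2/2\}$. Since the row factors $\phi_d(x_i)$ and column factors $\exp(-\norm{\theta_j}^2/2)$ are positive, the columns of \eqref{eq:evaluation-matrix} are independent if and only if the vectors $(e^{x_i^\top\theta_j})_{i\le n}$, $j\le k$, are independent. A dependence $\sum_jc_je^{x_i^\top\theta_j}=0$ for all $i$ means that every data point $x_i$ is a zero of the nonzero exponential polynomial $h_{c,\theta}(y)=\sum_jc_je^{y^\top\theta_j}$. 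This is nonzero as a function because exponentials with distinct frequencies are linearly independent.

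Let $\mathcal Y$ be the zero set. The key dimension count is as follows. The set of pairs $(c,\theta)$ with $c$ projectively normalized has dimension $(k-1)+kd=k(d+1)-1$. Consider the incidence variety $\{(c,\theta,y_1,\ldots,y_n):h_{c,\theta}(y_i)=0\ \forall i\}$. For fixed $(c,\theta)$, each $y_i$ lies in a real-analytic hypersurface of $\RR^d$, which has dimension $d-1$, so the fiber has dimension $n(d-1)$. The projection to $(\RR^d)^n$ therefore has image of dimension at most $k(d+1)-1+n(d-1)$. This is less than $nd$ exactly when $k(d+1)-1<n$, which holds under $k(d+1)\le n$. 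The image is then Lebesgue-null.

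To make this rigorous without real-analytic geometry, I will use a Fubini/induction argument. Fix the first $k(d+1)-1$ points generically and show that they determine only finitely many, or at most a measure-zero family of, $(c,\theta)$ vanishing on them. Then show that the next point avoids the union of the corresponding zero sets. An alternative I would try first is to parametrize by $(c,\theta)$ ranging over a countable union of compact pieces. On each piece I would apply a Sard/area-formula style bound: a locally Lipschitz map from a manifold of dimension $k(d+1)-1+n(d-1)<nd$ into $\RR^{nd}$ has null image. This handles data-dependent locations automatically, since the bad set $\mathcal X_{n,d}^c$ is defined as the union over all $(c,\theta)$. Borel measurability can be obtained by taking $\mathcal X_{n,d}$ to be the complement of a Borel null superset, for instance a $\sigma$-compact image.

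The main obstacle is this step, because the zero set of $h_{c,\theta}$ may be singular. It is not obviously a Lipschitz image of a $(d-1)$-dimensional set, and singular loci must be controlled uniformly over $(c,\theta)$. I would handle it by stratifying via the implicit function theorem on the set where $\nabla_y h\ne0$. The locus $\{h=\nabla h=0\}$ is treated by noting that $\nabla h$ is again an exponential polynomial of the same frequencies, so one can pass to a derivative that does not vanish. This recursion terminates because some finite-order derivative of a nonzero exponential polynomial is nonzero.

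For the consequence, suppose every extreme point of the fiber has at most $m$ atoms, with $2m(d+1)\le n$, and suppose the fiber contains two distinct extreme points $G_1\ne G_2$. Both have the same fitted vector $\hat v$, so $\int A\,d(G_1-G_2)=0$. The signed measure $G_1-G_2$ is nonzero and supported on at most $2m$ distinct points, which gives a nontrivial linear dependence among at most $2m$ Gaussian evaluation columns. This contradicts the first part with $k\le 2m$. Hence the fiber has a single extreme point. Since the fiber is a nonempty compact convex set, Krein--Milman shows that it equals that point, which has at most $m$ atoms.
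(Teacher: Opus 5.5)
Your proposal is correct and follows essentially the same route as the paper. Both arguments normalize a dependence to $k(d+1)-1$ parameters. Both handle singular zeros by passing to a derivative of minimal order with a nonvanishing partial, so that the implicit function theorem gives a codimension-$n$ stratum in the joint parameter--data space, which projects Lipschitz-ly onto a null set of datasets. Both then get uniqueness from the difference of two extreme optimizers together with Krein--Milman. Your alternative of fixing the first $k(d+1)-1$ points and inducting is unnecessary and is not needed for the argument.
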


\Cref{prop:generic} holds simultaneously for all of the uncountably many possible locations, including locations chosen after seeing the data; this is what allows it to be applied to the atoms of an optimizer, which depend on the data.  Its proof views a linear dependence as a point of an analytic incidence set with $k(d+1)-1$ free parameters.  Since this is fewer than the $n$ equations imposed by the sample, the incidence set projects onto a Lebesgue-null set of datasets.  Uniqueness then follows by comparing two extreme optimizers and invoking the Krein--Milman theorem.

Uniqueness matters because several quantities of interest depend on the mixing distribution beyond its fitted vector.  Examples are the fitted density $f_{\Ghat}(x)$ at points $x$ other than the sample points and the empirical Bayes posterior mean $\int\theta\,\phi_d(x-\theta)\,d\Ghat(\theta)/f_{\Ghat}(x)$, which is the basis of NPMLE-based denoising \citep{koenkermizera2014,sahaguntuboyina2020,soloffetal2025}.  When the maximizer is not unique, different maximizers, and hence different algorithms, can return different values of these quantities.  For the unrestricted problem, the ordinary NPMLE is unique for every dataset in one dimension \citep{lindsay1983partii}.  Recent work establishes uniqueness for Lebesgue-almost every dataset in every dimension \citep{wang2026uniqueness}.

Reweighting is what makes \Cref{prop:generic} applicable.  The proposition requires every extreme optimizer to have at most $m$ atoms with $2m(d+1)\le n$, whereas in several dimensions the classical general bound on the support size of an extreme ordinary optimizer is $n$ atoms.  For the reweighted NPMLE, \Cref{cor:conditional} gives $m=C_b\bar r_n$, which satisfies $2m(d+1)\le n$ for all sufficiently large $n$.  The random weights therefore supply the sparsity that this proof of uniqueness via \Cref{prop:generic} requires; the contribution is the exact polylogarithmic support bound, rather than almost-sure uniqueness alone.

\subsection{Hellinger accuracy}\label{sec:hellinger}

We now return to the sampling model \eqref{eq:model} with $G^*\in\Pcal(K)$ and convert the bound $\Delta L_n(\Ghat_W)\le C_b/\log n$ of \Cref{cor:conditional} into a bound on the Hellinger loss; this is the only step in the proof of \Cref{thm:main} that uses correct specification.  Since $L_n(\Ghat_0)\ge L_n(G^*)$, every $G$ with $\Delta L_n(G)\le1$ has log-likelihood at least $L_n(G^*)-1$.  \Cref{thm:near-mle} bounds the Hellinger loss of every mixing distribution with log-likelihood at least $L_n(G^*)-1$, simultaneously and on a single event that depends only on the data, so it applies to the reweighted NPMLE whatever the value of the weights.  Bounds of this type were proved for the ordinary NPMLE by \citet{zhang2009} in one dimension and by \citet{sahaguntuboyina2020} in several dimensions, and they also apply to approximate maximizers of the likelihood.

\begin{proposition}[Uniform Hellinger bound]
\label{thm:near-mle}
Suppose that \Cref{ass:standing} holds and that the data follow \eqref{eq:model} with $G^*\in\Pcal(K)$.  For every $b>0$, there are constants $C_b<\infty$ and $n_0$ depending only on $d,K,S,b$ such that, uniformly over $G^*\in\Pcal(K)$, the following event has probability at least $1-C_bn^{-b}$ for every $n\ge n_0$: simultaneously for all $G\in\Pcal(K)$,
\begin{equation}\label{eq:near-truth}
 L_n(G)\ge L_n(G^*)-1
 \quad\Longrightarrow\quad
 H^2(f_G,f_{G^*})
 \le C_b\frac{(\log n)^{d+1}}
 {n(\log\log n)^d}.
\end{equation}
\end{proposition}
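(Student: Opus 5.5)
The plan follows the likelihood-ratio and finite-net argument of \citet{zhang2009} and \citet{sahaguntuboyina2020}, with the net size controlled by moment matching. Put $\eps_n^2=(\log n)^{d+1}/\{n(\log\log n)^d\}$ and $M$ a large constant multiple of $\sqrt{\log n}$. We work on the event $\max_i\norm{X_i}\le M$, which has probability at least $1-n^{-b}$ for $M=C_b\sqrt{\log n}$, since $K\subset B(0,S)$.

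The first step builds a finite family $\{h_1,\ldots,h_N\}$ of densities that approximates every $f_G$ in sup norm on $B(0,M)$. The tolerance is $\delta=n^{-2}$, and the approximation should satisfy $\log N\le C(\log n)^{d+1}/(\log\log n)^d=Cn\eps_n^2$. Moment matching gives this. By Taylor-expanding $\exp(x^\top\theta)$ as in \Cref{prop:gaussian-width}, any two $G,G'$ whose moments up to order $m\asymp\log n/\log\log n$ agree satisfy $\sup_{\norm{x}\le M}\abs{f_G(x)-f_{G'}(x)}\le\delta$. By Carathéodory, each $G$ is moment-equivalent to a discrete $G'$ with $O(m^d)$ atoms. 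Discretizing those atoms and weights on a grid of mesh $n^{-C}$ gives $N\le (n^{C})^{O(m^d)}$, so $\log N=O(m^d\log n)$, as required. This is presumably the content of \Cref{lem:finite-net}. To keep likelihood ratios bounded below, we replace each $h_j$ by $\tilde h_j=\max(h_j,\,n^{-2}\phi_d(\cdot/2)2^{-d})$ and renormalize. This changes $h_j$ only by a factor $1+O(n^{-2})$ in $L^1$.

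The second step is the standard chaining-free bound for each net element. For any fixed density $h$, Markov's inequality gives $\PP\bigl\{\prod_i\sqrt{h(X_i)/f_{G^*}(X_i)}\ge e^{-t}\bigr\}\le e^{t}\bigl(1-H^2(h,f_{G^*})/2\bigr)^n\le e^{t-nH^2/2}$. A union bound over the net elements $h_j$ with $H^2(h_j,f_{G^*})\ge C\eps_n^2$, taking $t=Cn\eps_n^2/4$, shows that with probability at least $1-N e^{-Cn\eps_n^2/4}\ge1-n^{-b}$, every such $h_j$ satisfies $\sum_i\log\{h_j(X_i)/f_{G^*}(X_i)\}\le -Cn\eps_n^2/2$.

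The third step transfers this to an arbitrary $G$ with $L_n(G)\ge L_n(G^*)-1$. Choose $j$ with $\abs{f_G-h_j}\le\delta$ on $B(0,M)$. Since $f_G(X_i)\ge c\,e^{-(M+S)^2/2}$ on the good event, the choice of $\delta$ must make $\delta$ tiny relative to this lower bound. It therefore has to be taken as $e^{-CM^2}=n^{-C'}$ rather than $n^{-2}$. This only changes constants in $\log N$. Then $\sum_i\log\{h_j(X_i)/f_G(X_i)\}\ge -1$, hence $\sum_i\log\{h_j/f_{G^*}\}(X_i)\ge -2$. This excludes $H^2(h_j,f_{G^*})\ge C\eps_n^2$ for large $n$. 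Finally we bound $H(f_G,h_j)$. The sup-norm closeness holds only on $B(0,M)$, but both densities have Gaussian tails with $\int_{\norm{x}>M}(f_G+h_j)\le n^{-C}$. Hence $H^2(f_G,h_j)\le\norm{f_G-h_j}_{L^1}\le \delta\,\mathrm{vol}(B(0,M))+n^{-C}\ll\eps_n^2$. The triangle inequality for $H$ gives the claim.

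The main obstacle is the net construction in the first step. The sup-norm approximation must be uniform over $B(0,M)$ with error polynomially small relative to $\min f_G$ there, while keeping $\log N$ of order $(\log n)^{d+1}/(\log\log n)^d$. The key is to check that the Taylor remainder $(SM)^{m+1}/(m+1)!$ times $e^{O(SM)}$ falls below $n^{-C'}$ with $m\asymp\log n/\log\log n$. This works because $SM=O(\sqrt{\log n})$. Uniformity over $G^*\in\Pcal(K)$ is automatic, since no step uses properties of $G^*$ beyond $K\subset B(0,S)$.
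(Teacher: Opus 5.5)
Your proposal is correct and follows essentially the paper's route: a deterministic moment-matching net with $\log N=O(m^d\log n)$ as in \Cref{lem:finite-net}, the Markov bound on $\prod_i\{p(X_i)/f_{G^*}(X_i)\}^{1/2}$ with a union bound over the net, and a transfer through relative approximation on a ball of radius $O(\sqrt{\log n})$ followed by the Hellinger triangle inequality. Your sup-norm-plus-tails bound on $H(f_G,h_j)$ is a simpler substitute for the paper's global estimate via $\iint e^{\theta^\top\vartheta}\,d\Delta\,d\Delta$, and the floor $\tilde h_j$ is unnecessary; one small correction is that the factor $e^{-\norm{\theta}^2/2}$ keeps the truncated integrand from being polynomial in $\theta$ if you expand only $\exp(x^\top\theta)$, so you should expand the whole exponent $x^\top\theta-\norm{\theta}^2/2$ and match moments up to total degree $2m$ (still $O(m^d)$ atoms), as the paper does.
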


The proof of \Cref{thm:near-mle} follows the classical route from metric entropy to Hellinger rates of maximum likelihood estimators \citep{wongshen1995,ghosalvandervaart2001}.  \Cref{lem:finite-net} constructs a deterministic finite set $\Ncal_n$ of Gaussian mixture densities that approximates every $f_G$, $G\in\Pcal(K)$, to within $n^{-3}$, both in Hellinger distance and in log density on a ball of radius $O(\sqrt{\log n})$ that contains all observations with high probability, and whose size satisfies
\begin{equation}\label{eq:net-entropy-summary}
 \log\abs{\Ncal_n}
 \le C\frac{(\log n)^{d+1}}{(\log\log n)^d}.
\end{equation}
Since $\EE_{G^*}\prod_i\{p(X_i)/f_{G^*}(X_i)\}^{1/2}=\{1-H^2(p,f_{G^*})/2\}^n$ for every fixed density $p$, Markov's inequality and a union bound over $\Ncal_n$ show that, with high probability, every $p\in\Ncal_n$ whose log-likelihood ratio $\sum_i\log\{p(X_i)/f_{G^*}(X_i)\}$ is at least $-2$ satisfies $H^2(p,f_{G^*})=O\{(\log\abs{\Ncal_n}+\log n)/n\}$.  If all observations lie in the ball, the member of $\Ncal_n$ that approximates a mixing distribution $G$ with $L_n(G)\ge L_n(G^*)-1$ has log-likelihood ratio at least $-2$, and since $f_G$ lies within Hellinger distance $n^{-3}$ of it, \eqref{eq:net-entropy-summary} yields \eqref{eq:near-truth}.

For mixing distributions supported in a fixed compact set, \citet{sahaguntuboyina2020} obtain the rate $(\log n)^{d+1}/n$.  The bound \eqref{eq:near-truth}, which applies equally to the ordinary NPMLE, is smaller by the factor $(\log\log n)^d$ because the net is smaller.  Each member of $\Ncal_n$ is the density of a discrete mixing distribution with at most $\binom{2m+d}{d}$ atoms, obtained by matching the moments of some $G\in\Pcal(K)$ up to total degree $2m$ and rounding its atoms and masses to a fine grid.  For $\theta\in K$ and $\norm{x}=O(\sqrt{\log n})$, the exponent in $\phi_d(x-\theta)=\phi_d(x)\exp(x^\top\theta-\norm{\theta}^2/2)$ is $O(\sqrt{\log n})$, so $m\asymp\log n/\log\log n$ already gives a polynomially small relative error, as in the one-dimensional approximation results of \citet{wuyang2020}.  In exchange, \Cref{thm:near-mle} requires $G^*$ to be supported in the known set $K$, whereas the bounds of \citet{zhang2009} and \citet{sahaguntuboyina2020} also allow mixing distributions with unbounded support.

Sharper logarithmic factors are available in one dimension.  For mixing distributions supported in a fixed interval, \citet[Appendix~E, Theorems~18--19]{chenwu2026} show that the minimax expected squared Hellinger risk is of order $(\log n)/(n\log\log n)$ and that the support-constrained ordinary NPMLE attains $(\log n)/n$.  \Cref{thm:near-mle} instead gives a high-probability bound in every dimension, uniformly over mixing distributions with log-likelihood at least $L_n(G^*)-1$; its role here is to turn the likelihood-gap bound of \Cref{cor:conditional} into an accuracy guarantee, and we do not claim optimal logarithmic factors.

Numerical algorithms return only approximate maximizers of the reweighted likelihood.  \Cref{cor:approx} shows that these retain the likelihood and Hellinger guarantees of \Cref{thm:main}; its proof combines the approximate-fit bounds \eqref{eq:effective-approx-fit} of \Cref{thm:effective} with \Cref{thm:near-mle}.

\begin{corollary}[Approximate maximizers]
\label{cor:approx}
Under the assumptions of \Cref{thm:main}, let $\tau_n\le c_0/\log n$, where $c_0$ is a fixed sufficiently small constant.  With probability at least $1-C_bn^{-b}$, every $\Gtil\in\Pcal(K)$ whose fitted vector is a $\tau_n$-optimal weighted fit \eqref{eq:tau-opt} for the weights $W$, that is,
\[
 \sum_{i=1}^nW_i\log f_{\Gtil}(X_i)
 \ge\max_{G\in\Pcal(K)}\sum_{i=1}^nW_i\log f_G(X_i)-\tau_n,
\]
satisfies
\[
 \Delta L_n(\Gtil)\le\frac{C_b}{\log n},
 \qquad
 D_n(\Gtil)\le\frac{C_b}{\log n},
 \qquad
 H^2(f_{\Gtil},f_{G^*})\le C_b\frac{(\log n)^{d+1}}{n(\log\log n)^d}.
\]
\end{corollary}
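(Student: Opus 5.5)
The plan is to combine three earlier results on a single high-probability event: the approximate-fit bound \eqref{eq:effective-approx-fit} of \Cref{thm:effective}, the Gaussian residual-width bound of \Cref{prop:gaussian-width}, and the uniform Hellinger bound of \Cref{thm:near-mle}.

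\textbf{Step 1: data event.} First restrict to the data event $\{\max_i\norm{X_i}\le C_0\sqrt{\log n}\}$. Since $K\subset B(0,S)$ and the noise is Gaussian, choosing $C_0$ large in terms of $b$, $d$, and $S$ makes this event have probability at least $1-n^{-b}$, uniformly over $G^*\in\Pcal(K)$.

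\textbf{Step 2: verify the conditions of \Cref{thm:effective}.} On this event, exactly as in the proof of \Cref{cor:conditional}, apply \Cref{prop:gaussian-width} with $G_0=\Ghat_0$ and $m\asymp\log n/\log\log n$. This gives a projection of rank $r=O(r_n)$ and residual width $\eta_m=\sqrt n\,\eps_m\le n^{-B}$ for any prescribed $B$. Now take $q=b\log n$, so $Q=O(\log n)$. Both conditions in \eqref{eq:effective-conditions} then hold for $\alpha_n=\bar r_n\log n$ and large $n$: first, $\alpha_n\ge C(r+Q)$ because $\alpha_n/\bar r_n=\log n\to\infty$; second, $\eta\sqrt{\alpha_nnQ}\le1$ because $\eta$ is polynomially small with arbitrary exponent.

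\textbf{Step 3: likelihood and discrepancy bounds.} \Cref{thm:effective} now gives an event of conditional probability at least $1-3n^{-b}$. On it, every $\tau$-optimal fit with $\tau<c$ satisfies
\[
\one^\top\log v_0-\one^\top\log\vtil\le C\{(r+q)/\alpha_n+\tau\}.
\]
The same bound holds for $\norm{\log(\vtil/v_0)}_2^2$. With $r+q=O(\bar r_n)$ we get $(r+q)/\alpha_n=O(1/\log n)$. Take $c_0\le c$ and use $\tau_n\le c_0/\log n$, which is eventually below $c$. Since $\one^\top\log v_0=L_n(\Ghat_0)$, and the fitted vector of $\Gtil$ is $\vtil$, this yields $\Delta L_n(\Gtil)\le C_b/\log n$ and $D_n(\Gtil)\le C_b/\log n$ simultaneously for all such $\Gtil$.

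\textbf{Step 4: Hellinger bound.} For $n$ large, $C_b/\log n\le1$. Hence $L_n(\Gtil)\ge L_n(\Ghat_0)-1\ge L_n(G^*)-1$, where the last inequality uses $G^*\in\Pcal(K)$. Intersecting with the data-only event of \Cref{thm:near-mle}, which holds for all $G$ simultaneously, gives the Hellinger bound. The data event, the conditional weight event (integrated over the data), and the \Cref{thm:near-mle} event each fail with probability $O(n^{-b})$, so a union bound finishes the proof.

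\textbf{Main obstacle.} The delicate point is uniformity over $\Gtil$. The bound \eqref{eq:effective-approx-fit} is stated to hold on one event simultaneously for all $\tau$-optimal fits, and \Cref{thm:near-mle} holds simultaneously for all $G$. So no selection issue arises even though $\Gtil$ may depend on both $X$ and $W$. The one check needed is that the constant $c_0$ fits under the threshold $c$ of \Cref{thm:effective}.
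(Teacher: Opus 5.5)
Your proposal is correct and matches the paper's proof essentially step for step. You work on the radius event and reuse the Gaussian width construction from the proof of \Cref{cor:conditional}, with $G_0=\Ghat_0$ so that $v_0$ is the ordinary fitted vector, to verify \eqref{eq:effective-conditions}; you read the first two bounds off \eqref{eq:effective-approx-fit} on the single event of \Cref{thm:effective}; and you then apply \Cref{thm:near-mle} via $L_n(\Gtil)\ge L_n(\Ghat_0)-1\ge L_n(G^*)-1$. Your uniformity check is the right one, and the condition on $c_0$ is milder than you state: $c_0/\log n<c$ holds for large $n$ for any fixed $c_0$, so $c_0\le c$ is not needed.
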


\Cref{cor:approx} controls the likelihood, the fitted vector and the fitted density of an approximate maximizer, but not its number of atoms: adding atoms of negligible mass changes the likelihood arbitrarily little, so the support bound of \Cref{thm:main}(i) applies only to the exact reweighted NPMLE.

\section{Numerical evidence: sparsity and proximity to the ordinary NPMLE}
\label{sec:numerics}

We compare the reweighted NPMLE with the ordinary NPMLE on simulated data, with three questions in mind: whether its empirical density accuracy and resolved support size are comparable to those of the ordinary NPMLE; whether its distance from the ordinary NPMLE decreases at the rates suggested by \Cref{thm:effective} as the concentration grows; and whether these properties require concentrated weights.  The experiments are illustrative and restricted to one and two dimensions: resolving the support of an NPMLE requires an accurate search over the parameter set, whose cost grows rapidly with the dimension, and scalable computation of the NPMLE in higher dimensions is a separate problem \citep{zhangetal2024}.  The designs, the computation and the accuracy checks are described in \Cref{app:numerics}.

We consider three correctly specified designs: a three-point and a uniform mixing distribution in one dimension, and a four-point mixing distribution in two dimensions, with sample sizes between 200 and 2000 (\Cref{app:numerics-design}).  We compare the ordinary NPMLE with the reweighted NPMLE at three concentrations, which label the fits in the figures and in \Cref{tab:numerical-summary}: \emph{Balanced}, the balanced concentration \eqref{eq:scales}; \emph{Tiny}, $\alpha_n=n^3$, which corresponds to \Cref{cor:tiny} with $L=1/2$; and \emph{BB-scale}, $\alpha=1$, the concentration of the Bayesian bootstrap.  The first two satisfy the hypotheses of \Cref{thm:main} and \Cref{cor:tiny}; the third does not, because \Cref{thm:effective} requires the concentration to exceed a multiple of the effective dimension, and it serves as a control.  For each fit $\widehat G$ we report the resolved support size, that is, the number of clusters of atoms at a fixed spatial resolution, together with the squared Hellinger loss $H^2(f_{\widehat G},f_{G^*})$, the squared Hellinger distance $H^2(f_{\widehat G},f_{\widehat G_0})$ to the ordinary NPMLE, and the ordinary log-likelihood gap $\Delta L_n(\widehat G)$ and log-fit discrepancy $D_n(\widehat G)$ of \eqref{eq:gap-discrepancy}.

We first fix one sample of size $n=500$ from the three-point design and vary $\alpha$ from 1 to $n^3$, drawing 30 weight vectors at each value, so that only the concentration changes (\Cref{fig:path}).

\begin{figure}[htb]
\centering
\begin{minipage}[t]{0.36\textwidth}
  \centering\includegraphics[width=\linewidth]{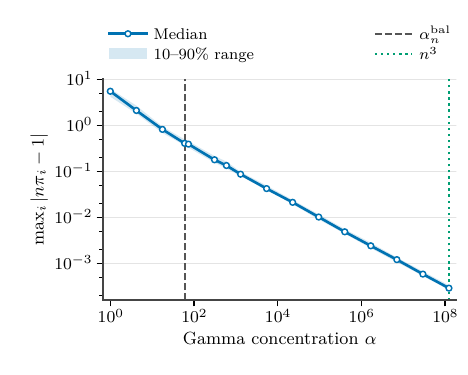}
  \par\vspace{-0.5em}\centerline{\small (a) Weight perturbation}
\end{minipage}\hspace{0.04\textwidth}
\begin{minipage}[t]{0.36\textwidth}
  \centering\includegraphics[width=\linewidth]{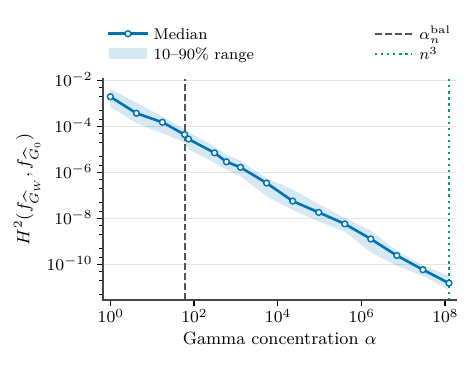}
  \par\vspace{-0.5em}\centerline{\small (b) Distance to the ordinary NPMLE}
\end{minipage}

\par\smallskip
\begin{minipage}[t]{0.36\textwidth}
  \centering\includegraphics[width=\linewidth]{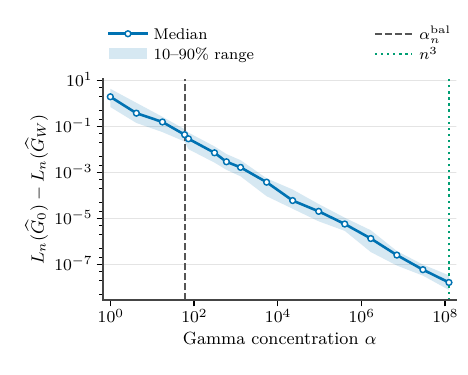}
  \par\vspace{-0.5em}\centerline{\small (c) Ordinary log-likelihood gap}
\end{minipage}\hspace{0.04\textwidth}
\begin{minipage}[t]{0.36\textwidth}
  \centering\includegraphics[width=\linewidth]{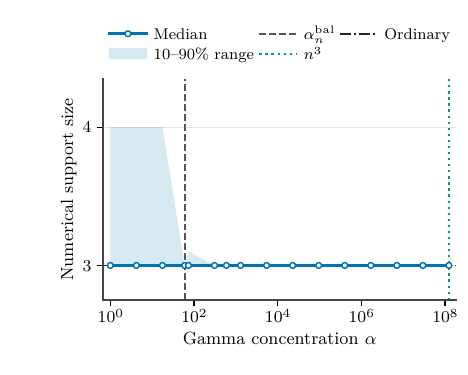}
  \par\vspace{-0.5em}\centerline{\small (d) Resolved support size}
\end{minipage}
\caption{Regularization path on a fixed $n=500$ sample from the three-point mixture.  Curves show medians across 30 independent Gamma-weight draws and bands show the 10th and 90th percentiles.  The dashed and dotted vertical lines mark the balanced concentration and $n^3$, respectively.}
\label{fig:path}
\end{figure}

The path exhibits the two rates suggested by the theory.  A log--log regression of the medians for $\alpha\ge10$ gives slope $\numSlopeWeight$ for the weight perturbation $\max_i|\widetilde W_i-1|$ of the normalized weights $\widetilde W_i=nW_i/\sum_jW_j$, matching the $\alpha^{-1/2}$ scale of Gamma fluctuations (\Cref{lem:gamma-concentration}).  The slopes are $\numSlopeLoss$ for the ordinary log-likelihood gap and $\numSlopeLogfit$ for the log-fit discrepancy, in line with the $1/\alpha$ scaling of the bounds \eqref{eq:effective-exact-fit}, and $\numSlopeDistance$ for the squared Hellinger distance to the ordinary NPMLE.  The fitted vector thus moves at first order in the weight perturbation, whereas the three discrepancies are second order.

At the balanced concentration, the median squared Hellinger distance to the ordinary NPMLE is $\numPathBalancedDistance$ and the median ordinary log-likelihood gap is $\numPathBalancedLoss$, while the median squared Hellinger loss, $\numPathBalancedRisk$, is essentially that of the ordinary NPMLE, $\numPathOrdinaryRisk$.  At $\alpha=n^3$, both discrepancies fall below the numerical accuracy of the fits (\Cref{app:numerics-accuracy}).  The median resolved support size is $\numPathBalancedMedianSupport$ at the balanced concentration and $\numPathTinyMedianSupport$ at $\alpha=n^3$, whereas $\alpha=1$ gives a median distance of $\numPathBBDistance$ and a median gap of $\numPathBBLoss$; \Cref{app:numerics-results} reports the full distributions.

\begin{figure}[htb]
\centering
\begin{minipage}[t]{0.36\textwidth}
  \centering\includegraphics[width=\linewidth]{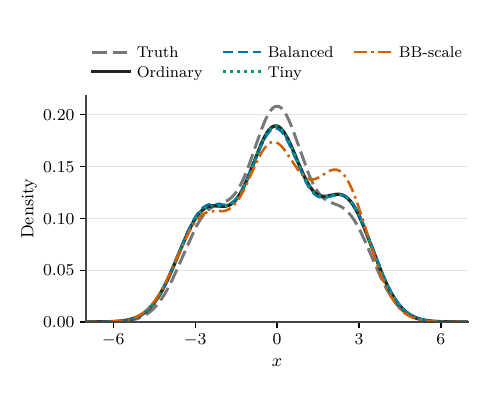}
  \par\vspace{-0.5em}\centerline{\small (a) Fitted densities}
\end{minipage}\hspace{0.04\textwidth}
\begin{minipage}[t]{0.36\textwidth}
  \centering\includegraphics[width=\linewidth]{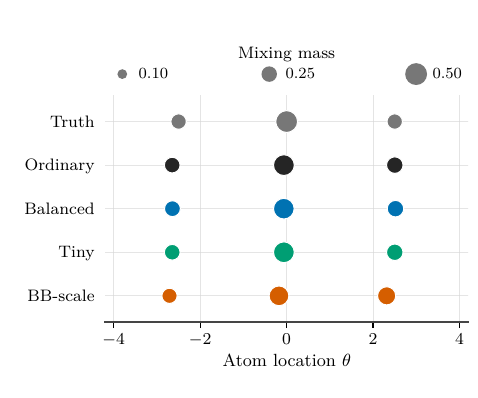}
  \par\vspace{-0.5em}\centerline{\small (b) Mixing distributions}
\end{minipage}
\caption{Representative fits from the regularization-path experiment, one draw for each concentration (\Cref{app:numerics-measures}).  Marker areas in panel (b) are proportional to atom mass.}
\label{fig:representative}
\end{figure}

\Cref{fig:representative} shows why proximity is best judged through the fitted density, the fitted vector and the likelihood.  The fitted densities of the ordinary NPMLE and of the Balanced and Tiny fits overlap closely, whereas the BB-scale fit is visibly displaced; yet at the balanced concentration the median Wasserstein distance between the mixing distributions, $\numPathBalancedWasserstein$, is much larger than the Hellinger distance between the fitted densities.  Gaussian deconvolution is ill posed, so a small perturbation of the density need not produce comparable stability of the atoms; accordingly, \Cref{thm:main} controls the fitted vector, the likelihood and the density rather than the atom locations.

A Monte Carlo study over the three designs, with eleven design--sample-size cells, confirms these findings (\Cref{app:numerics-montecarlo}).  At the balanced concentration, the ratio of the estimated Hellinger risk of the reweighted NPMLE to that of the ordinary NPMLE lies between $\numBalancedRiskMin$ and $\numBalancedRiskMax$ in every cell, the resolved support sizes of the two estimators agree in $\numBalancedSupportAgreement\%$ of the \numPairedDatasets{} paired replications, and the cell means of the ordinary log-likelihood gap, multiplied by $\log n$, stay between $\numBalancedScaledLossMin$ and $\numBalancedScaledLossMax$ across sample sizes and dimensions, as \Cref{thm:main}(iii) predicts.  The Tiny fits agree with the ordinary NPMLE to within the optimization accuracy, whereas the BB-scale fits have risk ratios between $\numBBRiskMin$ and $\numBBRiskMax$.  At the balanced concentration, the density accuracy and resolved support size of the reweighted NPMLE are thus comparable to those of the ordinary NPMLE, while weights that do not concentrate degrade the fit.

\FloatBarrier
\section{Discussion}\label{sec:discussion}

A single, vanishingly small random reweighting of the likelihood thus yields a point estimator of the mixing distribution that, with high probability, has four properties simultaneously: it is unique, it is exactly sparse with polylogarithmically many atoms in every fixed dimension, its ordinary log-likelihood gap is $O(1/\log n)$, and its fitted density attains a Hellinger rate that is parametric up to logarithmic factors (\Cref{thm:main}).  \citet{polyanskiywu2020} established an $O(\log n)$ support bound for the unrestricted ordinary NPMLE in one dimension under sub-Gaussian mixing, a property they called self-regularization.  Almost-sure uniqueness of the unrestricted NPMLE is now known in every dimension \citep{wang2026uniqueness}.  Our contribution is a polylogarithmic support bound for an exact maximizer of a randomly reweighted likelihood, with vanishingly small weight perturbation, over a fixed compact parameter set.

On the other hand, the estimator depends on one auxiliary draw of the weights, which can be fixed and recorded.  Different draws may select different sparse mixing distributions, although, with high probability, each of them has a fitted density in the same shrinking Hellinger neighborhood of $f_{G^*}$.  The support bound also applies only to exact maximizers.  For computed solutions, \Cref{cor:approx} retains the likelihood and Hellinger guarantees, and the number of atoms is assessed through the resolved support size (\Cref{app:numerics-measures}).

The argument suggests several extensions.  The fixed set $K$ could be replaced by the convex hull of the data, which contains the support of every weighted or unweighted NPMLE over $\RR^d$ (\Cref{sec:mixtures}); this requires tracking how the constants depend on its radius, which is of order $\sqrt{\log n}$.  \Cref{thm:effective} applies to any positive continuous kernel, and extending the full result to another mixture model requires a low-rank approximation of the relative fitted vectors, a uniqueness mechanism such as generic evaluation independence, and a Hellinger bound for mixing distributions with nearly maximal likelihood.  Natural candidates include exponential-family mixtures, heteroscedastic Gaussian mixtures \citep{soloffetal2025} and mixtures of regressions \citep{jiangguntuboyina2025}.  The Gamma law of the weights enters the proofs in two places, an exact density-ratio identity in the change of variables of \Cref{lem:determinant-moment} and the concentration of the weights, so other independent weight laws with smooth densities should be amenable to the same argument, with additional remainder terms to control.

\section*{Acknowledgments}

The author would like to thank Adityanand Guntuboyina for many helpful discussions on the NPMLE in general and for his kind encouragement in pursuing the problem of NPMLE sparsity.  The author would also like to thank Xiao Ma for stimulating conversations on AI for mathematics, in particular, on using AI tools to construct and verify mathematical proofs.

\bibliographystyle{plainnat}
\bibliography{paper_v3}

\clearpage
\appendix
\crefalias{section}{appendix}
\crefalias{subsection}{appendix}

\section{Proofs of the main results}
\label{app:main-proofs}

This appendix assembles the proof of \Cref{thm:main} from the results of \Cref{sec:effective,sec:gaussian}, proves \Cref{cor:tiny,cor:weak,cor:approx}, and proves \Cref{prop:dirichlet} together with a Gamma tail bound.  The remaining supporting results are proved in \Cref{app:effective,app:gaussian,app:hellinger}.

\subsection{Proof of Theorem~\ref{thm:main}}\label{app:proof-main}

\begin{proof}[Proof of \Cref{thm:main}]
Fix $b>0$.  All exceptional probabilities below are uniform over $G^*\in\Pcal(K)$.

\emph{Step 1: the data event.}  Under \eqref{eq:model}, the joint distribution of $(X_1,\ldots,X_n)$ has a density, so the generic event $\{X\in\mathcal X_{n,d}\}$ of \Cref{prop:generic} has probability one.  Since $\norm{\Theta_i}\le S$, a Gaussian union bound gives $\max_i\norm{X_i}\le C_b\sqrt{\log n}$ outside an event of probability $O(n^{-b-2})$.  Let $\mathcal E_X$ denote the intersection of these two events.

\emph{Step 2: parts \textup{(ii)} and \textup{(iii)}.}  On $\mathcal E_X$, the dataset satisfies the hypothesis of \Cref{cor:conditional} with $C_0=C_b$.  Hence, with conditional probability at least $1-C_bn^{-b}$ given $X$, every extreme optimizer has at most $C_b\bar r_n$ atoms, the weights satisfy $\max_i\abs{W_i-1}\le C_b\bar r_n^{-1/2}$, and every reweighted NPMLE obeys both bounds in part (iii).  Integrating this conditional bound over $\mathcal E_X$ proves parts (ii) and (iii) outside an event of probability $C_bn^{-b}+O(n^{-b-2})$.

\emph{Step 3: uniqueness and part \textup{(i)}.}  Since $\bar r_n=o(n)$, we have $2C_b\bar r_n(d+1)\le n$ for all sufficiently large $n$.  On the event of Step 2 and on the generic event, the second part of \Cref{prop:generic}, applied with $m=C_b\bar r_n$, shows that the optimizer fiber $\Mcal_X(W)$ is a singleton whose unique member has at most $C_b\bar r_n$ atoms.  This proves part (i).

\emph{Step 4: part \textup{(iv)}.}  On the event of Step 2, part (iii) gives, for all sufficiently large $n$,
\[
 L_n(\Ghat_W)\ge L_n(\Ghat_0)-1\ge L_n(G^*)-1,
\]
because $G^*$ is feasible for the ordinary likelihood.  On the event of \Cref{thm:near-mle}, which has probability at least $1-C_bn^{-b}$, the implication \eqref{eq:near-truth} then applies to $G=\Ghat_W$ and proves part (iv).

\emph{Step 5: probability.}  The exceptional events of Steps 1, 2, and 4 have total probability at most $C_bn^{-b}$ after enlarging $C_b$, which completes the proof.
\end{proof}

\subsection{Proofs of the corollaries}\label{app:proof-corollaries}

\begin{proof}[Proof of \Cref{cor:tiny}]
Take $\alpha_n=n^{2L+2}$ and work on the data event $\mathcal E_X$ of Step 1 in the proof of \Cref{thm:main}.

\emph{Sparsity and uniqueness.}  In \Cref{prop:gaussian-width}, choose $m=\lceil C_L\log n/\log\log n\rceil$ with $C_L$ sufficiently large.  Then $\rank(P_m)=O_L(r_n)$ and $\eps_m\le n^{-L-3}$.  With $q=(b+2)\log n$, both conditions in \eqref{eq:effective-conditions} hold for all large $n$.  Hence \Cref{thm:effective} shows that every extreme optimizer has $O_{b,L}(\bar r_n)$ atoms, and \Cref{prop:generic} again gives uniqueness.

\emph{Size of the perturbation.}  \Cref{lem:gamma-concentration} and a union bound give
\[
 \PP\{\max_i\abs{W_i-1}>t\}\le2n\exp(-\alpha_nt^2/4),
 \qquad 0<t\le1,
\]
and the choice $t=n^{-L}$ yields the bound $2ne^{-n^2/4}$.

\emph{Hellinger accuracy.}  \Cref{thm:effective} also gives an ordinary log-likelihood gap $\Delta L_n(\Ghat_W)\le C_{b,L}\bar r_n/n^{2L+2}=o(1)$.  Hence $L_n(\Ghat_W)\ge L_n(\Ghat_0)-1\ge L_n(G^*)-1$ for all sufficiently large $n$.  Apply \Cref{thm:near-mle} and combine the exceptional events.
\end{proof}

\begin{proof}[Proof of \Cref{cor:weak}]
By \Cref{thm:main}(iv), $H(f_{\Ghat_W},f_{G^*})\to0$ in probability, so every subsequence has a further subsequence along which the convergence holds almost surely.  By compactness of $\Pcal(K)$, we may pass to a further subsequence with $\Ghat_W\Rightarrow G$ for some $G$.  Along it, the Gaussian kernel is bounded and continuous, so $f_{\Ghat_W}(x)\to f_G(x)$ for every $x$, while Hellinger convergence gives $L^1$ convergence to $f_{G^*}$.  Hence $f_G=f_{G^*}$ almost everywhere.  The Fourier transform of $f_G$ is $t\mapsto e^{-\norm{t}^2/2}\chi_G(t)$, where $\chi_G(t)=\int e^{\mathrm i t^\top\theta}\,dG(\theta)$ is the characteristic function of $G$.  Since the Gaussian factor never vanishes, $\chi_G=\chi_{G^*}$ and hence $G=G^*$.  Every subsequential limit is therefore $G^*$.
\end{proof}

\begin{proof}[Proof of \Cref{cor:approx}]
Work on the data event $\mathcal E_X$ of Step 1 in the proof of \Cref{thm:main}, and apply \Cref{thm:effective} with the Gaussian width construction used in \Cref{cor:conditional}, so that $r+q=O(\bar r_n)$ and $\alpha_n=\bar r_n\log n$.  Since $\tau_n\le c_0/\log n$ is below the constant $c$ of \Cref{thm:effective} for large $n$, the bounds \eqref{eq:effective-approx-fit} hold simultaneously for every $\tau_n$-optimal weighted fit on the event of \Cref{thm:effective} and give the first two conclusions.  In particular, every such $\Gtil$ satisfies $L_n(\Gtil)\ge L_n(\Ghat_0)-1\ge L_n(G^*)-1$ for large $n$, and the uniform bound of \Cref{thm:near-mle} gives the third.
\end{proof}

\subsection{Dirichlet representation and Gamma concentration}\label{app:dirichlet}

This subsection proves \Cref{prop:dirichlet} and the Gamma tail bound used in the proofs of \Cref{cor:tiny} and \Cref{cor:conditional}.

\begin{proof}[Proof of \Cref{prop:dirichlet}]
The standard Gamma--Dirichlet change of variables gives
\[
 (\pi_1,\ldots,\pi_n)\sim\mathrm{Dirichlet}(\alpha_n,\ldots,\alpha_n),
 \qquad W_+\sim\mathrm{Gamma}(n\alpha_n,\alpha_n),
\]
with $\pi=(\pi_1,\ldots,\pi_n)$ and $W_+$ independent.  The Dirichlet moments are
\[
 \EE\pi_i=\frac1n,
 \quad
 \operatorname{Var}(\pi_i)=\frac{n-1}{n^2(n\alpha_n+1)},
 \quad
 \operatorname{Cov}(\pi_i,\pi_j)=-\frac1{n^2(n\alpha_n+1)}\quad(i\ne j).
\]
Substitution gives the conditional mean and variance of $F_n^{(\alpha_n)}h=\sum_i\pi_ih(X_i)$.

We have
\[
 \norm{F_n^{(\alpha_n)}-F_n}_{\TV}
 \le\frac12\sum_i\abs{\pi_i-1/n}.
\]
Cauchy--Schwarz and the marginal Dirichlet variance therefore imply
\[
 \EE\bigl(\norm{F_n^{(\alpha_n)}-F_n}_{\TV}\mid X\bigr)
 \le\frac n2\sqrt{\operatorname{Var}(\pi_1)}
 =\frac12\sqrt{\frac{n-1}{n\alpha_n+1}}.
\]

Because $\EE W_i=1$ and $\EE(\pi_i\mid X)=1/n$, independence and integrability give
\[
 \EE_{X,W}\frac1n\sum_iW_i\log f_G(X_i)
 =\EE_{G^*}\log f_G(X_1)
 =\EE_{X,W}\sum_i\pi_i\log f_G(X_i).
\]
For compact $K\subset B(0,S)$,
\[
 e^{-S\norm{x}-S^2/2}
 \le\frac{f_G(x)}{\phi_d(x)}\le e^{S\norm{x}},
\]
so all displayed log densities are integrable under a compactly supported Gaussian mixture.  Their population difference at $G^*$ and $G$ is
\[
 \int f_{G^*}(x)\log\frac{f_{G^*}(x)}{f_G(x)}\,dx
 =\KL(f_{G^*}\|f_G).
\]
\end{proof}

\begin{lemma}[Gamma coordinate concentration]
\label{lem:gamma-concentration}
If $W\sim\mathrm{Gamma}(a,a)$, then for $0<t\le1$,
\[
 \PP\{\abs{W-1}>t\}\le2e^{-at^2/4}.
\]
\end{lemma}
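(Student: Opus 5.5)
The plan is a Chernoff (exponential Markov) bound on each tail, followed by an elementary inequality for the log-moment generating function. Write $W=Y/a$ with $Y\sim\mathrm{Gamma}(a,1)$, so that
\[
 \EE e^{\lambda W}=(1-\lambda/a)^{-a}\qquad(\lambda<a).
\]
The two tails are treated separately, and each is shown to be at most $e^{-at^2/4}$.

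\emph{Upper tail.} For $0<\lambda<a$, Markov's inequality gives
\[
 \PP(W-1>t)\le\exp\{-\lambda(1+t)-a\log(1-\lambda/a)\}.
\]
The optimizing choice is $\lambda=at/(1+t)$, which lies in $(0,a)$. It gives $\PP(W-1>t)\le\exp\{-a\,h(t)\}$ with $h(t)=t-\log(1+t)$. It then remains to check $h(t)\ge t^2/4$ on $(0,1]$. Since $h(0)=0$ and
\[
 h'(t)=\frac{t}{1+t}\ge\frac t2\quad\text{for } t\le1,
\]
integrating gives $h(t)\ge t^2/4$.

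\emph{Lower tail.} For $\lambda>0$, apply Markov's inequality to $e^{-\lambda W}$:
\[
 \PP(W<1-t)\le\exp\{\lambda(1-t)-a\log(1+\lambda/a)\}.
\]
Optimize with $\lambda=at/(1-t)$ when $t<1$. This gives $\exp\{-a\,g(t)\}$ with $g(t)=-t-\log(1-t)=\sum_{k\ge2}t^k/k\ge t^2/2\ge t^2/4$. For $t=1$ the event $\{W<0\}$ is empty, so the bound holds trivially.

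Adding the two tails gives $\PP\{\abs{W-1}>t\}\le 2e^{-at^2/4}$. No step is a real obstacle. The only care needed is the range restriction: $t\le1$ is used to bound $t/(1+t)\ge t/2$ in the upper tail, and to handle the endpoint $t=1$ of the lower tail separately.
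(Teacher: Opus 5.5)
Your proposal is correct and follows the paper's proof essentially step for step: a Chernoff bound on each tail with the optimized $\lambda$ gives the exponents $t-\log(1+t)\ge t^2/4$ and $-t-\log(1-t)\ge t^2/2$, and the two tails are then added. Your derivative argument for $t-\log(1+t)\ge t^2/4$ and your separate treatment of the endpoint $t=1$ supply details that the paper leaves implicit.
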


\begin{proof}
For $0<\lambda<a$, $\EE e^{\lambda(W-1)}=e^{-\lambda}(1-\lambda/a)^{-a}$.  Optimizing the Chernoff bound gives $\PP(W\ge1+t)\le\exp\{-a(t-\log(1+t))\}$.  Since $t-\log(1+t)\ge t^2/4$ for $0\le t\le1$, this is at most $e^{-at^2/4}$.  Similarly, $\PP(W\le1-t)\le\exp\{-a(-t-\log(1-t))\}\le e^{-at^2/2}$.  Adding the two tails proves the claim.
\end{proof}

\section{Proof of the effective-dimension theorem}
\label{app:effective}

This appendix proves \Cref{thm:effective} in four steps: basic regularity of the fitted-vector problem (\Cref{app:effective-basic}), the Hessian lower bound and localization of the weighted fit (\Cref{app:effective-local}), the Gamma change of variables (\Cref{app:effective-cov}), and the final probability estimates (\Cref{app:effective-proof}).  Throughout, we use the setting of \eqref{eq:value}--\eqref{eq:width}.

\subsection{Basic convex geometry}
\label{app:effective-basic}

\begin{lemma}[Regularity of the fitted-vector problem]
\label{lem:regularity}
In the setup of \eqref{eq:value}, $\Ccal$ is compact, $\Psi$ is convex and continuously differentiable on $(0,\infty)^n$, and
\[
 \nabla\Psi(w)=z(w)=\log\vhat(w).
\]
The maps $w\mapsto\vhat(w)$ and $w\mapsto z(w)$ are locally Lipschitz.  All maximizing mixing distributions have fitted vector $\vhat(w)$.
\end{lemma}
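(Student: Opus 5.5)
Compactness of $\Ccal$ comes first.  Since $\Theta$ is compact and $A$ is continuous, $\Acal=A(\Theta)$ is a compact subset of $(0,\infty)^n$.  In finite dimensions the convex hull of a compact set is compact: by Carath\'eodory, $\Ccal$ is the image of the compact set $\Acal^{n+1}\times\Delta_n$ under a continuous map.  The identification $\Ccal=\{\int A\,dG:G\in\Pcal(\Theta)\}$ follows in two directions.  For one inclusion, approximate $G$ weakly by discrete measures, so that $\int A\,dG$ lies in the closure of $\conv(\Acal)$, which equals $\Ccal$ because $\Ccal$ is compact.  The other inclusion is trivial.  Compactness of $\Acal$ also gives a constant $\delta>0$ with $v_i\ge\delta$ for every $v\in\Ccal$, so $\log v$ is smooth and bounded on a neighbourhood of $\Ccal$.

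For each $w\in(0,\infty)^n$, the function $v\mapsto w^\top\log v$ is continuous and strictly concave on $\Ccal$, because its Hessian $-\diag(w/v^2)$ is negative definite.  Hence the maximizer $\vhat(w)$ exists and is unique.  The same fact gives the last claim: every maximizing $G$ has fitted vector $\vhat(w)$.  $\Psi$ is convex, being a supremum of the linear functions $w\mapsto w^\top\log v$.  Differentiability follows from Danskin's theorem.  The maximizer is unique and the family is jointly continuous, so $\Psi$ is differentiable with gradient $\log\vhat(w)=z(w)$.  Continuity of $\vhat$ follows from uniqueness and compactness by the standard argument (Berge's maximum theorem).  Thus $\Psi$ is $C^1$.

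Local Lipschitz continuity is the only step needing care, and I would use strong concavity.  On a compact neighbourhood where $w\ge\underline w>0$ componentwise, $\Ccal\subset[\delta,M]^n$ gives
\[
 -\nabla^2_v(w^\top\log v)\succeq(\underline w/M^2)I,
\]
so the objective is $\mu$-strongly concave with $\mu=\underline w/M^2$.  For $w,w'$ in this neighbourhood, write $\vhat=\vhat(w)$ and $\vhat'=\vhat(w')$.  Apply the first-order optimality condition at $\vhat$ for $w$ and at $\vhat'$ for $w'$, each tested against the other point.  Adding the two inequalities and using strong monotonicity of the gradient gives
\[
 \mu\norm{\vhat-\vhat'}^2\le \langle (w-w')/\vhat'-(w-w')/\vhat,\ \vhat-\vhat'\rangle\le \norm{w-w'}\,\delta^{-2}\norm{\vhat-\vhat'},
\]
where the last step uses that $u\mapsto1/u$ is $\delta^{-2}$-Lipschitz on $[\delta,\infty)$.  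This yields $\norm{\vhat-\vhat'}\le(\mu\delta^2)^{-1}\norm{w-w'}$.  Composing with $\log$, which is Lipschitz on $[\delta,\infty)^n$, shows that $z$ is locally Lipschitz.

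I expect the main obstacle to be keeping the strong-monotonicity bookkeeping correct.  The difference of gradients $w/\vhat-w'/\vhat'$ has to be split as $w(1/\vhat-1/\vhat')+(w-w')/\vhat'$, and only the first piece is controlled by strong concavity.
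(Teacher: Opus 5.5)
Your route is the paper's: Carath\'eodory for compactness of $\Ccal$, strict concavity for uniqueness of $\vhat(w)$, and, for the Lipschitz bound, adding the two variational inequalities and using strong concavity on a box $[\delta,M]^n\supset\Ccal$. The one genuine error is the displayed Lipschitz chain, which is false as written. Its middle term is
\[
 \bigl\langle (w-w')/\vhat'-(w-w')/\vhat,\ \vhat-\vhat'\bigr\rangle
 =\sum_{i=1}^n\frac{(w_i-w_i')(\vhat_i-\vhat_i')^2}{\vhat_i\vhat_i'}.
\]
This is third order in the perturbation, and it is nonpositive whenever $w\le w'$ coordinatewise. The first inequality $\mu\norm{\vhat-\vhat'}^2\le\cdots$ would then force $\vhat=\vhat'$, which fails in general. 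The constant $(\mu\delta^2)^{-1}$ also fails a scaling check: replacing $A$ by $sA$ multiplies $\norm{\vhat-\vhat'}$ by $s$ but leaves $(\mu\delta^2)^{-1}$ unchanged.

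The split in your last paragraph is the right one and repairs the step. Add $\langle w/\vhat,\vhat'-\vhat\rangle\le0$ and $\langle w'/\vhat',\vhat-\vhat'\rangle\le0$, write $w/\vhat-w'/\vhat'=w(1/\vhat-1/\vhat')+(w-w')/\vhat'$, and apply strong concavity of $v\mapsto w^\top\log v$ to the first piece. This gives
\[
 \mu\norm{\vhat-\vhat'}_2^2\le\bigl\langle (w-w')/\vhat',\ \vhat-\vhat'\bigr\rangle
 \le\delta^{-1}\norm{w-w'}_2\norm{\vhat-\vhat'}_2 ,
\]
so $\norm{\vhat-\vhat'}_2\le(\mu\delta)^{-1}\norm{w-w'}_2$. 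This is exactly the paper's inequality with $a_0/\overline v^{\,2}$ and $1/\underline v$. No Lipschitz property of $u\mapsto1/u$ is needed, only $1/\vhat_i'\le\delta^{-1}$.

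For differentiability, the paper uses the elementary sandwich $h^\top z(w)\le\Psi(w+h)-\Psi(w)\le h^\top z(w+h)$ together with continuity of $z$, which it already has from the Lipschitz bound. Your appeal to Danskin and Berge is also valid, but Berge becomes redundant once the corrected Lipschitz estimate is proved first. Your argument that every $\int A\,dG$ lies in $\overline{\conv(\Acal)}=\Ccal$ is a small improvement: the paper asserts this identification in the main text and uses it implicitly for the lemma's last claim without proving it.
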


\begin{proof}
Since $\Acal$ is compact in $\RR^n$, Carath\'eodory's theorem expresses every point of $\conv(\Acal)$ as a convex combination of at most $n+1$ points of $\Acal$; hence $\Ccal$ is compact.  Positivity and compactness give constants $0<\underline v\le\overline v<\infty$ such that $\underline v\le v_i\le\overline v$ for all $v\in\Ccal$ and every $i$.  The objective $v\mapsto\sum_iw_i\log v_i$ is continuous and strictly concave, so it has a unique maximizing fitted vector.

Fix a compact box of weights in the positive orthant and let $a_0>0$ be a common lower bound on its coordinates.  Put $v=\vhat(w)$, $v'=\vhat(w')$, and $f_w(u)=\sum_iw_i\log u_i$.  Adding the variational inequalities for the two maximizations gives
\[
 0\le\{\nabla f_w(v)-\nabla f_{w'}(v')\}^\top(v-v').
\]
Strong concavity of $f_w$ on $[\underline v,\overline v]^n$ and $\nabla f_w(v')-\nabla f_{w'}(v')=(w-w')/v'$ imply
\[
 0\le-\frac{a_0}{\overline v^{\,2}}\norm{v-v'}_2^2
       +\frac1{\underline v}\norm{w-w'}_2\norm{v-v'}_2.
\]
Thus $\vhat$ is locally Lipschitz, and so is its coordinatewise logarithm.

As a maximum of linear functions of $w$, $\Psi$ is convex.  For small $h$, optimality at $w$ and at $w+h$ gives
\[
 h^\top z(w)\le\Psi(w+h)-\Psi(w)\le h^\top z(w+h).
\]
Continuity of $z$ yields differentiability of $\Psi$ with gradient $z$.  Finally, the weighted likelihood depends on a mixing distribution only through its fitted vector in $\Ccal$, which proves the last assertion.
\end{proof}

\begin{lemma}[Extreme optimizer characterization]
\label{lem:extreme}
For every $w>0$, the fiber $\Mcal_A(w)$ in \eqref{eq:fiber} is nonempty, compact, and convex.  A mixing distribution $G\in\Mcal_A(w)$ is extreme if and only if
\[
 G=\sum_{j=1}^kp_j\delta_{\theta_j},\qquad p_j>0,
\]
for finitely many distinct points whose evaluation vectors $A(\theta_1),\ldots,A(\theta_k)$ are linearly independent.  Consequently, every extreme optimizer has at most $n$ atoms.  The function $\sext$ in \eqref{eq:sext} is Borel measurable.
\end{lemma}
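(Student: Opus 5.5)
The plan has four parts: show the fiber is nonempty, compact and convex; prove necessity and sufficiency of the atom characterization; deduce the bound of $n$ atoms; and prove measurability of $\sext$.

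\emph{Fiber properties.} By \Cref{lem:regularity}, $\vhat(w)\in\Ccal$, so some $G\in\Pcal(\Theta)$ satisfies $\int A\,dG=\vhat(w)$; hence $\Mcal_A(w)$ is nonempty. The map $G\mapsto\int A\,dG$ is affine and weakly continuous, because $A$ is continuous and bounded on the compact set $\Theta$. So $\Mcal_A(w)$ is the preimage of a point under this map. It is therefore a convex, weakly closed subset of the weakly compact set $\Pcal(\Theta)$, and so it is compact.

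\emph{Extreme points are finitely supported with independent evaluation vectors.} Let $G$ be extreme. I first show that $\supp(G)$ is finite by contradiction. Suppose $\supp G$ contains at least $n+1$ points. Choose $n+1$ disjoint Borel sets $B_1,\ldots,B_{n+1}$, each of positive $G$-mass; this is possible by taking small disjoint balls around distinct support points. The $n+1$ vectors $\int_{B_j}A\,dG\in\RR^n$ are linearly dependent, so some nonzero $c$ satisfies $\sum_j c_j\int_{B_j}A\,dG=0$. Since every coordinate of $A$ is positive, this relation forces $c$ to have coordinates of both signs.

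Now put $h=\sum_j c_j\bm 1_{B_j}$ and $G_\pm=(1\pm\epsilon h)G$ for small $\epsilon>0$. Both measures are nonnegative and have the same fitted vector as $G$. To keep total mass one, I also need $\sum_j c_jG(B_j)=0$. I enforce this by using $n+2$ sets and imposing it as one extra linear constraint alongside the $n$ coordinate constraints. Then $G=\tfrac12(G_++G_-)$ with $G_+\neq G_-$ in $\Mcal_A(w)$, which contradicts extremality. Hence $G$ is supported on at most $n+1$ points, and in particular is finitely supported.

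With $G=\sum_j p_j\delta_{\theta_j}$, suppose $\sum_j c_jA(\theta_j)=0$ for some $c\neq0$. Pairing with any positive coordinate of $A$ shows that $c$ has mixed signs. I want to conclude $\sum_j c_j=0$, but this does not follow from the relation alone. Instead I argue through the fitted vector. Perturb $p\mapsto p+\epsilon c$ and renormalize: $(p+\epsilon c)/(1+\epsilon\sum_j c_j)$ gives fitted vector $\vhat/(1+\epsilon\sum_j c_j)$. This is a positive multiple of $\vhat$, and by strict concavity that multiple must equal $1$. Indeed, the map $s\mapsto w^\top\log(s\vhat)$ is increasing in $s$, so a multiple $s>1$ inside $\Ccal$ would contradict optimality. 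Both signs of $\epsilon$ give multiples on either side of $1$ unless $\sum_j c_j=0$. Hence $\sum_j c_j=0$, and $p\pm\epsilon c$ are two distinct members of the fiber averaging to $G$, which again contradicts extremality.

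\emph{Sufficiency.} Suppose $G=\sum_j p_j\delta_{\theta_j}$ has linearly independent $A(\theta_j)$, and $G=\tfrac12(G_1+G_2)$ with $G_1,G_2$ in the fiber. Then both $G_1$ and $G_2$ are supported on $\{\theta_j\}$. Their fitted vectors are equal, so linear independence forces their weight vectors to coincide; thus $G_1=G_2=G$. Linear independence of $k$ vectors in $\RR^n$ gives $k\le n$, which is the bound of $n$ atoms.

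\emph{Measurability of $\sext$.} This is the main obstacle. I plan to write $\{\sext(w)\ge k\}$ as the projection of a set of tuples $(w,\theta_1,\ldots,\theta_k,p)$ satisfying continuous constraints. These constraints are: $\sum_j p_jA(\theta_j)=\vhat(w)$ with $p\in$ the simplex and all $p_j>0$; and linear independence of the $A(\theta_j)$, expressed as positivity of a Gram determinant. Both conditions are Borel in all variables, since $\vhat$ is continuous. I would then write the set as a countable union over rational lower bounds $p_j\ge1/m$ and Gram determinant $\ge1/m$. Each piece is closed in the variables $(w,\theta,p)$, and the $\theta$ and $p$ coordinates range over compact sets, so its projection is closed ($\sigma$-compactness). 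The countable union of these projections is then Borel.
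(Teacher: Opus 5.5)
Your proof is correct, but it handles the total-mass constraint differently from the paper. The paper first derives a contact identity from first-order optimality. With $\lambda_i=w_i/(\vhat_i(w)\sum_\ell w_\ell)$ and $q(\theta)=\lambda^\top A(\theta)$, one has $q\le1$ on $\Theta$ and $q=1$ on the support of every member of the fiber. Applying $\lambda^\top$ to any vanishing combination of evaluation vectors over an optimizing support, or of the vectors $\int_{D_j}A\,dG$, therefore shows that its coefficients sum to zero. That one fact makes the mass constraint automatic in both steps. The infinite-support step then needs only $n+1$ sets and gives the $n$-atom bound at once, and the finite-support step follows the same way.

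You bypass the contact identity in two ways:
\begin{itemize}
\item In the infinite-support step you impose $\sum_jc_jG(B_j)=0$ as an extra linear equation. This costs one more set and gives an intermediate bound of $n+1$ atoms.
\item In the finite-support step you use a radial argument. The renormalized perturbation has fitted vector $s\vhat(w)$, and $s>1$ would raise the objective by $(\sum_iw_i)\log s$.
\end{itemize}
The radial argument is more elementary. It uses only optimality along the ray through $\vhat(w)$, not the directional-derivative inequality toward every $A(\theta)$. It would also have handled the infinite-support case with $n+1$ sets, because $(1\pm\epsilon h)G$ has fitted vector $\vhat(w)$ and mass $1\pm\epsilon\sum_jc_jG(B_j)$. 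The paper's route instead gives a single reusable statement: linear and affine dependence coincide on every optimizing support.

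Three small repairs are needed.
\begin{itemize}
\item Once you switch to $n+2$ sets, the large-support step must assume that $\supp G$ has at least $n+2$ points. Your conclusion of at most $n+1$ points is then consistent.
\item In the measurability step, the projection of your set of $k$-atom tuples is the set $\mathcal W_k$ of weights admitting an extreme optimizer with exactly $k$ atoms. That is not $\{\sext\ge k\}$: a singleton fiber whose member has more than $k$ atoms shows the difference. You need $\{\sext\ge k\}=\bigcup_{\ell=k}^n\mathcal W_\ell$, as in the paper.
\item Each projection is closed because projection along the compact factor $\Theta^k\times\Delta_k$ is a closed map, not because of $\sigma$-compactness. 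The paper instead restricts $w$ to $[1/j,j]^n$ so that every piece is compact.
\end{itemize}
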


\begin{proof}
\emph{Basic properties.}  Probability measures on compact $\Theta$ form a compact convex set under weak convergence, and the continuous moment constraints defining $\Mcal_A(w)$ are closed.  Nonemptiness follows from the representation of $\vhat(w)\in\Ccal$ as a finite convex combination.

\emph{The contact identity.}  Let $\widehat v=\vhat(w)$ and define
\[
 \lambda_i=\frac{w_i}{\widehat v_i\sum_{\ell=1}^nw_\ell},
 \qquad q(\theta)=\lambda^\top A(\theta).
\]
The directional derivative of the weighted objective from $\widehat v$ toward $A(\theta)$ is nonpositive, which gives $q(\theta)\le1$.  If $G$ is in the fiber, then $\int q\,dG=\lambda^\top\widehat v=1$; hence $q=1$ on $\supp(G)$.  Therefore, on an optimizing support, linear dependence and affine dependence of the evaluation vectors are equivalent: applying $\lambda^\top$ to a linear relation shows that its coefficients sum to zero.

\emph{Large support is not extreme.}  Suppose that $G$ has at least $n+1$ support points, including the case of infinite support.  Choose $n+1$ disjoint Borel neighborhoods $D_1,\ldots,D_{n+1}$ of positive $G$-mass.  The vectors $a_j=\int_{D_j}A\,dG\in\RR^n$ are linearly dependent.  For coefficients $c_j$, not all zero, with $\sum_jc_ja_j=0$, define the signed measure $d\nu=\sum_jc_j\bm 1_{D_j}\,dG$.  Then $\int A\,d\nu=0$ and, because $q=1$ $G$-almost surely, $\nu(\Theta)=\int q\,d\nu=0$.  For sufficiently small $t>0$, both $G+t\nu$ and $G-t\nu$ are distinct probability measures in the same fiber, so $G$ is not extreme.

\emph{Finite support.}  If $G$ has finite support but the evaluation vectors are dependent, take a nonzero relation $\sum_jc_jA(\theta_j)=0$.  The contact identity implies $\sum_jc_j=0$, and small perturbations $p_j\pm tc_j$ again decompose $G$ inside the fiber.  Conversely, if the evaluation vectors are linearly independent and $G=tG_1+(1-t)G_2$ with $G_1,G_2$ in the fiber, then nonnegativity forces both measures to be supported on $\supp(G)$.  Their masses solve the same full-rank linear system, so $G_1=G_2=G$.  This proves the characterization and the $n$-atom bound.

\emph{Measurability.}  Let $\mathcal W_k$ be the set of weights admitting an extreme optimizer with exactly $k$ atoms.  For integers $j\ge2$, consider the compact set of $(w,\theta_1,\ldots,\theta_k,p)$ in $[1/j,j]^n\times\Theta^k\times\Delta_k$ satisfying
\[
 p_\ell\ge1/j,\qquad
 \sum_{\ell=1}^kp_\ell A(\theta_\ell)=\vhat(w),
 \qquad \det(V^\top V)\ge1/j,
\]
where $V$ has columns $A(\theta_\ell)$.  Its projection onto $w$ is compact.  Taking the countable union over $j$ gives $\mathcal W_k$.  Hence $\{w:\sext(w)\ge k\}=\bigcup_{\ell=k}^n\mathcal W_\ell$ is Borel for each $k$.
\end{proof}

\subsection{Support directions and radial localization}
\label{app:effective-local}

We first prove the support--sensitivity inequality stated in \Cref{sec:effective}.

\begin{proof}[Proof of \Cref{lem:hessian-summary}]
By \Cref{lem:regularity}, $z=\nabla\Psi$ is locally Lipschitz, so Rademacher's theorem gives differentiability almost everywhere.  At those points its derivative is the symmetric positive semidefinite Hessian of the convex function $\Psi$.

Fix such a point and an extreme optimizer $G=\sum_{j=1}^kp_j\delta_{\theta_j}$.  Write $A_j=A(\theta_j)$ and $\widehat v=\vhat(w)$.  Varying the masses while preserving their sum generates the relative fitted-vector subspace
\[
 \mathcal S=\left\{
 \left(\frac{\sum_jc_jA_{ij}}{\widehat v_i}\right)_{i=1}^n:
 \sum_jc_j=0\right\}.
\]
By \Cref{lem:extreme}, the evaluation vectors are linearly, hence affinely, independent; therefore $\dim(\mathcal S)=k-1$.  Every sufficiently small element of $\mathcal S$ is feasible in both signs, by changing only the positive masses.  First-order optimality gives $w^\top u=0$ for $u\in\mathcal S$.

Let $P_G$ be the orthogonal projection onto $\diag(w)^{1/2}\mathcal S$.  For small $h$, choose the feasible relative variation $u=\diag(w)^{-1/2}P_G\diag(w)^{-1/2}h$, that is, the candidate fitted vector $\widehat v(\one+u)$.  Expanding its objective at weight $w+h$ gives
\[
 \Psi(w+h)\ge\Psi(w)+h^\top z(w)
 +h^\top u-\frac12u^\top\diag(w)u+o(\norm{h}_2^2).
\]
Because $w^\top u=0$, the two quadratic terms combine to
\[
 \frac12h^\top\diag(w)^{-1/2}P_G\diag(w)^{-1/2}h.
\]
Comparison with the second-order expansion of $\Psi$ at $w$ yields $\nabla^2\Psi(w)\succeq\diag(w)^{-1/2}P_G\diag(w)^{-1/2}$, and conjugation by $\diag(w)^{1/2}$ proves the claim.  The exceptional set is the nondifferentiability set of $z$, which does not depend on the optimizer.  Choosing an extreme optimizer that attains $\sext(w)$ gives the final statement.
\end{proof}

For the localization argument, retain \eqref{eq:relative-set}--\eqref{eq:width} and write
\[
 \xi=w-\one,\qquad \omega_P=\norm{P\xi}_2,\qquad \omega=\norm{\xi}_2,
 \qquad \rho=\frac1{16}.
\]
Here $\omega_P$ is the part of the weight perturbation that is visible in the effective subspace and $\omega$ is its total size.  For $u\in\Ucal$, let $t(u)=\log(\one+u)$ and $F_w(u)=w^\top t(u)$, so that $F_w(u)=w^\top\log v-w^\top\log v_0$ when $u=v/v_0-\one$.  For the exact fit, write $\widehat u(w)=\vhat(w)/v_0-\one$, $t(w)=t(\widehat u(w))$, and $Z(w)=\xi^\top t(w)$; these agree with the definitions in \Cref{sec:effective}.

\begin{lemma}[Radial localization]
\label{lem:radial}
Let
\begin{equation}\label{eq:localization-event}
 E=\left\{w>0:
\norm{w-\one}_\infty\le\frac1{16},\quad
 \omega_P\le\frac{\rho}{24},\quad \eta\,\omega\le\frac{\rho^2}{24}\right\}.
\end{equation}
For $w\in E$, every $0\le\tau<\rho^2/12$ and every $\tau$-optimal fit with relative vector $u=\vtil/v_0-\one$ satisfy $\norm{u}_2<\rho$ and
\begin{align}
 \norm{u}_2^2&\le36\omega_P^2+12\eta\omega+12\tau,\label{eq:radial-u}\\
 0\le-\one^\top\log(\one+u)
 &\le37\omega_P^2+13\eta\omega+13\tau,\label{eq:radial-loss}\\
 \norm{\log(\one+u)}_2^2
 &\le72\omega_P^2+24\eta\omega+24\tau.\label{eq:radial-log}
\end{align}
For the exact fit, $0\le Z(w)\le37\omega_P^2+13\eta\omega$.  The set $E$ is compact and $\norm{t(w)}_\infty<1/8$ on $E$.
\end{lemma}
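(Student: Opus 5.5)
The plan is to reduce everything to a one-dimensional concavity argument along rays in $\Ucal$, followed by a second-order expansion of $\log(\one+u)$ in the regime $\norm{u}_2\le\rho$.

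\textbf{Basic facts.} $\Ucal$ is convex because $\Ccal$ is convex, and $0\in\Ucal$ because $v_0\in\Ccal$. The map $u\mapsto F_w(u)=w^\top\log(\one+u)$ is concave on $\Ucal$.

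Two inequalities drive the argument. First, $\tau$-optimality together with feasibility of $u=0$ gives
\[
 F_w(u)\ge\Psi(w)-w^\top\log v_0-\tau\ge F_w(0)-\tau=-\tau .
\]
Second, first-order optimality of $v_0$ for the unit weights gives $\one^\top u\le0$ for every $u\in\Ucal$, since the directional derivative of $\one^\top\log v$ at $v_0$ toward $v$ is $\one^\top(v/v_0-\one)$.

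We split $F_w(u)=\one^\top\log(\one+u)+\xi^\top\log(\one+u)$. Then we split the second term as
\[
 \xi^\top u+\xi^\top\{\log(\one+u)-u\},
 \qquad
 \xi^\top u=\xi^\top Pu+\xi^\top(I-P)u,
\]
where the piece $\xi^\top Pu$ is bounded by $\omega_P\norm{u}_2$ and the piece $\xi^\top(I-P)u$ is bounded by $\eta\omega$ through the residual width \eqref{eq:width}.

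\textbf{Key quadratic inequality.} Let $u\in\Ucal$ with $\norm{u}_2\le\rho=1/16$, so that $\norm{u}_\infty\le1/16$. Use $\log(1+x)\le x-x^2(\tfrac12-\tfrac{1}{48})$ and $\abs{\log(1+x)-x}\le c_1x^2$ with $c_1\approx0.55$ on $\abs{x}\le1/16$. Combined with $\one^\top u\le0$ and $\norm{\xi}_\infty\le1/16$, this gives
\[
 -\tau\le F_w(u)\le-c\norm{u}_2^2+\omega_P\norm{u}_2+\eta\omega,
 \qquad c\approx0.44 .
\]

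\textbf{Radial step.} Suppose some $\tau$-optimal $u$ has $\norm{u}_2\ge\rho$. Put $s=\rho/\norm{u}_2\le1$. The point $su$ lies in $\Ucal$ by convexity, and concavity of $F_w$ gives $F_w(su)\ge(1-s)F_w(0)+sF_w(u)\ge-s\tau\ge-\tau$. Applying the key inequality at $su$, where $\norm{su}_2=\rho$, yields
\[
 c\rho^2\le\omega_P\rho+\eta\omega+\tau<\rho^2\Bigl(\tfrac1{24}+\tfrac1{24}+\tfrac1{12}\Bigr)=\rho^2/6 .
\]
This contradicts $c\approx0.44$, so $\norm{u}_2<\rho$.

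\textbf{Deriving the bounds.} Now the key inequality applies to $u$ itself. Absorbing $\omega_P\norm{u}_2\le\tfrac c2\norm{u}_2^2+\omega_P^2/(2c)$ gives \eqref{eq:radial-u}, with constants well inside $36$ and $12$.

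For \eqref{eq:radial-loss}, the lower bound follows from $\one^\top\log(\one+u)\le\one^\top u\le0$. For the upper bound, the relation $F_w(u)\ge-\tau$ gives
\[
 -\one^\top\log(\one+u)\le\tau+\omega_P\norm{u}_2+\eta\omega+\tfrac{c_1}{16}\norm{u}_2^2 .
\]
We then use AM--GM on the middle term and plug in \eqref{eq:radial-u}.

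For \eqref{eq:radial-log}, note that $\abs{\log(1+x)}\le1.04\abs{x}$ on $\abs{x}\le1/16$, so $\norm{\log(\one+u)}_2^2\le2\norm{u}_2^2$.

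For the exact fit, take $\tau=0$. Then $Z(w)=F_w(\widehat u)-\one^\top t(w)$ is a sum of two nonnegative terms, and it is bounded above by the same expression used for \eqref{eq:radial-loss}.

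\textbf{Topological claims.} $E$ is closed and bounded, since $\omega_P$ and $\omega$ are continuous in $w$. It stays inside $[15/16,17/16]^n$, hence inside the open orthant, so $E$ is compact. Finally, $\norm{t(w)}_\infty\le\abs{\log(1-1/16)}<1/8$ on $E$ because $\norm{\widehat u}_\infty<\rho$.

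\textbf{Main obstacle.} The hard step is the radial argument: a priori a $\tau$-optimal $u$ need not be small, so the Taylor expansion cannot be used directly. Convexity of $\Ucal$ combined with concavity of $F_w$ is what allows reduction to the sphere $\norm{u}_2=\rho$. The remaining work is numerical bookkeeping of the constants.
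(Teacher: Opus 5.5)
Your proposal is correct and follows essentially the same route as the paper. Both arguments use the quadratic upper bound on $F_w$ obtained from $\one^\top u\le0$ and the split $\xi^\top u=\xi^\top Pu+\xi^\top(I-P)u$. Both then contract radially to the sphere $\norm{u}_2=\rho$ using convexity of $\Ucal$ and concavity of $F_w$, and finish with Young's inequality and elementary bounds on $\log(1+x)$.

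The only differences are in the constants. You expand $\one^\top\log(\one+u)$ and $\xi^\top\log(\one+u)$ separately on $\norm{u}_2\le\rho$, which gives curvature $c\approx0.44$. The paper instead uses $w_i\ge15/16$ together with $\log(1+s)\le s-s^2/3$ on $\norm{u}_2\le1/4$, which gives curvature $1/6$.
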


\begin{proof}
\emph{A quadratic upper bound.}  Equal-weight optimality at $v_0$ implies $\one^\top u\le0$ for every $u\in\Ucal$.  If $\norm{u}_2\le1/4$, then $\log(1+s)\le s-s^2/3$ coordinatewise.  On $E$, every $w_i\ge15/16$, so
\[
 F_w(u)\le \one^\top u+\xi^\top u-\frac{15}{48}\norm{u}_2^2
 \le \xi^\top u-\frac16\norm{u}_2^2
 \le \omega_P\norm{u}_2+\eta\,\omega-\frac16\norm{u}_2^2,
\]
where the last step splits $\xi^\top u=(P\xi)^\top u+\xi^\top(I-P)u$ and uses \eqref{eq:width}.

\emph{Localization.}  On the sphere $\norm{u}_2=\rho$, the conditions defining $E$ give $F_w(u)\le\rho^2/24+\rho^2/24-\rho^2/6=-\rho^2/12$ for every feasible $u$.  A $\tau$-optimal fit has $F_w(u)\ge-\tau$ because $u=0$ is feasible.  If $\norm{u}_2\ge\rho$, convexity of $\Ucal$ and concavity of $F_w$ allow radial contraction of $u$ to the sphere without decreasing $F_w$ below $-\tau$, contradicting $\tau<\rho^2/12$.  Hence $\norm{u}_2<\rho$.

\emph{Quantitative bounds.}  Let $y=\norm{u}_2$.  The quadratic upper bound and $F_w(u)\ge-\tau$ give $y^2\le6(\omega_Py+\eta\omega+\tau)$.  The bound $6\omega_Py\le y^2/2+18\omega_P^2$ proves \eqref{eq:radial-u}.  For $\abs{u_i}<\rho$,
\[
 \abs{\log(1+u_i)}\le\frac{\abs{u_i}}{1-\rho},
 \qquad
 \abs{\log(1+u_i)-u_i}\le\frac{u_i^2}{2(1-\rho)}\le u_i^2.
\]
The first inequality and \eqref{eq:radial-u} prove \eqref{eq:radial-log}.  Moreover,
\[
 \xi^\top\log(\one+u)
 \le \omega_Py+\eta\,\omega+\frac1{16}y^2
 \le37\omega_P^2+13\eta\omega+12\tau,
\]
where the constants are enlarged after applying Young's inequality and \eqref{eq:radial-u}.  Since $F_w(u)=\one^\top\log(\one+u)+\xi^\top\log(\one+u)\ge-\tau$, and equal-weight optimality gives $-\one^\top\log(\one+u)\ge0$, we obtain \eqref{eq:radial-loss}.  For the exact fit, $F_w(\widehat u)\ge0$ gives $Z(w)\ge-\one^\top t(w)\ge0$ together with the stated upper bound.

\emph{Compactness.}  The set $E$ is closed inside $[15/16,17/16]^n$, hence compact.  For the exact fit, $\norm{t(w)}_\infty\le\norm{\widehat u(w)}_2/(1-\rho)<1/15<1/8$.
\end{proof}

\subsection{The Gamma change of variables}
\label{app:effective-cov}

\begin{lemma}[Determinant moment]
\label{lem:determinant-moment}
Let $E$ be the event in \eqref{eq:localization-event}.  For independent $W_i\sim\mathrm{Gamma}(\alpha,\alpha)$,
\begin{equation}\label{eq:determinant-moment}
 \EE\left[\bm 1_{\{W\in E\}}
 \left(\frac{17}{9}\right)^{\sext(W)-1}
 \exp\{-\alpha[Z(W)+\norm{t(W)}_2^2]\}\right]\le1.
\end{equation}
\end{lemma}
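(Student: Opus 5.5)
The plan is a change of variables via the map $\mathcal F$ of \eqref{eq:change-map}, restricted to $E$. Write $g_\alpha(w)=\prod_i c_\alpha w_i^{\alpha-1}e^{-\alpha w_i}$ for the joint Gamma density. The key identity is the density ratio at $\mathcal F(w)$ versus $w$. Since $\mathcal F_i(w)=w_i(1+t_i)$,
\[
 \frac{g_\alpha(\mathcal F(w))}{g_\alpha(w)}
 =\prod_i(1+t_i)^{\alpha-1}\exp\Bigl\{-\alpha\sum_iw_it_i\Bigr\}.
\]
I would use $\log(1+t_i)\ge t_i-t_i^2$ for $\abs{t_i}<1/8$ (by \Cref{lem:radial}). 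Together with $\sum_iw_it_i=\one^\top t+Z(w)$, this gives
\[
 \prod_i(1+t_i)^{\alpha}e^{-\alpha\sum_iw_it_i}\ge\exp\{-\alpha[Z(w)+\norm{t}_2^2]\}.
\]
Hence
\[
 g_\alpha(w)\,e^{-\alpha[Z+\norm t^2]}\prod_i(1+t_i)^{-1}\le g_\alpha(\mathcal F(w)).
\]

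\emph{Step 1 (Jacobian lower bound).} At a.e.\ $w\in E$, \Cref{lem:hessian-summary} applies. The Jacobian is
\[
 D\mathcal F(w)=\diag(\one+t)+\diag(w)\nabla^2\Psi(w).
\]
Write $H=\diag(w)^{1/2}\nabla^2\Psi\diag(w)^{1/2}\succeq P_*$, which is positive semidefinite. Then $D\mathcal F$ is similar to
\[
 \diag(\one+t)+H=\diag(\one+t)^{1/2}\bigl(I+\tilde H\bigr)\diag(\one+t)^{1/2},
 \qquad \tilde H=\diag(\one+t)^{-1/2}H\diag(\one+t)^{-1/2}.
\]
Since $1+t_i\le 9/8$, we get $\tilde H\succeq\tfrac89P_*$. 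Therefore
\[
 \det D\mathcal F\ge\prod_i(1+t_i)\,\bigl(1+\tfrac89\bigr)^{\sext-1}=\prod_i(1+t_i)\,(17/9)^{\sext-1}.
\]

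\emph{Step 2 (injectivity on $E$).} This is the main obstacle. I would first try monotonicity. $\mathcal F$ is the gradient of $\Phi(w)=\Psi(w)+\tfrac12\norm w^2-\dots$; more precisely, $\mathcal F_i=w_i+w_i t_i$ with $t=\nabla\Psi-z(\one)$. Consider the candidate potential $\Phi(w)=\tfrac12\norm{w}^2+{}$something; this is not quite a gradient, because $w_it_i$ is not $\partial_i$ of a scalar. So instead I would verify injectivity through a strong monotonicity bound. The symmetric part of $D\mathcal F$ is
\[
 \diag(\one+t)+\tfrac12\bigl(\diag(w)\nabla^2\Psi+\nabla^2\Psi\diag(w)\bigr).
\]
I would try to show that this is positive definite on $E$, using $w_i\in[15/16,17/16]$ and $\nabla^2\Psi\succeq0$. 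After a similarity by $\diag(w)^{1/2}$ this reduces to positivity of $\diag(\one+t)+H$, but $E$ need not be convex, so plain monotonicity is not directly enough. The fallback is a direct argument. Suppose $\mathcal F(w)=\mathcal F(w')$, and pair the difference with $\log(w/w')$ or with $w-w'$. Use the variational inequality $(w-w')^\top(z(w)-z(w'))\ge0$ (convexity of $\Psi$), together with $\norm{t}_\infty<1/8$ and the relation $w_i(1+t_i)=w_i'(1+t_i')$, to force $w=w'$. Concretely, $\log w_i-\log w_i'=\log(1+t_i')-\log(1+t_i)$. Multiplying by $(z_i(w)-z_i(w'))=(t_i-t_i')$ and summing gives a nonpositive quantity, because $\log(1+\cdot)$ is increasing. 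Monotonicity of $\nabla\Psi$ in $\log$-coordinates, if obtainable, would then give equality. I expect this to require care.

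\emph{Step 3 (area formula).} Assume injectivity on the compact set $E$, and note that $\mathcal F$ is Lipschitz. The injective area formula then gives
\[
 \int_E g_\alpha(\mathcal F(w))\abs{\det D\mathcal F(w)}\,dw=\int_{\mathcal F(E)}g_\alpha\le1.
\]
Inserting the Step 1 bound and the density-ratio inequality, the factors $\prod_i(1+t_i)$ cancel. This yields
\[
 \int_E g_\alpha(w)(17/9)^{\sext(w)-1}e^{-\alpha[Z+\norm t^2]}\,dw\le1,
\]
which is \eqref{eq:determinant-moment}. Measurability of the integrand follows from \Cref{lem:extreme}.
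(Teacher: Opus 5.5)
Your architecture is the paper's: the Gamma density ratio at $\mathcal F(w)$ versus $w$ with $\log(1+t)\ge t-t^2$, a Jacobian lower bound from \Cref{lem:hessian-summary}, and the injective area formula on the compact set $E$. The gap is Step~2, which you leave open, and injectivity is not optional. Without it the area formula only gives $\int g_\alpha(y)N(y)\,dy$ with $N(y)=\#\{w\in E:\mathcal F(w)=y\}$, and the bound by $1$ is lost. The route you carry out pairs the relation with $\log w-\log w'$ and then needs monotonicity of $\nabla\Psi$ in log-coordinates. Convexity does not supply this; it gives only $(w-w')^\top\{z(w)-z(w')\}\ge0$. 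Indeed, since $z(cw)=z(w)$ for $c>0$, replacing $w$ by $cw$ shifts $\{\log w-\log w'\}^\top\{z(w)-z(w')\}$ by $\log c\cdot\one^\top\{z(w)-z(w')\}$, which has no fixed sign. The symmetric-part idea also fails, but not for the reason you give. $E$ is in fact convex: it is an intersection of an $\ell_\infty$-ball, an $\ell_2$-ball and a sublevel set of $w\mapsto\norm{P(w-\one)}_2$. The real problem is that the similarity $D\mathcal F=\diag(w)^{1/2}\{\diag(\one+t)+H\}\diag(w)^{-1/2}$ does not preserve positivity of the symmetric part. So positive definiteness of $\diag(\one+t)+H$ says nothing about $x^\top D\mathcal F\,x$.

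The missing step is a two-line computation using the pairing with $w-w'$, which you listed but did not use. If $\mathcal F(w)=\mathcal F(w')=y$, then $w_i=y_i/(1+t_i)$ and $w_i'=y_i/(1+t_i')$. Because $z(w)-z(w')=t(w)-t(w')$,
\[
 (w-w')^\top\{z(w)-z(w')\}
 =-\sum_i\frac{y_i\{t_i(w)-t_i(w')\}^2}{\{1+t_i(w)\}\{1+t_i(w')\}}\le0 .
\]
Monotonicity of $\nabla\Psi$ makes the left side nonnegative, so $t(w)=t(w')$ and hence $w=w'$. This holds on the whole set where $1+t_i(w)>0$ for all $i$, which contains $E$, so no convexity is needed.

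Separately, in Step~1 the inference $\tilde H\succeq\frac89P_*$ is false as a Loewner inequality. In general $D_t^{-1/2}P_*D_t^{-1/2}$ and $P_*$ have different ranges, so their difference is indefinite; take $P_*=vv^\top$ with $v=(1,1)/\sqrt2$ and $D_t=\diag(1,7/8)$. The determinant bound still holds. Write $P_*=VV^\top$ and $D_t=\diag(\one+t)$. Then $\det(I+\tilde H)\ge\det(I+D_t^{-1/2}VV^\top D_t^{-1/2})=\det(I+V^\top D_t^{-1}V)\ge(17/9)^{\sext-1}$, because $V^\top D_t^{-1}V\succeq\frac89I$. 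This is how the paper argues.
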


\begin{proof}
\emph{Injectivity.}  On the open set $U=\{w>0:1+t_i(w)>0\ \text{for all }i\}$, define the map $\mathcal F$ in \eqref{eq:change-map}.  It is locally Lipschitz by \Cref{lem:regularity}.  Suppose $\mathcal F(w)=\mathcal F(w')=y$.  Then $w_i=y_i/\{1+t_i(w)\}$, $w_i'=y_i/\{1+t_i(w')\}$, and, since $z(w)-z(w')=t(w)-t(w')$,
\[
 (w-w')^\top\{z(w)-z(w')\}
 =-\sum_i\frac{y_i\{t_i(w)-t_i(w')\}^2}
 {[1+t_i(w)][1+t_i(w')]}\le0.
\]
The gradient of the convex function $\Psi$ is monotone, so the left side is nonnegative.  Hence $t(w)=t(w')$, and then $w=w'$.  Thus $\mathcal F$ is injective on $U$.

\emph{Jacobian lower bound.}  At a point where the Hessian exists, let
\[
 D_t=\diag(\one+t(w)),\qquad
 R_w=\diag(w)^{1/2}\nabla^2\Psi(w)\diag(w)^{1/2}.
\]
Since $D_t$ is diagonal, the derivative $D\mathcal F=D_t+\diag(w)\nabla^2\Psi(w) =\diag(w)^{1/2}(D_t+R_w)\diag(w)^{-1/2}$ is similar to $D_t+R_w$.  By \Cref{lem:hessian-summary}, $R_w\succeq P_*$ for an orthogonal projection $P_*=VV^\top$ of rank $\sext(w)-1$, where $V$ has orthonormal columns.  On $E$, \Cref{lem:radial} gives $7I/8\preceq D_t\preceq9I/8$, so $V^\top D_t^{-1}V\succeq8I/9$.  Monotonicity of the determinant on positive definite matrices, together with the fact that $D_t^{-1/2}VV^\top D_t^{-1/2}$ and $V^\top D_t^{-1}V$ have the same nonzero eigenvalues, yields
\[
 \det(D_t+R_w)\ge\det(D_t+VV^\top)
 =\det(D_t)\det(I+V^\top D_t^{-1}V),
\]
and therefore
\begin{equation}\label{eq:jacobian-lower}
 \det D\mathcal F(w)
 \ge\prod_i\{1+t_i(w)\}
 \left(\frac{17}{9}\right)^{\sext(w)-1}>0.
\end{equation}

\emph{Density ratio.}  Let $p_\alpha$ be the product Gamma density.  Since $\mathcal F_i(w)=w_i\{1+t_i(w)\}$,
\[
 \frac{p_\alpha(\mathcal F(w))}{p_\alpha(w)}
 =\prod_i\{1+t_i(w)\}^{\alpha-1}
   \exp\left\{-\alpha\sum_iw_it_i(w)\right\}.
\]
Multiply by \eqref{eq:jacobian-lower}, write $\sum_iw_it_i=\one^\top t+Z$, and use $\log(1+t)\ge t-t^2$ for $\abs{t}\le1/8$.  This gives, almost everywhere on $E$,
\[
 \frac{p_\alpha(\mathcal F(w))\det D\mathcal F(w)}{p_\alpha(w)}
 \ge\left(\frac{17}{9}\right)^{\sext(w)-1}
 e^{-\alpha\{Z(w)+\norm{t(w)}_2^2\}}.
\]

\emph{Area formula.}  The set $E$ is compact and lies inside $U$.  The injective area formula for Lipschitz maps \citep[Chapter 3]{evansgariepy2015} therefore gives
\[
 \int_Ep_\alpha(\mathcal F(w))\abs{\det D\mathcal F(w)}\,dw
 =\int_{\mathcal F(E)}p_\alpha(y)\,dy\le1.
\]
The determinant is positive almost everywhere on $E$.  Integrating the previous lower bound against $p_\alpha$ over $E$ proves \eqref{eq:determinant-moment}.
\end{proof}

\subsection{Proof of the effective-dimension theorem}
\label{app:effective-proof}

\begin{proof}[Proof of \Cref{thm:effective}]
\emph{Concentration of the visible perturbation.}  Let $\xi=W-\one$ and put $N_q=r\log5+q$.  The Gamma moment-generating function implies that, for every deterministic unit vector $v$ and $\abs{\lambda}\le\sqrt\alpha/2$,
\[
 \log\EE e^{\lambda\sqrt\alpha\,v^\top\xi}
 =\alpha\sum_i\left[-\log\left(1-\frac{\lambda v_i}{\sqrt\alpha}\right)
 -\frac{\lambda v_i}{\sqrt\alpha}\right]\le\lambda^2.
\]
Consequently, $\PP\{\sqrt\alpha\,v^\top\xi>u\}\le e^{-u^2/4}$ for $0\le u\le\sqrt\alpha$.  A $1/2$-net of the unit sphere in the $r$-dimensional range of $P$ has at most $5^r$ points and satisfies $\norm{y}_2\le2\max_v v^\top y$ for $y$ in that range.  Hence
\begin{equation}\label{eq:proj-concentration}
 \PP\{\alpha\norm{P\xi}_2^2>16N_q\}\le e^{-q}
\end{equation}
provided $\alpha\ge4N_q$.

\emph{Coordinatewise concentration.}  The scalar Gamma Chernoff bound gives
\[
 \PP\{\norm{\xi}_\infty>2\sqrt{Q/\alpha}\}
 \le2n e^{-Q}=e^{-q}
\]
when $\alpha\ge4Q$.  Let $E_q$ be the intersection of these two events.  Then $\PP(E_q^c)\le2e^{-q}$ and, on $E_q$,
\[
 \omega_P\le4\sqrt{N_q/\alpha},\qquad
 \omega\le2\sqrt{nQ/\alpha},\qquad
 \eta\,\omega\le2/\alpha,
\]
where the last inequality uses \eqref{eq:effective-conditions}.  Taking the universal constant in \eqref{eq:effective-conditions} sufficiently large ensures $E_q\subset E$, with $E$ as in \eqref{eq:localization-event}.

\emph{Likelihood bounds.}  Apply \Cref{lem:radial}.  Since $N_q\le C(r+q)$, its exact-fit bounds yield
\[
 0\le\one^\top\log v_0-\one^\top\log\vhat(W)
 \le C\frac{r+q}{\alpha},
 \qquad
 \norm{\log\{\vhat(W)/v_0\}}_2^2
 \le C\frac{r+q}{\alpha}.
\]
The same lemma gives \eqref{eq:effective-approx-fit} simultaneously for all $\tau<c$, after taking $c=\rho^2/12$ and enlarging $C$.

\emph{Support bound.}  Adding the exact-fit bounds of \Cref{lem:radial} for $Z(W)$ and for $\norm{t(W)}_2^2$ gives, on $E_q$,
\[
 \alpha\{Z(W)+\norm{t(W)}_2^2\}
 \le109\alpha\omega_P^2+37\alpha\eta\omega
 \le1744N_q+74=:\Lambda_q.
\]
Set $\kappa=\log(17/9)$.  By \Cref{lem:determinant-moment} and Markov's inequality,
\[
 \PP\{E_q,\ (\sext(W)-1)\kappa>\Lambda_q+q\}\le e^{-q}.
\]
Adding $\PP(E_q^c)$ shows that, outside probability $3e^{-q}$,
\[
 \sext(W)
 \le1+\frac{1744N_q+74+q}{\kappa}
 \le C(r+q).
\]
All conclusions hold on the same event.
\end{proof}

\section{Gaussian approximation and uniqueness}
\label{app:gaussian}

\subsection{Proof of the Gaussian residual-width bound}

\begin{proof}[Proof of \Cref{prop:gaussian-width}]
Set $\gamma_i=\phi_d(x_i)/v_{0,i}$.  The Gaussian identity gives
\begin{equation}\label{eq:gaussian-ratio}
 \frac{\phi_d(x_i-\theta)}{v_{0,i}}
 =\gamma_i e^{-\norm{\theta}^2/2}e^{x_i^\top\theta}.
\end{equation}
Because
\[
 \frac{v_{0,i}}{\phi_d(x_i)}
 =\int e^{x_i^\top\vartheta-\norm{\vartheta}^2/2}\,dG_0(\vartheta)
 \ge e^{-ST-S^2/2},
\]
we have $\gamma_i\le e^{ST+S^2/2}$.  For $\abs{s}\le ST$, Taylor's theorem yields
\[
 \left|e^s-\sum_{j=0}^m\frac{s^j}{j!}\right|
 \le e^{ST}\frac{(ST)^{m+1}}{(m+1)!}.
\]
Applying this with $s=x_i^\top\theta$ and multiplying by the remaining factors in \eqref{eq:gaussian-ratio} shows that
\[
 \gamma_i e^{-\norm{\theta}^2/2}
 \sum_{j=0}^m\frac{(x_i^\top\theta)^j}{j!}
\]
approximates $\phi_d(x_i-\theta)/v_{0,i}$ with coordinatewise error at most $\eps_m$ in \eqref{eq:eps-m}, uniformly over $i$ and $\theta\in K$.

The multinomial expansion of $(x_i^\top\theta)^j$ uses only monomials $x_i^\beta$ with multi-index $\abs{\beta}\le m$.  Hence every truncated vector lies in the span of
\[
 \left\{(\gamma_ix_i^\beta)_{i=1}^n:\abs{\beta}\le m\right\},
\]
whose dimension is at most $\binom{m+d}{d}$.  Adjoin $\one$ to this span and let $P_m$ be the orthogonal projection onto the resulting subspace; its rank obeys \eqref{eq:rank-m}.  Integrating the coordinatewise approximation under any $G\in\Pcal(K)$ preserves its error, so $f_G(x)/v_0$ lies within coordinatewise distance $\eps_m$, hence Euclidean distance $\sqrt n\,\eps_m$, of the range of $P_m$.  Since $\one$ also lies in that range and orthogonal projection minimizes Euclidean distance to the subspace, \eqref{eq:eta-m} follows.  If $ST=0$, the exponential term is constant in the relevant inner product and the expansion is exact.
\end{proof}

\subsection{Proof of the conditional Gaussian corollary}

\begin{proof}[Proof of \Cref{cor:conditional}]
Choose any ordinary NPMLE $\Ghat_0$ and let $v_0=(f_{\Ghat_0}(x_i))_{i=1}^n$; by \Cref{lem:regularity}, this vector does not depend on the choice of $\Ghat_0$.  Apply \Cref{prop:gaussian-width} with
\[
 m=\left\lceil C_m\frac{\log n}{\log\log n}\right\rceil,
\]
where $C_m>8$ is a sufficiently large constant depending only on the fixed parameters.  The factorial inequality $(m+1)!\ge\{(m+1)/e\}^{m+1}$ and $T\le C_0\sqrt{\log n}$ give
\begin{align*}
 \log\eps_m
 &\le 2ST+S^2/2+(m+1)\log\frac{eST}{m+1}\\
 &\le-3\log n
\end{align*}
for all sufficiently large $n$.  Indeed, $\log(ST)\le\tfrac12\log\log n+O(1)$, whereas $\log(m+1)=\log\log n-\log\log\log n+O(1)$.  If $ST=0$, the error is zero.  Thus
\[
 r:=\rank(P_m)=O(r_n),\qquad
 \eta:=\eta_m\le n^{-5/2}.
\]

Take $q=(b+2)\log n$ and $Q=q+\log(2n)$.  Then $r+Q=O_b(\bar r_n)$ and
\[
 \frac{\alpha_n}{r+Q}\asymp_b\log n\to\infty.
\]
Moreover,
\[
 \eta\sqrt{\alpha_n nQ}
 \le n^{-5/2}\sqrt{\bar r_n\log n\cdot nQ}=o(1).
\]
The hypotheses of \Cref{thm:effective} therefore hold.  Its failure probability is at most $3n^{-b-2}$, and it yields $\sext(W)\le C_b\bar r_n$ together with the likelihood and fitted-vector conclusions.

By \Cref{lem:gamma-concentration},
\begin{equation}\label{eq:gamma-coordinate}
 \PP\{\abs{W_i-1}>t\}\le2e^{-\alpha_nt^2/4},
 \qquad 0<t\le1.
\end{equation}
Choose $t=C_b'/\sqrt{\bar r_n}$ with $(C_b')^2/4\ge b+3$.  Since $\alpha_nt^2=(C_b')^2\log n$, a union bound gives
\[
 \PP\{\max_i\abs{W_i-1}>t\}
 \le2n^{1-(C_b')^2/4}\le2n^{-b-2}.
\]
Intersect the events and enlarge constants.
\end{proof}

\subsection{Proof of generic evaluation independence}
\label{app:generic}

\begin{proof}[Proof of \Cref{prop:generic}]
\emph{Parametrizing dependences.}  Fix an integer $k$ satisfying $k(d+1)\le n$.  A nonzero linear dependence among $k$ Gaussian evaluation columns has a coefficient vector with at least one nonzero coordinate.  Normalize that coordinate to one; there are only $k$ possible choices.  In one such chart, collect the $kd$ location coordinates and the remaining $k-1$ coefficients in a parameter $\zeta\in\mathcal O\subset\RR^p$, where
\[
 p=kd+k-1=k(d+1)-1<n.
\]
The domain $\mathcal O$ is open after imposing distinctness of the locations.  For the normalized coefficient vector $(c_1,\ldots,c_k)$, define
\[
 g_\zeta(y)=\sum_{j=1}^kc_j\phi_d(y-\theta_j),\qquad y\in\RR^d.
\]
This function is jointly real analytic in $(\zeta,y)$.

\emph{Nontriviality.}  For every admissible $\zeta$, $g_\zeta$ is not identically zero.  Choose $v\in\RR^d$ such that the scalars $v^\top\theta_j$ are distinct; the excluded $v$ lie in finitely many proper hyperplanes.  If $g_\zeta(sv)=0$ for every $s\in\RR$, then after division by the positive common Gaussian factor,
\[
 \sum_{j=1}^kc_j e^{-\norm{\theta_j}^2/2}e^{s v^\top\theta_j}=0
 \qquad\text{for all }s.
\]
Differentiating at $s=0$ for orders $0,\ldots,k-1$ gives a Vandermonde system with distinct nodes $v^\top\theta_j$, forcing every coefficient to vanish.  This contradicts the normalization.

\emph{Dimension count.}  Consider the incidence set
\[
 I=\{(\zeta,x_1,\ldots,x_n)\in \mathcal O\times(\RR^d)^n:
        g_\zeta(x_i)=0\text{ for all }i\}.
\]
At an incidence point, analyticity and nontriviality imply that for each $i$ there is a multi-index $\mu_i$ of minimal total order with $\partial_y^{\mu_i}g_\zeta(x_i)\ne0$.  Its order is at least one because $g_\zeta(x_i)=0$.  Choose a coordinate $\ell_i$ with $\mu_{i,\ell_i}>0$ and put $\mu_i'=\mu_i-e_{\ell_i}$.  By minimality,
\[
 h_i(\zeta,x_i):=\partial_y^{\mu_i'}g_\zeta(x_i)=0,
 \qquad
 \partial_{x_{i,\ell_i}}h_i(\zeta,x_i)\ne0.
\]
For each fixed choice of the countably many derivative indices, restrict to the open region on which the displayed derivatives are nonzero.  The Jacobian of $(h_1,\ldots,h_n)$ with respect to the selected data coordinates $(x_{1,\ell_1},\ldots,x_{n,\ell_n})$ is diagonal and nonsingular.  The implicit-function theorem therefore places the corresponding part of $I$ in a smooth submanifold of codimension $n$ in $\mathcal O\times\RR^{nd}$, of dimension
\[
 nd+p-n<nd.
\]
Each such submanifold has a countable atlas.  On compact subsets of a chart, projection to the $nd$ data coordinates is Lipschitz, and the image of a bounded set of dimension $D<nd$ under a Lipschitz map has zero $nd$-dimensional Lebesgue measure.  Thus the projection of $I$ onto the data coordinates is contained in a countable union of null sets.

Taking the finite union over coefficient normalizations and over admissible $k$ produces a Borel null set $N_{n,d}$.  Let $\mathcal X_{n,d}=(\RR^d)^n\setminus N_{n,d}$.  This proves the simultaneous column-independence assertion.

For the final statement, suppose every extreme optimizer has at most $m$ atoms and $2m(d+1)\le n$.  If two distinct extreme optimizers existed, their difference would be a nonzero signed measure on at most $2m$ distinct locations with zero Gaussian evaluation vector.  After deleting zero coefficients, this would contradict the independence just established.  Hence the compact convex optimizer fiber has at most one extreme point.  By the Krein--Milman theorem it is the closed convex hull of its nonempty set of extreme points, so it is a singleton.  Its unique member is extreme and has at most $m$ atoms.
\end{proof}

\section{Likelihood nets and Hellinger transfer}
\label{app:hellinger}

\subsection{A finite Gaussian-mixture likelihood net}

\begin{lemma}[Simultaneous Hellinger and log-likelihood net]
\label{lem:finite-net}
Suppose that \Cref{ass:standing} holds and fix $b>0$.  For all sufficiently large $n$, there are a deterministic finite set $\Ncal_n\subset\{f_G:G\in\Pcal(K)\}$ and a deterministic $T_n=O(\sqrt{\log n})$ such that
\begin{equation}\label{eq:net-size}
 \log\abs{\Ncal_n}
 \le C\frac{(\log n)^{d+1}}{(\log\log n)^d},
\end{equation}
\begin{equation}\label{eq:radius-prob}
 \sup_{G^*\in\Pcal(K)}
 \PP_{G^*}\left\{\max_{i\le n}\norm{X_i}>T_n\right\}
 \le n^{-b-2},
\end{equation}
and every $G\in\Pcal(K)$ admits $p\in\Ncal_n$ satisfying
\begin{equation}\label{eq:net-approx}
 H(p,f_G)\le n^{-3},
 \qquad
 \sup_{\norm{x}\le T_n}\abs{\log p(x)-\log f_G(x)}\le n^{-3}.
\end{equation}
Constants are uniform over $G$ and $G^*$.
\end{lemma}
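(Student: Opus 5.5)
I will construct the net in three steps: moment matching, then rounding to a grid, then a uniform relative-error estimate on the ball $\norm{x}\le T_n$. The radius is the easy part. Since $\norm{X_i}\le S+\norm{Z_i}$ and $\norm{Z_i}^2\sim\chi^2_d$, a Gaussian tail bound with a union over $i$ shows that $T_n=S+\sqrt{2(b+4)\log n}+C_d$ satisfies \eqref{eq:radius-prob} uniformly in $G^*$.

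\emph{Moment matching.} Fix $G\in\Pcal(K)$ and an integer $m\asymp C\log n/\log\log n$. By Carath\'eodory's theorem applied to the moment curve $\theta\mapsto(\theta^\beta)_{\abs{\beta}\le 2m}$ over the compact $K$, there is a discrete $G'\in\Pcal(K)$ with at most $M=\binom{2m+d}{d}$ atoms and the same moments up to total degree $2m$. The Gaussian identity gives
\[
 f_G(x)=\phi_d(x)\int e^{x^\top\theta-\norm{\theta}^2/2}\,dG(\theta).
\]
I will Taylor-expand $e^{x^\top\theta-\norm{\theta}^2/2}$ jointly to total degree $2m$ in $\theta$. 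The error at degree $j$ is bounded by $(ST+S^2)^{j}/j!$ with $T=\norm{x}$. Because $G$ and $G'$ share all moments of degree at most $2m$, the polynomial parts cancel, which yields
\[
 \abs{f_G(x)-f_{G'}(x)}\le \phi_d(x)e^{c\sqrt{\log n}}\frac{(C\sqrt{\log n})^{2m+1}}{(2m+1)!}
 \qquad\text{for }\norm{x}\le T_n .
\]
Since $f_G(x)\ge\phi_d(x)e^{-ST-S^2/2}$, the relative error on the ball is at most $n^{-10}$ once the constant in $m$ is large, by the same Stirling computation as in the proof of \Cref{cor:conditional}. Relative error $\delta$ turns into log error $2\delta$, so this handles the log bound in \eqref{eq:net-approx}.

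\emph{Rounding.} Next I replace each atom of $G'$ by the nearest point of a fixed finite $\delta$-grid of $K$, keeping the grid points in $K$ by choosing them from $K$ itself, and I round the masses to multiples of $\delta$ while renormalizing. Take $\delta=n^{-c}$. Moving an atom by $\delta$ changes $\phi_d(x-\theta)$ by a relative factor $e^{O(T_n\delta)}$, and perturbing the masses by $O(M\delta)$ in $\ell_1$ changes $f$ by relative error $O(M\delta)e^{O(ST_n)}$, since all atoms have comparable kernel values up to the factor $e^{2ST_n}$, which is $n^{o(1)}$. The resulting family $\Ncal_n$ has cardinality at most $(\delta^{-d}\cdot\delta^{-1})^{M}$, so
\[
 \log\abs{\Ncal_n}\le CM\log n\le C m^d\log n= C\frac{(\log n)^{d+1}}{(\log\log n)^d},
\]
which is \eqref{eq:net-size}.

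\emph{Hellinger bound.} This is the step I expect to be the main obstacle, because the log-ratio control above holds only on the ball, while the Hellinger distance integrates over all of $\RR^d$. I will split the integral at radius $R=T_n'$, a possibly larger $O(\sqrt{\log n})$ radius. Inside the ball, the relative error $\epsilon\le n^{-10}$ gives
\[
 \int(\sqrt p-\sqrt{f_G})^2\le\epsilon^2,
\]
and I will apply the pointwise approximation argument at radius $R$ rather than only at $T_n$. Outside the ball, I use $(\sqrt p-\sqrt f)^2\le p+f$, and every member of $\{f_G:G\in\Pcal(K)\}$ places mass at most $e^{-(R-S)^2/4}$ outside $B(0,R)$. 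Choosing $R=C\sqrt{\log n}$ with $C$ large makes both contributions at most $n^{-6}$, so $H\le n^{-3}$. Finally, if $T_n$ is taken to be this larger radius, then a single radius serves both conditions in \eqref{eq:net-approx}, and \eqref{eq:radius-prob} still holds. All constants depend only on $d$, $K$, $S$ and $b$.
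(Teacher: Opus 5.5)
Your proposal is correct, and it matches the paper in the radius, the Carath\'eodory moment matching with $\binom{2m+d}{d}$ atoms, the grid rounding of locations and masses, and the $O(k_m\log n)$ cardinality count. It differs in how you control the global Hellinger error, and it has one incorrect intermediate bound, which is easy to fix.

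\textbf{How the Hellinger step differs.} The paper controls the moment-matching error on all of $\RR^d$ through the identity $\int\phi_d(x-\theta)\phi_d(x-\vartheta)/\phi_d(x)\,dx=e^{\theta^\top\vartheta}$. Every term of degree at most $2m$ in the expansion of $e^{\theta^\top\vartheta}$ integrates to zero against $(G-G_m)\otimes(G-G_m)$. This gives $\int(f_G-f_{G_m})^2/\phi_d\le4\sum_{j>2m}S^{2j}/j!$, a $\chi^2$-type bound that involves no radius at all. The paper then adds a separate global $L^1$ estimate for the rounding and applies the Hellinger triangle inequality.

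You instead prove the pointwise bound $\abs{p/f_G-1}\le\eps$ on a larger ball $B(0,R)$ with $R=O(\sqrt{\log n})$. You combine $\int_{B(0,R)}(\sqrt p-\sqrt{f_G})^2\le\eps^2\int f_G\le\eps^2$ with the uniform Gaussian tail of the class $\{f_G:G\in\Pcal(K)\}$ outside the ball. The tail step is legitimate because your grid points lie in $K$, so $p$ belongs to the class as well.

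Your route lets a single local estimate serve both halves of \eqref{eq:net-approx}, and it needs no special kernel identity. You lose nothing because $R$ and $T_n$ are both $O(\sqrt{\log n})$. The paper's route gives a stronger global control of the moment-matching step, one that does not depend on how far out the pointwise expansion stays accurate.

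\textbf{The bound to fix.} You claim the degree-$j$ part of the joint expansion of $e^{x^\top\theta-\norm{\theta}^2/2}$ is bounded by $(ST+S^2)^j/j!$. This is false for $j$ large relative to $S^2$ when $\norm{x}$ is small. At $x=0$ the degree-$2k$ part is $(-\norm{\theta}^2/2)^k/k!$, whose size $(S^2/2)^k/k!$ is much larger than $S^{4k}/(2k)!$ for large $k$. This is the Hermite-type growth of derivatives of $e^{-bt^2}$.

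The clean fix is the paper's. Expand $e^s$ in $s=x^\top\theta-\norm{\theta}^2/2$ to order $m$; the truncation is a polynomial of degree at most $2m$ in $\theta$, so its integrals under $G$ and $G'$ coincide. The remainder satisfies $\abs{e^s-\sum_{j\le m}s^j/j!}\le e^{\abs s}\abs s^{m+1}/(m+1)!$ with $\abs s\le ST+S^2/2=O(\sqrt{\log n})$. With $m\asymp\log n/\log\log n$ this relative error is still polynomially small, and the rest of your argument, including its use at the larger radius $R$, goes through unchanged.
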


\begin{proof}
The construction has three stages: a radius, a moment-matching mixing distribution with few atoms, and a discretization of its locations and masses.

\emph{Radius.}  For $n\ge3$, take
\[
 T_n=S+\sqrt{2d\log(2d n^{b+3})}.
\]
Under any $G^*\in\Pcal(K)$, $X_i=\Theta_i+Z_i$ with $\norm{\Theta_i}\le S$.  The standard Gaussian coordinate bound and a union bound over $nd$ coordinates prove \eqref{eq:radius-prob}.  If $S=0$, the class contains only $\phi_d$, so assume $S>0$.

\emph{Moment matching.}  Let
\[
 m=\left\lceil C_\star\frac{\log n}{\log\log n}\right\rceil,
 \qquad k_m=\binom{2m+d}{d},
\]
with a fixed constant $C_\star>16$.  Consider the vector of all nonconstant monomials in $\theta$ of total degree at most $2m$.  Its image of $K$ is compact, and the vector of its integrals under any $G\in\Pcal(K)$ lies in the corresponding convex hull.  Carath\'eodory's theorem gives a mixing distribution $G_m$ with at most $k_m$ atoms that matches every moment of $G$ of total degree at most $2m$.

\emph{Local relative approximation.}  For $\norm{x}\le T_n$, set $\Lambda_n=ST_n+S^2/2$.  The exponent $x^\top\theta-\norm{\theta}^2/2$ has absolute value at most $\Lambda_n$ on $K$, and $f_G(x)/\phi_d(x)\ge e^{-\Lambda_n}$.  Its Taylor polynomial through order $m$ has total degree at most $2m$ in $\theta$, so its integrals under $G$ and $G_m$ coincide.  Bounding both remainders yields
\begin{equation}\label{eq:relative-approx}
 \sup_{\norm{x}\le T_n}
 \left|\frac{f_{G_m}(x)}{f_G(x)}-1\right|
 \le2e^{2\Lambda_n}\frac{\Lambda_n^{m+1}}{(m+1)!}\le n^{-8}
\end{equation}
for all large $n$.  The last inequality follows because $\Lambda_n=O(\sqrt{\log n})$ and $(m+1)!\ge\{(m+1)/e\}^{m+1}$.

\emph{Global approximation.}  Moment matching also controls the global error.  Let $\Delta=G-G_m$.  Completing the square shows
\[
 \int_{\RR^d}
 \frac{\phi_d(x-\theta)\phi_d(x-\vartheta)}{\phi_d(x)}\,dx
 =e^{\theta^\top\vartheta}.
\]
The exponential series is uniformly convergent on $K\times K$.  Every term of degree at most $2m$ integrates to zero against $\Delta\otimes\Delta$, and the total variation measure $\abs{\Delta}$ of $\Delta$ satisfies $\abs{\Delta}(K)\le2$.  Therefore
\begin{align}
 Q_m:=\int\frac{(f_G-f_{G_m})^2}{\phi_d}
 &=\iint e^{\theta^\top\vartheta}\,d\Delta(\theta)\,d\Delta(\vartheta)\notag\\
 &\le4\sum_{j>2m}\frac{S^{2j}}{j!}
 \le4e^{S^2}\frac{S^{4m+2}}{(2m+1)!}\le n^{-32}.
 \label{eq:Qm}
\end{align}
By Cauchy--Schwarz, $H^2(f_G,f_{G_m})\le\norm{f_G-f_{G_m}}_1\le Q_m^{1/2}$, so $H(f_G,f_{G_m})\le n^{-8}$.

\emph{Discretization.}  It remains to discretize the locations and masses.  Put $\eps=n^{-16}$.  Partition a fixed bounding cube into cells of diameter at most $\eps$ and choose one point of $K$ from every cell that meets $K$.  The resulting location net has size $J_n\le C_Kn^{16d}$.  Write $G_m=\sum_{j=1}^{k_m}\lambda_j\delta_{\theta_j}$, padding with zero masses if needed, and round every location to the net.  Let $M=\lceil k_mn^{16}\rceil$.  For $j<k_m$, put $\lambda_j'=\lfloor M\lambda_j\rfloor/M$, and let $\lambda_{k_m}'=1-\sum_{j<k_m}\lambda_j'$.  Then $\sum_j\abs{\lambda_j'-\lambda_j}\le2\eps$.  Let $G'$ be the resulting discrete mixing distribution and $p=f_{G'}$.

Moving a location by at most $\eps$ changes its Gaussian log kernel on $\{\norm{x}\le T_n\}$ by at most $(T_n+S)\eps$.  Weight rounding changes the mixture by relative amount at most $2\eps e^{2\Lambda_n}$.  Combining these estimates with \eqref{eq:relative-approx}, and using $\abs{\log(1+u)}\le2\abs{u}$ for $\abs{u}\le1/2$, gives
\[
 \sup_{\norm{x}\le T_n}\abs{\log p(x)-\log f_G(x)}\le n^{-3}
\]
for all large $n$.  Globally, the fundamental theorem of calculus and $\int\norm{\nabla\phi_d(x)}\,dx<\infty$ give an $L^1$ error $O_d(\eps)$ from location rounding; weight rounding adds at most $2\eps$.  The Hellinger triangle inequality therefore gives $H(p,f_G)\le n^{-3}$.

\emph{Cardinality.}  Let $\Ncal_n$ contain the densities generated by every ordered list of $k_m$ location-net points and every simplex-grid mass vector with denominator $M$.  Then each $G$ has a point satisfying \eqref{eq:net-approx}, and
\[
 \log\abs{\Ncal_n}
 \le k_m\{\log J_n+\log(M+1)\}
 =O(k_m\log n)
 =O\left\{\frac{(\log n)^{d+1}}{(\log\log n)^d}\right\}.
\]
This proves \eqref{eq:net-size}.
\end{proof}

\subsection{Proof of the uniform Hellinger bound}

\begin{proof}[Proof of \Cref{thm:near-mle}]
Fix $G^*\in\Pcal(K)$ and write $p_*=f_{G^*}$.  Let $\Ncal_n,T_n$ be as in \Cref{lem:finite-net}, and let $E_n=\{\max_i\norm{X_i}\le T_n\}$.  On $E_n$, for any $G$ satisfying the hypothesis of \eqref{eq:near-truth}, choose $p\in\Ncal_n$ as in \eqref{eq:net-approx}.  Then
\begin{equation}\label{eq:net-lr-lower}
 \sum_{i=1}^n\log\frac{p(X_i)}{p_*(X_i)}
 \ge-1-n^{-2}\ge-2.
\end{equation}

For deterministic $p\in\Ncal_n$, independence gives
\begin{align*}
 \EE_{p_*}\exp\left\{\frac12\sum_{i=1}^n
       \log\frac{p(X_i)}{p_*(X_i)}\right\}
 &=\left(\int\sqrt{pp_*}\right)^n\\
 &=\left\{1-\frac12H^2(p,p_*)\right\}^n
 \le e^{-nH^2(p,p_*)/2}.
\end{align*}
Markov's inequality therefore yields
\begin{equation}\label{eq:fixed-net-test}
 \PP_{p_*}\left\{\sum_i\log\frac{p(X_i)}{p_*(X_i)}\ge-2\right\}
 \le\exp\left\{1-\frac n2H^2(p,p_*)\right\}.
\end{equation}

Set
\[
 A_n=\log\abs{\Ncal_n}+(b+2)\log n+2,
 \qquad t_n=\sqrt{2A_n/n}.
\]
Summing \eqref{eq:fixed-net-test} over all net points with $H(p,p_*)\ge t_n$ shows that, outside an event of probability at most $e^{-1}n^{-b-2}$, every such point has log-likelihood ratio below $-2$.

Intersect this event with $E_n$.  The net point associated with any such $G$ satisfies \eqref{eq:net-lr-lower}, so it must obey $H(p,p_*)<t_n$.  Hence, simultaneously for all $G$ satisfying the hypothesis of \eqref{eq:near-truth},
\[
 H^2(f_G,p_*)\le(t_n+n^{-3})^2
 \le\frac{4A_n}{n}+2n^{-6}.
\]
The entropy bound \eqref{eq:net-size} gives the claimed rate.  The radius and likelihood-test exceptional probabilities are uniform over $G^*\in\Pcal(K)$, completing the proof.
\end{proof}

\section{Numerical implementation and replication details}
\label{app:numerics}

This appendix describes the designs, the computation, the performance measures and the accuracy checks behind \Cref{sec:numerics}, and reports the full results of the regularization path and the Monte Carlo study.

\begin{table}[tb]
\centering
\caption{Monte Carlo comparison with the ordinary NPMLE at the largest sample size in each design.  ``Balanced'' uses \eqref{eq:scales}, ``Tiny'' uses $\alpha_n=n^3$, and ``BB-scale'' uses $\alpha=1$.  Entries are Monte Carlo means with standard errors in parentheses.  ``Support'' is the resolved support size, and $\Delta L_n$ is the ordinary log-likelihood gap \eqref{eq:gap-discrepancy} of the fit.}
\label{tab:numerical-summary}
\footnotesize
\setlength{\tabcolsep}{3.2pt}
\begin{tabular}{@{}llrrrr@{}}
\toprule
Scenario ($n$) & Method & Support & $H^2(\widehat f,f^*)$ & $H^2(\widehat f_W,\widehat f_0)$ & $\Delta L_n$ \\
\midrule
Three-point, $d=1$ (2000) & Ordinary & $3.98\,(0.07)$ & $6.90\,(0.34)\times 10^{-4}$ & $0$ & $0$ \\
 & Balanced & $4.00\,(0.07)$ & $6.91\,(0.35)\times 10^{-4}$ & $8.31\,(0.50)\times 10^{-6}$ & $0.034\,(0.002)$ \\
 & Tiny & $3.98\,(0.07)$ & $6.90\,(0.34)\times 10^{-4}$ & $8.03\,(0.48)\times 10^{-14}$ & $3.27\,(0.20)\times 10^{-10}$ \\
 & BB-scale & $3.93\,(0.07)$ & $1.51\,(0.09)\times 10^{-3}$ & $6.89\,(0.35)\times 10^{-4}$ & $2.880\,(0.145)$ \\
\addlinespace
Uniform mixing, $d=1$ (2000) & Ordinary & $4.53\,(0.06)$ & $7.00\,(0.43)\times 10^{-4}$ & $0$ & $0$ \\
 & Balanced & $4.48\,(0.06)$ & $7.14\,(0.43)\times 10^{-4}$ & $8.95\,(0.48)\times 10^{-6}$ & $0.037\,(0.002)$ \\
 & Tiny & $4.53\,(0.06)$ & $7.00\,(0.43)\times 10^{-4}$ & $1.06\,(0.07)\times 10^{-13}$ & $4.20\,(0.25)\times 10^{-10}$ \\
 & BB-scale & $4.31\,(0.06)$ & $1.37\,(0.08)\times 10^{-3}$ & $6.86\,(0.38)\times 10^{-4}$ & $2.849\,(0.152)$ \\
\addlinespace
Four-point square, $d=2$ (1000) & Ordinary & $9.85\,(0.26)$ & $4.20\,(0.23)\times 10^{-3}$ & $0$ & $0$ \\
 & Balanced & $9.72\,(0.25)$ & $4.26\,(0.23)\times 10^{-3}$ & $2.82\,(0.15)\times 10^{-5}$ & $0.062\,(0.003)$ \\
 & Tiny & $9.85\,(0.26)$ & $4.20\,(0.23)\times 10^{-3}$ & $7.01\,(2.42)\times 10^{-12}$ & $8.39\,(0.51)\times 10^{-9}$ \\
 & BB-scale & $9.20\,(0.24)$ & $8.36\,(0.38)\times 10^{-3}$ & $3.53\,(0.18)\times 10^{-3}$ & $7.863\,(0.371)$ \\
\addlinespace
\bottomrule
\end{tabular}

\end{table}

\subsection{Designs and computation}\label{app:numerics-design}\label{app:numerics-computation}

The first design is the one-dimensional three-point mixture
\[
 G^*=0.25\delta_{-2.5}+0.50\delta_0+0.25\delta_{2.5},
 \qquad K=[-4,4].
\]
The second uses the uniform mixing distribution $G^*=\operatorname{Unif}[-2.5,2.5]$ on the same parameter set, which checks that the conclusions do not depend on finite true support.  The third is two-dimensional, with equal mass at the four corners $(\pm1.75,\pm1.75)$ and $K=[-3.5,3.5]^2$.  The Monte Carlo study uses $n\in\{200,500,1000,2000\}$ with 100 replications for each one-dimensional design, and $n\in\{250,500,1000\}$ with 40 replications for the two-dimensional design.  On each dataset, the ordinary NPMLE and the three reweighted NPMLEs are computed from the same observations, with independent weight streams for the three concentrations, so that all comparisons are paired.

The regularization path uses a single sample of size $n=500$ from the three-point design and 16 concentrations: 14 logarithmically spaced values from 1 to $n^3$, the balanced concentration $\alpha_n=\numPathBalancedAlpha$, and ten times the balanced concentration.  Thirty independent weight vectors are drawn at each concentration, giving 480 weighted fits.  In every experiment, raw weights are drawn from $\mathrm{Gamma}(\alpha,\alpha)$ and normalized to $\widetilde W_i=nW_i/\sum_jW_j$, which have sample mean one and leave the maximizers unchanged.

All estimators are computed by fully corrective column generation over the parameter set $K$.  Given weights $w$ and a current mixing distribution $G$, the optimality residual is
\begin{equation}\label{eq:numerical-kkt}
 \mathcal R_w(G)=
 \sup_{\theta\in K}\left[
 \frac{1}{\sum_iw_i}\sum_{i=1}^n
 w_i\frac{\phi_d(X_i-\theta)}{f_G(X_i)}-1
 \right],
\end{equation}
which is nonnegative and equals zero exactly when $G$ maximizes the weighted likelihood.  At each iteration, an approximate maximizer of the bracket in \eqref{eq:numerical-kkt} is added to the support and all masses are reoptimized.  The supremum is approximated on a mesh of 1601 points refined at its local maxima in one dimension, and on a $21\times21$ mesh with 12 continuous local searches in two dimensions.  After column generation, atom locations and masses are polished jointly and the residual is reevaluated on a denser search that includes all active support locations.  A fit is accepted only if its residual is at most $5\times10^{-7}$ in one dimension or $2\times10^{-6}$ in two dimensions; otherwise the search is repeated on denser meshes.  During optimization, atoms of negligible mass are removed, and atoms whose locations nearly coincide are merged, with tolerances no larger than the resolution used to count support.  These searches approximate the supremum in \eqref{eq:numerical-kkt} but do not certify a global upper bound.

Every weighted fit is initialized at the computed ordinary NPMLE for the same dataset.  When $\alpha$ is very large, as for the Tiny fits, the ordinary NPMLE already nearly satisfies the weighted optimality condition, so differences between the two fits that are smaller than the optimization accuracy are not resolved.  Data and weight seeds are deterministic functions of the design, sample size, replication and concentration, so every reported number can be regenerated.  The computations used Python 3.13.9 with NumPy 2.3.5 and SciPy 1.16.3.

\subsection{Performance measures and accuracy}\label{app:numerics-measures}\label{app:numerics-accuracy}

Squared Hellinger distances, with the normalization of \Cref{sec:model}, are computed by the trapezoidal rule on $[-10,10]$ with 6001 points in one dimension and on $[-12,12]^2$ with a $211\times211$ grid in two dimensions; the true density of the uniform design is evaluated in closed form.  The resolved support size joins positive-mass atoms whose locations are closer than $0.01$ in one dimension or $0.05$ in two dimensions and counts the resulting connected clusters; it is a numerical resolution convention rather than an exact count of atoms in floating-point arithmetic.  The Wasserstein distance in \Cref{sec:numerics} is the Wasserstein-1 distance between the fitted mixing distributions, and path slopes are least-squares slopes of the logarithm of the medians on $\log\alpha$ over the concentrations $\alpha\ge10$.  The draw displayed in \Cref{fig:representative} for each concentration is the one whose squared Hellinger distance to the ordinary NPMLE is closest to the median at that concentration.  Means, standard errors, percentages, quantiles and regression coefficients are reported at the displayed precision, and ranges across design--sample-size cells are the rounded minimum and maximum of the cell summaries.

Across the \numReportedFits{} recorded fits (480 weighted path fits and 3680 Monte Carlo fits, the latter including the ordinary fits), the largest recorded residual is at most $\numMaxResidual$, a conservatively upward-rounded bound; the one-dimensional maximum rounds to $\numMaxResidualOneD$.  For one dataset of size 500 from each of the three-point and two-dimensional designs, with weights at the balanced concentration, ten dispersed random initializations gave the same resolved support size and largest squared Hellinger distances from the reference fit of $\numMultistartDistanceOneD$ and $\numMultistartDistanceTwoD$, respectively.  An independent check recomputes the saved likelihoods, compares the recorded residuals with denser searches, and compares the Hellinger integrals with wider and finer grids.  The signed ordinary log-likelihood gap is retained; the figures and \Cref{tab:numerical-summary} display its nonnegative part.  There are \numNegativeRawLossCount{} negative differences across the recorded fits, with minimum $\numMinRawLoss$; they reflect optimization error, so gaps of this order or smaller, including all gaps of the Tiny fits, are numerically unresolved.  These checks support the numerical comparisons but do not certify a continuous global optimum or establish uniqueness.

\subsection{Additional results}\label{app:numerics-results}\label{app:numerics-montecarlo}

On the regularization path, at the balanced concentration, \numPathBalancedSupportSummary{}; at $\alpha=n^3$, \numPathTinySupportSummary{}; and at $\alpha=1$, the 90th percentile of the resolved support size is $\numPathBBUpperSupport$.  At $\alpha=n^3$, the median squared Hellinger distance to the ordinary NPMLE is $\numPathTinyDistance$ and the median ordinary log-likelihood gap is $\numPathTinyLoss$.

\begin{figure}[htb]
\centering
\begin{minipage}[t]{0.28\textwidth}
  \centering\includegraphics[width=\linewidth]{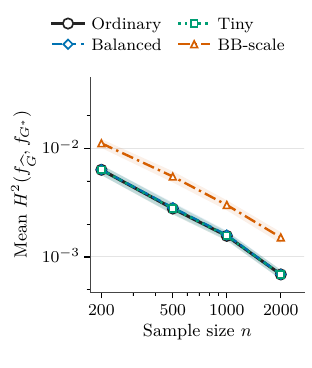}
  \par\vspace{-0.5em}\centerline{\small (a) Three-point, $d=1$}
\end{minipage}\hspace{0.03\textwidth}
\begin{minipage}[t]{0.28\textwidth}
  \centering\includegraphics[width=\linewidth]{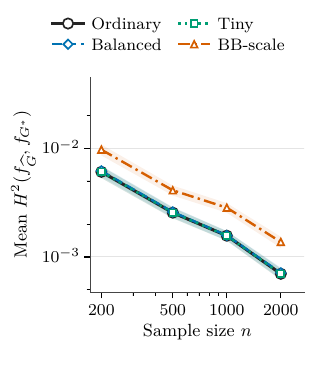}
  \par\vspace{-0.5em}\centerline{\small (b) Uniform mixing, $d=1$}
\end{minipage}\hspace{0.03\textwidth}
\begin{minipage}[t]{0.28\textwidth}
  \centering\includegraphics[width=\linewidth]{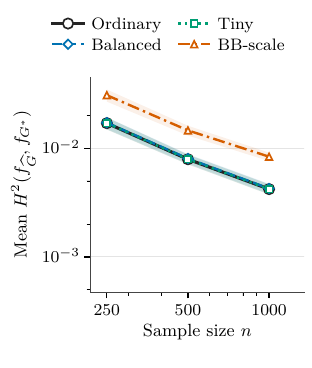}
  \par\vspace{-0.5em}\centerline{\small (c) Four-point square, $d=2$}
\end{minipage}
\caption{Estimated Hellinger risk, the Monte Carlo mean of the squared Hellinger loss, as a function of sample size.  Shaded bands show pointwise Monte Carlo means plus or minus 1.96 standard errors.}
\label{fig:risk}
\end{figure}

In the Monte Carlo study, \Cref{fig:risk} shows the estimated Hellinger risk across sample sizes, and \Cref{tab:numerical-summary} reports Monte Carlo means and standard errors at the largest sample size in each design.

The Balanced fits track the ordinary NPMLE closely in every design, consistent with \Cref{thm:main}(iii) and (iv).  Across the eleven design--sample-size cells, the ratio of their estimated Hellinger risk to that of the ordinary NPMLE ranges from $\numBalancedRiskMin$ to $\numBalancedRiskMax$, with average $\numBalancedRiskAverage$, and the mean squared Hellinger distance between the two fitted densities ranges from $\numBalancedDistanceMin$ to $\numBalancedDistanceMax$.  The resolved support sizes of the two estimators agree in $\numBalancedSupportAgreement\%$ of the \numPairedDatasets{} paired replications, and the mean difference in resolved support size lies between $\numBalancedSupportDifferenceMin$ and $\numBalancedSupportDifferenceMax$ in every cell, so reweighting does not inflate the support in these designs; its contribution is a unique sparse selection with a guarantee, not fewer atoms than an ordinary NPMLE that is already sparse.

The ordinary log-likelihood gap of the Balanced fits follows the scale of \Cref{thm:main}(iii): its cell means range from $\numBalancedLossMin$ to $\numBalancedLossMax$, and after multiplication by $\log n$ from $\numBalancedScaledLossMin$ to $\numBalancedScaledLossMax$; after the same scaling, those of the log-fit discrepancy range from $\numBalancedScaledLogfitMin$ to $\numBalancedScaledLogfitMax$, stably across sample sizes and dimensions.

The Tiny fits are numerically indistinguishable from the ordinary NPMLE: their ratio of estimated Hellinger risks lies between $\numTinyRiskMin$ and $\numTinyRiskMax$, their mean squared Hellinger distance to the ordinary NPMLE ranges from $\numTinyDistanceMin$ to $\numTinyDistanceMax$, and their resolved support size agrees with that of the ordinary NPMLE in $\numTinySupportAgreement\%$ of the replications.  Because these differences lie below the optimization accuracy and every weighted fit is initialized at the computed ordinary NPMLE (\Cref{app:numerics-computation}), this experiment cannot separate the exact reweighted maximizer from the ordinary one; the sparsity guarantee of \Cref{cor:tiny} concerns exact maximizers.

The BB-scale fits behave differently: their ratio of estimated Hellinger risks ranges from $\numBBRiskMin$ to $\numBBRiskMax$, and their mean ordinary log-likelihood gap from $\numBBLossMin$ to $\numBBLossMax$; they sometimes have a slightly smaller resolved support size, but at a material statistical cost.  This is the behavior expected outside the theory, since with $\alpha=1$ the weights do not concentrate and the reweighted objective does not approach the ordinary one.

\end{document}